\pdfoutput=1
\documentclass{article} %
\usepackage{iclr2027_conference,times}

\usepackage{amsmath,amssymb,amsthm,mathtools}
\usepackage{booktabs}
\usepackage{graphicx}
\usepackage{algorithm}
\usepackage{algpseudocode}
\usepackage{enumitem}
\usepackage{xcolor}
\usepackage{hyperref}
\usepackage{url}

\theoremstyle{plain}
\newtheorem{theorem}{Theorem}[section]
\newtheorem{lemma}[theorem]{Lemma}
\newtheorem{proposition}[theorem]{Proposition}
\newtheorem{corollary}[theorem]{Corollary}
\theoremstyle{definition}

\newtheorem{assumption}[theorem]{Assumption}
\newtheorem{example}[theorem]{Example}
\theoremstyle{remark}
\newtheorem{remark}[theorem]{Remark}

\newcommand{\E}{\mathbb{E}}
\renewcommand{\Pr}{\mathbb{P}}
\newcommand{\ind}[1]{\mathbf{1}\{#1\}}
\newcommand{\F}{\mathcal{F}}          %
\newcommand{\N}{\mathbb{N}}
\newcommand{\R}{\mathbb{R}}

\newcommand{\M}{M}                    %
\newcommand{\Mset}{[\M]}
\newcommand{\thetastar}{\theta^\star}
\newcommand{\rank}{R}                 %
\newcommand{\pairs}{\mathcal{P}}      %

\newcommand{\Hyp}{H}                  %
\newcommand{\W}{\mathcal{W}}          %
\newcommand{\wo}{W}                   %
\newcommand{\wostar}{W^\star}         %
\newcommand{\T}{T}                    %
\newcommand{\wle}{\preceq}            %
\newcommand{\wlt}{\prec}
\newcommand{\weq}{\sim}

\newcommand{\Ep}{E}                   %
\newcommand{\CS}{\mathcal{C}}         %
\newcommand{\D}{\mathcal{D}}          %
\newcommand{\Dbar}{\overline{\mathcal{D}}} %
\newcommand{\good}{\mathcal{G}}       %
\newcommand{\Rset}{\mathcal{R}}       %
\newcommand{\tier}{\operatorname{tier}}
\newcommand{\hgt}{h}

\allowdisplaybreaks

\title{Rank Confidence Sequences: \\ Anytime-Valid Leaderboards}

\author{Hamed Khosravi and Xiaoming Huo}
\makeatletter
\def\@maketitle{\vbox{\hsize\textwidth\centering
{\LARGE\sc \@title\par}
\vskip 0.3in
\begin{tabular}[t]{c}\textbf{Hamed Khosravi}\\ \texttt{hkhosravi7@gatech.edu}\end{tabular}\hspace{4em}%
\begin{tabular}[t]{c}\textbf{Xiaoming Huo}\\ \texttt{huo@gatech.edu}\end{tabular}\par
\vskip 0.12in
H.~Milton Stewart School of Industrial and Systems Engineering\\
Georgia Institute of Technology, Atlanta, GA 30332, USA\par
\vskip 0.3in minus 0.1in}}
\makeatother
\hypersetup{pdftitle={Rank Confidence Sequences: Anytime-Valid Leaderboards},pdfauthor={Hamed Khosravi, Xiaoming Huo}}

\iclrfinalcopy %
\begin{document}

\maketitle
\lhead{}\renewcommand{\headrulewidth}{0pt} %
\addtocontents{toc}{\protect\setcounter{tocdepth}{-1}}

\begin{abstract}
Leaderboards rank models by their average scores on benchmark items, and they are consulted
repeatedly while the evaluation is still running. Existing confidence intervals for a model's rank
control their error rate only if they are computed once, after a number of items chosen in
advance. If they are recomputed as results arrive, and the evaluation stops once they look
decisive, their error rate exceeds its nominal level. Anytime-valid methods keep their guarantees
at all sample sizes simultaneously and hence under any stopping rule. They exist for the accuracy of
one model, for one pair of models and for the set of models that may be best. For pairwise battles
they also give ranks. None gives ranks
when all models are scored on the same items, which makes their scores dependent. We construct
rank confidence sequences: for every model, a set of ranks that contains its true rank,
simultaneously for all models and at all times, at a chosen error level $\alpha$, in finite
samples. The construction combines betting e-processes, one for each ordered pair of models, with
closed testing over the possible orderings of the models. It allows any dependence between
the models' scores on an item. The method has two advantages. A leaderboard can be inspected after every item
without inflating its error rate. The evaluation of each model can stop as soon as the question
asked about it is answered, which saves compute. When results are examined only once, halfway through or later, little power
is lost relative to fixed-sample methods. The paper quantifies these advantages in simulations and
on public leaderboard data.

\end{abstract}

\section{Introduction}
\label{sec:intro}

\paragraph{The problem.}
Confidence intervals for leaderboard ranks are designed to be computed once, but leaderboards are
consulted many times while the evaluation is still running. A benchmark leaderboard scores each
of $\M$ models on a common set of test items, such as exam questions, and ranks the models by
their average scores. The averages come from finitely many items, so the ranking is uncertain.
Several methods therefore attach a confidence interval to each model's rank
\citep{mogstad2024,almohamad2022,neuhof2024confident}. At confidence level $95\%$, these intervals contain the true
ranks of all models simultaneously with probability at least $95\%$, some only asymptotically. The guarantee assumes a
specific protocol. The number of items is chosen before any results are seen. The intervals are
then computed once, after all of those items have been scored. We call such intervals
\emph{fixed-sample}.

\paragraph{Repeated looks.}
Inspecting a leaderboard repeatedly as results arrive breaks the fixed-sample guarantee. In
practice, models are scored item by item or batch by batch, and the standings are examined along
the way. We call each examination a \emph{look}. Because evaluation is expensive \citep{evaleval2026}, it is cut short when the ranking seems
clear and extended only when two models seem close. Each look, however, is another chance
for a fixed-sample interval to be wrong. Stopping when the intervals seem decisive also tends
to stop exactly when one of them is wrong. The probability that at least one look produces a false
statement therefore exceeds the nominal level. This effect of repeated significance tests has long
been known, and \citet{armitage1969} computed its size. In our simulation with six equally accurate
models examined $200$ times, fixed-sample rank sets at nominal level $5\%$ make at least one
false statement in $30\%$ to $51\%$ of runs (Section~\ref{sec:experiments}, experiment E1). The
fixed-sample guarantee is not wrong: it covers one look, not the way leaderboards are used.

\paragraph{What is needed.}
A leaderboard that is monitored needs rank intervals that are \emph{anytime-valid}. Fix a level
$\alpha\in(0,1)$ in advance. A procedure is anytime-valid if the probability that it is wrong at
any look, over the whole evaluation, is at most $\alpha$. Equivalently, its error guarantee holds
at all sample sizes simultaneously. It then holds at whatever sample size the evaluation stops,
even when the decision to stop depends on the data seen so far. For a single mean, intervals with
this property are called \emph{confidence sequences} \citep{howard2021,waudbysmith2024}. Two
further features of leaderboards matter. First, the statement must cover the ranks of all models
at once, so the method must correct for making many comparisons. Second, all models are scored on
the same items, so their scores are dependent. An item that is easy for one model tends to be easy
for the others, and the strength of this dependence is unknown.

\paragraph{What exists.}
To our knowledge, no existing method gives anytime-valid confidence sets for the ranks of all
models when the models are scored on the same items. Existing rank intervals are
fixed-sample. They certify pairwise comparisons of the form ``model $j$ is better than model $l$''
with a multiple-testing procedure and convert them into rank intervals
\citep{mogstad2024,almohamad2022,neuhof2026mmlu,neuhof2026rank}. The anytime-valid methods that
exist answer other questions. Some track the accuracy of one model
\citep{zhou2026celeus,hsushekhar2026}, and one compares a single pair of models
\citep{kotawala2026resolution}. \citet{arnold2026smcs} certify which models may be the best, a
statement about rank one only. \citet{gu2026serpant} give anytime-valid rank sets for a different
kind of data, pairwise battles. In a battle, two models answer the same prompt and a judge picks
the better answer, with the pair chosen adaptively. \citet{gu2026serpant} correct for multiplicity by Bonferroni,
testing each of the $\M(\M-1)$ ordered pairs at level $\alpha/\{\M(\M-1)\}$. The logical
relations among the pairs are used only afterwards, to extend certified comparisons by
transitivity. Group-sequential designs with Pocock boundaries \citep{pocock1977,arviv2026stop} stop pairwise comparisons early, but
only at looks scheduled in advance and without a correction across pairs.
A side-by-side comparison of these methods is in Appendix~\ref{app:related}.

\paragraph{What we do.}
We construct \emph{rank confidence sequences}. For every model, a rank confidence sequence gives a
set of ranks that contains the model's true rank. This holds for all models at once and at every
look, with probability at least $1-\alpha$. A model's \emph{ability} is its mean score. Its
\emph{true rank} is one plus the number of models with strictly higher ability
(Section~\ref{sec:setup}). The ability can be the model's average over a fixed benchmark. It can
also be the model's expected score on items drawn from a population. The
guarantee holds in finite samples, with no asymptotic approximation. It requires per-item scores
in $[0,1]$. On a fixed benchmark it also requires that the items be evaluated in a random order,
which the evaluator controls. It makes no assumption about how the scores of different models are
correlated. The method keeps a list of candidate orderings of the models, rankings in which ties are allowed,
and reports only what all surviving orderings agree on. For each pair of models, it measures the
evidence that one scores higher than the other as the wealth of an imaginary gambler betting on
it. It discards an ordering once the average of these wealths over the comparisons the ordering denies has grown large. Section~\ref{sec:method}
gives the details.

\paragraph{Contributions.}
\begin{enumerate}[leftmargin=*,itemsep=1pt]
\item \textbf{A leaderboard can be read at any time without inflating its error.} With
  probability at least $1-\alpha$, all rank statements are correct simultaneously, at every look
  and under any stopping rule (Theorem~\ref{thm:validity}; Section~\ref{sec:experiments}, E1 and E2).
\item \textbf{A leaderboard can state how much of its ranking the data support.} Besides
  certified comparisons and rank sets, the report gives certified top-$k$ sets and tiers. A
  top-$k$ statement certifies that a model is among the $k$ best, or that it is not. Tiers
  $s=1,2,\dots$ group the models by certified comparisons. A model in tier $s$ has at least $s-1$
  models certified above it, so its true rank is $s$ or worse
  (Theorems~\ref{thm:ranks} and~\ref{thm:tiersmain}; Section~\ref{sec:experiments}, E2).
\item \textbf{Evaluation compute is saved.} The evaluation of a model can stop as soon as the
  question asked about it is answered, for example whether it is in the top three. The guarantee
  survives even though the decision to stop depends on the observed scores
  (Proposition~\ref{prop:retire}; Section~\ref{sec:experiments}, E3).
\item \textbf{The price of these guarantees is small.} From the same pairwise evidence, the method
  certifies every comparison that a Bonferroni correction over the pairs certifies, and its exact
  version can certify more (Section~\ref{sec:closed}). At a pre-planned look at halfway, it certifies
  about as many comparisons as the strongest fixed-sample procedure (Section~\ref{sec:experiments}, E4).
\end{enumerate}

\section{Problem setting}
\label{sec:setup}

\paragraph{Models, abilities, ranks.}
Each of $\M\ge2$ models has one scalar ability, its mean benchmark score, and its rank is
one plus the number of models with strictly larger ability. Write $\Mset=\{1,\dots,\M\}$ for the
set of models and $\theta=(\theta_1,\dots,\theta_\M)\in\R^\M$ for the vector of abilities, where
$\R$ is the set of real numbers. Write $\#A$, or $|A|$, for the number of elements of a finite
set $A$. The \emph{true rank} of model $j$ is
\begin{equation}\label{eq:rank}
  \rank_j(\theta)=1+\#\{l\in\Mset:\theta_l>\theta_j\},
\end{equation}
abbreviated $\rank_j$ when $\theta$ is clear from context. Rank $1$ is best, and tied models
share a rank. Two models tied for first both have rank $1$. Let $\pairs=\{(j,l)\in\Mset^2:j\ne l\}$ be the ordered pairs.
The \emph{null hypothesis} for $(j,l)\in\pairs$ is $\Hyp_{jl}:\theta_j\le\theta_l$, that $j$ is not
better than $l$. Rejecting $\Hyp_{jl}$ certifies $\theta_j>\theta_l$, written $j\succ l$.

\paragraph{Data.}
All models are scored on the same items, and we assume nothing about how their scores are
correlated. Item $i$ yields the score vector $x_i=(x_{i1},\dots,x_{i\M})\in[0,1]^\M$, one score
per model. The most common case is binary correctness, where $x_{ij}=1$ if model $j$ answers item $i$
correctly and $x_{ij}=0$ otherwise. For binary scores, $\theta_j$ is the model's accuracy. More
generally, scores may take any value in $[0,1]$.

\paragraph{Two sampling models.}
Our main setting is a fixed benchmark evaluated in random order, and the guarantee holds equally
when the items are a sample from a population. Time $t=0,1,2,\dots$ counts the items revealed so
far, and $X_t=(X_{t1},\dots,X_{t\M})$ is the score vector of the $t$th item revealed.
\begin{description}[leftmargin=1.5em,itemsep=1pt]
\item[(F) Finite benchmark.] The benchmark consists of $N$ items and is a fixed matrix
  $x\in[0,1]^{N\times\M}$, where $x_{ij}$ is the score of model $j$ on item $i$. The ability
  $\theta_j=N^{-1}\sum_{i=1}^N x_{ij}$ is the mean score of model $j$ on the whole benchmark.
  One permutation $\Pi$ of $[N]=\{1,\dots,N\}$ is drawn uniformly at random and shared by all
  models. At time $t\le N$, item $\Pi(t)$ is revealed. Its score vector is $X_t=x_{\Pi(t)}$, the
  row $\Pi(t)$ of $x$. The only randomness in the data is the order of evaluation.
\item[(S) Superpopulation.] $X_1,X_2,\dots$ are independent and identically distributed (i.i.d.) draws from a distribution on $[0,1]^\M$,
  and $\theta_j=\E[X_{1j}]$, where $\E$ denotes expectation. The joint law of $X_1$ across models
  is unrestricted.
\end{description}
Under both sampling models, $\F_t$ denotes
the information available after $t$ items. Formally, write $\sigma(\cdot)$ for the
$\sigma$-algebra generated by the listed random variables. Then
$\F_t=\sigma(\Pi(1),\dots,\Pi(t),\xi)$ under (F) and $\F_t=\sigma(X_1,\dots,X_t,\xi)$ under (S). Here
$\xi$ is an auxiliary random variable that lets the analyst randomize decisions. It is independent of $\Pi$ under (F) and of $(X_1,X_2,\dots)$
under (S). The increasing family $(\F_t)_{t\ge0}$ is called the filtration. A quantity is
\emph{adapted} if its value at time $t$ is determined by $\F_t$. Equivalently, it can be computed
from what has been observed by time $t$.

\paragraph{Goal.}
We want rank sets that cover the true ranks of all models at once, at every time, not at one
pre-chosen sample size. Fix a level $\alpha\in(0,1)$ and write $\Pr_\theta$ for probability under a
data-generating law whose ability vector is $\theta$. A \emph{rank confidence sequence} at level
$\alpha$ is a family of adapted random sets $\Rset_{j,t}\subseteq\Mset$, one per model $j$ and
time $t$, with $\Pr_\theta(\forall t\ge0,\ \forall j\in\Mset:\ \rank_j(\theta)\in\Rset_{j,t})\ge1-\alpha$
for every data-generating law.

\section{Rank confidence sequences}
\label{sec:method}

\paragraph{Overview and guarantee.}
After every item, the method keeps the orderings of the models, ties allowed, that the data have
not yet contradicted. It reports only statements on which all surviving orderings agree. It repeats
three steps after each item $t=1,2,\dots$. Step~1 \emph{updates} a nonnegative measure of
evidence for each ordered pair of models. Step~2 \emph{eliminates} every ordering against which
enough evidence has accumulated. Step~3 \emph{reports} the certified dominances $j\succ l$, rank sets, top-$k$ sets
and tiers on which the remaining orderings agree. An optional Step~4 \emph{retires} models whose
evaluation can stop. With probability at least $1-\alpha$, the true ordering, the one given by
the true abilities, is never eliminated. Every report is
agreed on by all surviving orderings, the true one included, so with the same probability every
statement in every report is true at all times simultaneously. The user may therefore look after
every item and stop by any rule.

\paragraph{Inputs.}
The method needs the scores $X_{tj}\in[0,1]$ of every model $j\in\Mset$ on a common sequence of
items $t=1,2,\dots$, and two choices made before any data are seen: the level $\alpha\in(0,1)$
and a finite grid $\Lambda\subset(0,1)$ of betting fractions (Step~1), by default
$\Lambda=\{0.03,0.06,0.12,0.25,0.5\}$. The grid affects how fast evidence accumulates and never
the guarantee. Under (F), a benchmark of $N$ items, the user also draws
one uniformly random order of the items from a recorded seed before the evaluation, and evaluates
every model in that order. Then $X_t=(X_{t1},\dots,X_{t\M})$ is the score vector of the $t$th
item in that order. Under (S), the item stream is open-ended and i.i.d., so we set $N=\infty$.

\paragraph{Step 1: update the pairwise wealths.}
For each ordered pair $(j,l)\in\pairs$ the method maintains a \emph{wealth} $\Ep^{jl}_t\ge0$.
After $t$ items it measures the evidence that model $j$ is better than model $l$, or equivalently
the evidence against $\Hyp_{jl}:\theta_j\le\theta_l$. It is the average capital of $|\Lambda|$ gamblers,
each starting with one unit and, on every item, staking a fixed fraction $\lambda\in\Lambda$ of their
capital on model $j$ outscoring model $l$. Formally, for each $\lambda\in\Lambda$ put $\Ep^{jl}_{\lambda,0}=1$ and, on
item $t$ with paired difference $Z^{jl}_t=X_{tj}-X_{tl}$ and offset $b^{jl}_t$ as given below,
\begin{equation}\label{eq:update}
  \Ep^{jl}_{\lambda,t}=\Ep^{jl}_{\lambda,t-1}
    \Bigl(1+\lambda\,\frac{Z^{jl}_t-b^{jl}_t}{1+b^{jl}_t}\Bigr),
  \qquad
  \Ep^{jl}_t=\frac{1}{|\Lambda|}\sum_{\lambda\in\Lambda}\Ep^{jl}_{\lambda,t},
\end{equation}
where the \emph{offset} $b^{jl}_t$, computed from items $1,\dots,t-1$ only, bounds the
conditional mean of $Z^{jl}_t$ under $\Hyp_{jl}$:
\begin{equation}\label{eq:offset}
  b^{jl}_t=\begin{cases}
    0 & \text{under (S)},\\[2pt]
    \max\Bigl\{-0.99,\ \min\Bigl\{1,\ \dfrac{-S^{jl}_{t-1}}{N-t+1}\Bigr\}\Bigr\} & \text{under (F)},
  \end{cases}
  \qquad S^{jl}_{t-1}=\sum_{s<t}Z^{jl}_s .
\end{equation}
Here $S^{jl}_{t-1}$ is the cumulative paired difference over the first $t-1$ items. Under
$\Hyp_{jl}$, $\E[Z^{jl}_t\mid\F_{t-1}]\le b^{jl}_t$. For i.i.d.\ items this follows from
$\E[Z^{jl}_t]=\theta_j-\theta_l\le0$. On a finite benchmark the next item is drawn uniformly from
the $N-t+1$ items not yet revealed, whose differences sum to
$N(\theta_j-\theta_l)-S^{jl}_{t-1}\le-S^{jl}_{t-1}$. The truncation to $[-0.99,1]$ keeps each
factor in \eqref{eq:update} nonnegative and well defined. Thus, under $\Hyp_{jl}$, each factor
has conditional expectation at most one, so the bets are fair or unfavorable and the wealth
cannot grow in expectation. By Ville's inequality \citep{ville1939}, its probability of ever
reaching $1/\alpha$ is at most $\alpha$. Such a wealth is an \emph{e-process} \citep{ramdas2023},
a nonnegative process whose expected value stays at most one under any stopping rule. Its
reciprocal acts as a p-value that remains valid when it is checked after every item.
When $\theta_j>\theta_l$, the wealth grows exponentially if, for some $\lambda\in\Lambda$, the
factor in \eqref{eq:update} has a positive expected logarithm. The best bet size depends on the
unknown gap $\theta_j-\theta_l$, and averaging over a grid avoids relying on a single one. The
pairs $(j,l)$ and $(l,j)$ have separate wealths.

\paragraph{Step 2: eliminate orderings and certify dominances.}
An ordering of the models is eliminated once the wealths that should stay small if that ordering
were true average to $1/\alpha$ or more. The statement ``$j$ is better than $l$'' is certified
once every ordering that says otherwise has been eliminated. This is the principle of
\emph{closed testing} \citep{marcus1976}, with one test per ordering. A \emph{weak order} $\wo$ on $\Mset$
is a ranking of the models in which ties are allowed. Write $j\wlt_\wo l$ if $\wo$ ranks $l$
strictly above $j$, and $j\weq_\wo l$ if it ties them. Write $j\wle_\wo l$ when $\wo$ ranks $l$ at
least as high as $j$, including a tie. Formally $\wle_\wo$ is a reflexive, transitive and total
relation on $\Mset$. We represent $\wo$ by a vector of levels $(v_1,\dots,v_\M)$, with
$j\wle_\wo l$ if and only if $v_j\le v_l$. Let $\W$ be the set of all weak orders on $\Mset$. It
has $13$ elements for $\M=3$ and $545{,}835$ for $\M=8$. The abilities determine one of them, the \emph{true ordering} $\wo(\theta)$. It satisfies
$j\wle_{\wo(\theta)}l$ if and only if $\theta_j\le\theta_l$. The rank \eqref{eq:rank} depends on
$\theta$ only through the true ordering. Under $\wo$ the rank of model $j$ is
$\rank_j(\wo)=1+\#\{l\in\Mset:j\wlt_\wo l\}$. If $\wo$ were the true ordering, the hypotheses
$\Hyp_{jl}$ that hold would be exactly those of the pairs in
$\T(\wo)=\{(j,l)\in\pairs:\ j\wle_\wo l\}$. By Step~1, none of the wealths of these pairs would be
likely to become large. The evidence against $\wo$ is therefore the average of these wealths. Let
$\Ep^\wo_t$ denote that average. We eliminate $\wo$ permanently when $\Ep^\wo_t$ reaches $1/\alpha$, and
let $\CS_t$ be the set of surviving orderings:
\begin{equation}\label{eq:EW}
  \Ep^\wo_t=\frac{1}{|\T(\wo)|}\sum_{(j,l)\in\T(\wo)}\Ep^{jl}_t,
  \qquad
  \CS_t=\Bigl\{\wo\in\W:\ \max_{s\le t}\Ep^\wo_s<\frac1\alpha\Bigr\}.
\end{equation}
An average of e-processes on the same items is again an e-process, whatever their dependence,
as for e-values \citep{vovk2021}. The set $\CS_t$ of surviving orderings is therefore a confidence set for the
true ordering that is valid at all times at once, $\Pr_\theta(\forall t\ge0:\ \wo(\theta)\in\CS_t)\ge1-\alpha$
(Theorem~\ref{thm:validity}). The certified dominances and the rank sets are what all surviving
orderings agree on,
\begin{equation}\label{eq:D}
  \D_t=\bigl\{(j,l)\in\pairs:\ l\wlt_\wo j\ \text{for all }\wo\in\CS_t\bigr\},
  \qquad
  \Rset_{j,t}=\bigl\{\rank_j(\wo):\ \wo\in\CS_t\bigr\}.
\end{equation}
Because elimination is permanent, $\CS_t$ can only shrink and $\D_t$ can
only grow as $t$ increases.

\paragraph{Computing Step 2.}
Step~2 can be computed in three ways. Algorithm~\ref{alg:exact} enumerates the surviving weak
orders. It alone gives the exact rank sets and is practical up to about $\M=8$.
Algorithm~\ref{alg:shortcut} is a polynomial-time shortcut for any $\M$ that may certify fewer
comparisons. Algorithm~\ref{alg:ilp} computes the same $\D_t$ by integer programming at larger
$\M$. It solved every instance in our experiments. The shortcut certifies $(j,l)$ once the pooled
statistic $B^{jl}_t=\Ep^{jl}_t+\sum_{m\in\Mset\setminus\{j,l\}}\min\{\Ep^{jm}_t,\Ep^{ml}_t\}$ reaches
$\M(\M-1)/\alpha$. Each term $\min\{\Ep^{jm}_t,\Ep^{ml}_t\}$ is indirect evidence that $\theta_j>\theta_l$
through a third model $m$. It is large only if $\theta_j>\theta_m$ and $\theta_m>\theta_l$ are both well supported.
The shortcut certifies
a subset of $\D_t$ and at least what a Bonferroni correction over the $\M(\M-1)$ pairs certifies,
namely every pair whose wealth has reached $\M(\M-1)/\alpha$, a rule we call \emph{e-Bonferroni}.
The \emph{exact test}, computed by Algorithm~\ref{alg:exact} or~\ref{alg:ilp}, can certify more, because testing whole orderings uses the logical links among the
pairwise hypotheses. For example, $\Hyp_{jl}$ and $\Hyp_{lm}$ together imply $\Hyp_{jm}$, so
combinations of true and false hypotheses that no ordering allows need not be guarded against.

\paragraph{Step 3: report.}
Except for the exact rank sets obtained by enumeration, every reported quantity is a function of
the certified set $\D_t$, so all of them are correct whenever $\D_t$ contains no
false dominance. The subscript $t$ is dropped below when
the time is fixed.
\begin{description}[leftmargin=1.5em,itemsep=1pt,topsep=2pt]
\item[Dominances.] Each $(j,l)\in\D_t$ is reported as ``$j$ is better than $l$'', written
  $j\succ l$.
\item[Ranks.] The rank interval of model $j$ is $[L_{j,t},U_{j,t}]$ with
  \begin{equation}\label{eq:LU}
    L_{j,t}=1+\#\{l:(l,j)\in\D_t\},\qquad U_{j,t}=\M-\#\{l:(j,l)\in\D_t\}.
  \end{equation}
  Enumeration can instead report the \emph{exact rank set}
  $\Rset_{j,t}\subseteq[L_{j,t},U_{j,t}]$, which may be smaller.
\item[Top-$k$ sets.] For $k\in\Mset$, model $j$ is certified inside the top $k$ if $U_j\le k$ and
  outside it if $L_j>k$. Ties can put more than $k$ models inside.
\item[Tiers.] A \emph{chain above $j$} of length $h$ is a sequence of models $l_1,\dots,l_h$ such
  that $(l_1,l_2)$, \dots, $(l_{h-1},l_h)$ and $(l_h,j)$ all belong to $\D$. Let $\hgt(j)$ be the greatest such length, with
  $\hgt(j)=0$ if no model is certified better than $j$. The \emph{tier} of $j$ is
  $\tier(j)=1+\hgt(j)$, and tier $s$ is the set $\{j:\tier(j)=s\}$. Tiers are computed by the
  recursion $\tier(j)=1+\max\{\tier(l):(l,j)\in\D\}$, with $\tier(j)=1$ if the set is empty. A
  model in tier $s$ has true rank $s$ or worse.
\end{description}
If $\D$ ever contains a cycle, such as $j\succ l$ and $l\succ j$, or no ordering survives, an error has occurred, which happens with
probability at most $\alpha$. Report it as such.

\paragraph{Step 4 (optional): retire models.}
A model may be dropped from further evaluation at any time, by any rule that uses only the
scores observed so far, and the guarantee still holds. Let
$A_t\subseteq\Mset$ be the set of models scored on item $t$, with $A_{t+1}\subseteq A_t$, chosen
from the scores observed on items $1,\dots,t-1$. The wealth of $(j,l)$ is updated by
\eqref{eq:update} while both $j$ and $l$ belong to $A_t$. Afterwards it is frozen,
$\Ep^{jl}_t=\Ep^{jl}_{t-1}$, and continues to be used in Steps~2 and~3. Natural rules retire $j$
once its top-$k$ status is certified, once every pair involving $j$ is certified in one
direction, or once $U_j-L_j$ is below a target. The item order is not affected by retirement, and a retired model is not brought back.

\paragraph{Scope and cost.}
The model set must be fixed in advance, items must not be chosen adaptively, and every model
that has not been retired must have a score on each item. Pairwise-battle data are out of scope.
The per-item update is negligible next to running the models. It costs $O(\M^2)$ per item,
$25\,\mu$s for $20$ models on one CPU core. Appendix~\ref{app:method} states
Algorithms~\ref{alg:exact}--\ref{alg:ilp} with implementation details and costs. Appendix~\ref{app:rigor} gives the limits of the guarantee.

\section{What can be established, and why it matters}
\label{sec:theory}

The guarantee of Section~\ref{sec:method} follows from three results, one per subsection, all proved in Appendix~\ref{app:proofs}.

\subsection{All certified dominances are correct, at all times}
\label{sec:closed}

\paragraph{What needs a guarantee.}
Every report except the exact rank sets is computed from the certified set $\D_t$, so the central requirement is that, with
probability at least $1-\alpha$, no $\D_t$ at any time $t$ contains a false dominance, a pair
$(j,l)$ with $\theta_j\le\theta_l$. A pair with $\theta_j>\theta_l$ is a true dominance. We can
guarantee this over all pairs and times at once, even though the wealths share items and may be
dependent in unknown ways. The guarantee starts from one property of the pairwise wealths, that of an e-process
\citep{ramdas2023}. Identify a hypothesis $H$ with the set of data-generating laws it allows and
let $\E_P$ denote expectation under the law $P$. A \emph{stopping time} is a random time $\tau$
such that whether $\tau=t$ can be decided from the information $\F_t$ available at time $t$. Formally, an
\emph{e-process} for $H$ is a nonnegative adapted process $(\Ep_t)_{t\ge0}$ with
$\E_P[\Ep_\tau]\le1$ for every $P\in H$ and every bounded stopping time $\tau$. An e-process satisfies Ville's inequality:
$P(\exists t\ge0:\ \Ep_t\ge1/\alpha)\le\alpha$ for every $P\in H$, where $P(\cdot)$ is
probability under the law $P$. This is the counterpart of Markov's inequality that holds over
all times at once. Under both sampling models every wealth of
Section~\ref{sec:method} is an e-process for its hypothesis $\Hyp_{jl}$, whatever the dependence
between models. This is the betting construction of
\citet{waudbysmith2024}. Because
all models share one item order, all wealths are e-processes for the \emph{same} filtration. We
record this as an assumption.
\begin{assumption}\label{ass:eproc}
There is a filtration $(\F_t)$ and, for each $(j,l)\in\pairs$, a nonnegative adapted process
$(\Ep^{jl}_t)$ with $\Ep^{jl}_0=1$ that is an e-process for $\Hyp_{jl}$ with respect to $(\F_t)$
under every data-generating law for which $\theta_j\le\theta_l$. No condition is placed on the
joint behavior of the $\M(\M-1)$ processes.
\end{assumption}

\paragraph{Result.}
With probability at least $1-\alpha$ the true ordering is never eliminated, and as long as it
survives no certified dominance is false. Write $\thetastar$ for the true ability vector and
$\wostar=\wo(\thetastar)$ for the true ordering. Here $\CS_t$ and $\D_t$ are the surviving and
certified sets of \eqref{eq:EW} and \eqref{eq:D}.

\begin{theorem}[Validity]\label{thm:validity}
Under Assumption~\ref{ass:eproc}, for every data-generating law with parameter $\thetastar$ and
$\wostar=\wo(\thetastar)$, and with $\Pr$ denoting probability under that law,
\begin{equation}\label{eq:cover}
  \Pr\bigl(\forall t\ge0:\ \wostar\in\CS_t\bigr)\ge1-\alpha .
\end{equation}
Consequently the family-wise error rate, the probability of certifying any false dominance, is
controlled uniformly over time,
\begin{equation}\label{eq:fwer}
  \Pr\bigl(\exists t\ge0,\ \exists(j,l)\in\D_t:\ \thetastar_j\le\thetastar_l\bigr)\le\alpha .
\end{equation}
Both statements also hold when $t$ is replaced by any $\N_0$-valued random time, stopping or
otherwise, where $\N_0=\{0,1,2,\ldots\}$.
\end{theorem}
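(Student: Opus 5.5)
The plan is to reduce everything to a single e-process, the average $\Ep^{\wostar}_t$ of the wealths over $\T(\wostar)$, and apply Ville's inequality to it once. First I check that every pair in $\T(\wostar)$ has a true null hypothesis. By definition, $(j,l)\in\T(\wostar)$ means $j\wle_{\wostar}l$, that is, $\thetastar_j\le\thetastar_l$. So $\Hyp_{jl}$ holds, and by Assumption~\ref{ass:eproc} each $(\Ep^{jl}_t)$ is an e-process under the true law with respect to the common filtration $(\F_t)$. Note that $\T(\wostar)$ is nonempty, since $\M\ge2$ and any weak order puts some pair in the relation $\wle$. It is also deterministic, because $\thetastar$ is fixed.

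Next I show that $\Ep^{\wostar}_t=|\T(\wostar)|^{-1}\sum_{(j,l)\in\T(\wostar)}\Ep^{jl}_t$ is an e-process. It is nonnegative and adapted, and it starts at $1$. For a bounded stopping time $\tau$, linearity of expectation gives $\E[\Ep^{\wostar}_\tau]=|\T(\wostar)|^{-1}\sum\E[\Ep^{jl}_\tau]\le1$. No joint condition on the processes is needed, because linearity holds whatever the dependence. The only requirement is that all of them are e-processes for the same filtration, so that $\tau$ is a stopping time for each.

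Ville's inequality for e-processes, as stated before Assumption~\ref{ass:eproc}, then gives $\Pr(\exists t:\Ep^{\wostar}_t\ge1/\alpha)\le\alpha$. This is the step I expect to need the most care, since the paper takes Ville for e-processes as given. If a self-contained argument is wanted, I would use the stopping time $\tau=\inf\{t:\Ep^{\wostar}_t\ge1/\alpha\}$ truncated at $n$. Markov's inequality gives $\Pr(\tau\le n)\le\alpha\,\E[\Ep^{\wostar}_{\tau\wedge n}]\le\alpha$, and letting $n\to\infty$ gives the claim by continuity from below.

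On the complement event, $\max_{s\le t}\Ep^{\wostar}_s<1/\alpha$ for all $t$, so $\wostar\in\CS_t$ for all $t$, which proves \eqref{eq:cover}. For \eqref{eq:fwer}, I work on the same event. If $(j,l)\in\D_t$, then $l\wlt_\wo j$ for every $\wo\in\CS_t$, in particular for $\wostar$. By the definition of the true ordering this means $\thetastar_l<\thetastar_j$, so no certified dominance is false. The event in \eqref{eq:fwer} is therefore contained in $\{\exists t:\wostar\notin\CS_t\}$, which has probability at most $\alpha$.

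Finally, for an arbitrary $\N_0$-valued random time $T$, the events $\{\wostar\in\CS_T\}$ and "no false dominance in $\D_T$" contain the corresponding "for all $t$" events just bounded. So the statements at $T$ follow immediately, with no stopping-time property of $T$ required.
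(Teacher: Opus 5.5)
Your proposal is correct and follows the paper's proof essentially step for step. Linearity of expectation at a bounded stopping time of the common filtration makes $\Ep^{\wostar}$ an e-process, Ville's inequality comes from the same truncation $\tau^\star\wedge n$ with a Markov-type bound, and the family-wise error and random-time claims follow from the same event inclusions; the only cosmetic difference is that the paper writes the average with general fixed weights (Remark~\ref{rem:weights}), which your linearity argument covers without change.
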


The proof applies Ville's inequality to $\Ep^{\wostar}_t$, an average of wealths of true
hypotheses. While $\wostar$ survives, the certified set is a strict partial order (no pair in both directions, closed under
transitivity), so the report never contradicts itself.

\paragraph{Relation to existing results.}
Theorem~\ref{thm:validity} combines closed testing over weak orders with averaging of
e-processes. We use the partitioning principle of
\citet{finner2002}, where the orderings partition the parameter space and one test is carried out
per cell, as in the fixed-sample rank intervals of \citet{almohamad2021}. \citet{shaffer1986} shows that only sets of hypotheses that can be true together
need be tested. Here these are exactly the sets $\T(\wo)$, one per weak order. Averaging
e-values (nonnegative statistics with expectation at most one under the null), the fixed-time
counterpart of e-processes, remains valid under arbitrary dependence
\citep{vovk2021,wang2025}, while products generally require independence, which shared items rule
out. Our result differs from the two closest ones as follows. \citet{arnold2026smcs} average
e-processes and close over all subsets of models, which certifies the set of models that may be
best and no ranks. \citet{hartoglei2025} close over all sets of hypotheses, the full closure, which
controls the family-wise error for arbitrary hypotheses but yields no rank statements. We test one averaged
e-process per weak order, which yields rank sets. Nothing is lost relative to standard
corrections. On the same wealths, the exact test certifies every pair that the full closure
certifies, and the shortcut every e-Bonferroni rejection used by \citet{gu2026serpant}.
Appendix~\ref{sec:power} states both results.

\subsection{Ranks, top-\texorpdfstring{$k$}{k} sets and tiers inherit the guarantee}
\label{sec:duality}

\paragraph{What needs a guarantee.}
Users read ranks, top-$k$ sets and tiers, which are built from the dominances. These reported quantities must also be
correct without costing additional error. To cover the exact test, the
shortcut and
e-Bonferroni at once, call a family $(\D_t)_{t\ge0}$ of random subsets of $\pairs$ a
\emph{certified set} if its \emph{good event}
$\good=\{\forall t\ge0,\ \forall(j,l)\in\D_t:\ \thetastar_j>\thetastar_l\}$ has probability at
least $1-\alpha$. All three are certified sets. Taking the transitive closure adds no error. Let
$\Dbar_t$ be the transitive closure of $\D_t$, which adds $(j,m)$ whenever $(j,l)$ and $(l,m)$
are present, repeatedly. On $\good$, the transitive closure contains only
true dominances. True dominances form a strict partial order, so the closure is certified on the
same event.

\paragraph{Result.}
Rank intervals, top-$k$ sets and tiers are correct on $\good$, and exact rank sets while
$\wostar$ survives.

\begin{theorem}[Rank-set duality]\label{thm:ranks}
(a) \emph{Projection.} Under Assumption~\ref{ass:eproc}, the rank sets $\Rset_{j,t}$ of
\eqref{eq:D} satisfy $\Pr(\forall t,\forall j:\ \rank_j(\thetastar)\in\Rset_{j,t})\ge1-\alpha$.
(b) \emph{Counting.} Let $(\D_t)$ be any certified set and compute $L_{j,t}$ and $U_{j,t}$ from
it by \eqref{eq:LU}. On $\good$, $L_{j,t}\le\rank_j(\thetastar)\le U_{j,t}$ for all $j$ and all
$t$, and the intervals only improve if $\D_t$ is replaced by $\Dbar_t$.
(c) \emph{Comparison.} For $\D_t$ from \eqref{eq:D}, $\Rset_{j,t}\subseteq[L_{j,t},U_{j,t}]$,
and the inclusion is strict for some wealth configurations.
\end{theorem}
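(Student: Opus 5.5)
Part (a) follows directly from Theorem~\ref{thm:validity}. On the event $\{\forall t:\ \wostar\in\CS_t\}$, which has probability at least $1-\alpha$ by \eqref{eq:cover}, the definition \eqref{eq:D} gives $\rank_j(\wostar)\in\Rset_{j,t}$ for every $j$ and $t$. We also have $\rank_j(\wostar)=\rank_j(\thetastar)$. This holds because $j\wlt_{\wostar}l$ if and only if $\thetastar_j<\thetastar_l$, so the two counts in \eqref{eq:rank} coincide. This gives the claimed simultaneous coverage.

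For part (b) we work on $\good$ at a fixed $t$. Every $(l,j)\in\D_t$ satisfies $\thetastar_l>\thetastar_j$. Hence $\{l:(l,j)\in\D_t\}\subseteq\{l:\thetastar_l>\thetastar_j\}$, which gives $L_{j,t}\le\rank_j(\thetastar)$. In the other direction, every $l$ with $(j,l)\in\D_t$ has $\thetastar_l<\thetastar_j$. Such an $l$ is therefore neither $j$ itself nor among the $\rank_j-1$ models strictly above $j$. This yields $\#\{l:(j,l)\in\D_t\}\le \M-\rank_j(\thetastar)$, that is, $\rank_j\le U_{j,t}$. For the closure, $\D_t\subseteq\Dbar_t$, so $L$ can only increase and $U$ can only decrease. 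On $\good$ every pair in $\Dbar_t$ is a true dominance, by transitivity of $>$ on reals and induction on the number of closure steps. So $\Dbar_t$ satisfies the same good event, and the same counting argument shows the tightened bounds still cover.

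For the inclusion in (c), take $\wo\in\CS_t$. If $(l,j)\in\D_t$, then $j\wlt_\wo l$ by \eqref{eq:D}, so $\rank_j(\wo)\ge L_{j,t}$. If $(j,l)\in\D_t$, then $l\wlt_\wo j$, so such $l$ are not ranked above $j$ by $\wo$, which gives $\rank_j(\wo)\le U_{j,t}$. This argument is deterministic and needs no probability. Strictness requires a concrete example, and this is the main obstacle: one needs wealth values for which the surviving set of orders produces a gap inside $[L_j,U_j]$. I would try $\M=3$ and aim for $\D_t=\emptyset$, so that $[L_1,U_1]=[1,3]$, while every surviving order puts model $1$ in rank $1$ or $3$. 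An example is an order in which models $2$ and $3$ are close and model $1$ is far from both in an unknown direction. To set this up, I would make both $\Ep^{12}$ and $\Ep^{21}$ moderate and give values to $\Ep^{23},\Ep^{32},\Ep^{13},\Ep^{31}$. These should eliminate every weak order in which model $1$ has rank $2$, or in which it is tied with another model, while leaving at least one order with rank $1$ and one with rank $3$ surviving. I would check this against the 13 weak orders, computing each $\Ep^\wo$ as the average over $\T(\wo)$. I would then adjust the numbers, for example giving large wealth to $(2,1),(3,1),(1,2),(1,3)$ individually while keeping the averages for the two extreme orders below $1/\alpha$. The larger $|\T(\wo)|$ for orders with ties helps dilute or concentrate the evidence as needed. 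If a three-model example proves impossible, I would move to $\M=4$ and make a similar search.
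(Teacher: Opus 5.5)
Your arguments for (a), for (b) including the closure step, and for the inclusion in (c) are correct. They follow the paper's proof essentially line by line.

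The gap is the strictness half of (c). It is an existence claim, and you give only a search plan, with a fallback to $\M=4$ in case $\M=3$ fails, so as written it is unproved.

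Your plan does succeed at $\M=3$, and for the dilution reason you suggest. Take $\alpha=0.1$ and, at a call time with no earlier crossing (for instance the first call), set $\Ep^{12}=\Ep^{21}=\Ep^{13}=\Ep^{31}=16$ and $\Ep^{23}=\Ep^{32}=0$. The thirteen weak orders then behave as follows.
\begin{itemize}
\item Each strict order has $|\T(\wo)|=3$ and contains one ordered pair from each of $\{1,2\}$, $\{1,3\}$ and $\{2,3\}$. Its average is $32/3\ge10$, so it is eliminated.
\item An order in which model $1$ is tied with exactly one other model averages $48/4=12$. The all-tied order averages $64/6\ge10$. All of these are eliminated.
\item The two orders with $2\weq3$ and model $1$ strictly above or strictly below them average $32/4=8<10$, so only they survive.
\end{itemize}
Hence $\D_t=\emptyset$ and $[L_{1,t},U_{1,t}]=[1,3]$, while $\Rset_{1,t}=\{1,3\}$.

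Up to relabeling, this is the paper's second example, the one showing that projection sets need not be intervals. The paper's main witness is of a different kind. With $\alpha=0.1$, $\Ep^{12}=10$, $\Ep^{13}=90$, $\Ep^{21}=0.2$, $\Ep^{32}=21$ and $\Ep^{23}=\Ep^{31}=0$, four orders survive and $\D_t=\{(1,3)\}$, so $U_{1,t}=2$. Yet every survivor, including the one with the tie $1\weq2$ at the top, gives model $1$ rank $1$, so $\Rset_{1,t}=\{1\}$. That example shows the gain can come from the tie convention even when the projection set is an interval, not only from gaps like the one you were aiming for.
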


By part (b), on $\good$ every top-$k$ certificate of Step~3 is correct, for all $j$, $k$ and $t$
simultaneously.

\begin{theorem}[Tier guarantee]\label{thm:tiersmain}
Let $(\D_t)$ be a certified set and let $\tier_t(j)$ be the tier of model $j$ computed from
$\Dbar_t$ as in Step~3 of Section~\ref{sec:method}. On $\good$, simultaneously for all $j$ and
all $t$,
\begin{equation}\label{eq:tierbound}
  \rank_j(\thetastar)\ \ge\ L_{j,t}\ \ge\ \tier_t(j),
\end{equation}
where $L_{j,t}$ is computed from $\Dbar_t$, and the number of tiers is at most the number of
distinct values among $\thetastar_1,\dots,\thetastar_\M$. On $\good$, if
$\D_t\subseteq\D_{t+1}$ then $\tier_t(j)\le\tier_{t+1}(j)$, so a model's tier index never decreases
as evidence accrues.
\end{theorem}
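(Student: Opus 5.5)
We work throughout on the good event $\good$, where $\Dbar_t$ consists only of true dominances for every $t$. This follows from the remark before Theorem~\ref{thm:ranks}: true dominances form a strict partial order, so the transitive closure of a set of true dominances stays inside it. The first inequality $\rank_j(\thetastar)\ge L_{j,t}$ is then Theorem~\ref{thm:ranks}(b) applied to $\Dbar_t$. To see it directly, every $l$ with $(l,j)\in\Dbar_t$ satisfies $\thetastar_l>\thetastar_j$, so $\#\{l:(l,j)\in\Dbar_t\}\le\#\{l:\thetastar_l>\thetastar_j\}$.

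For $L_{j,t}\ge\tier_t(j)$, take a longest chain $l_1,\dots,l_h$ above $j$ in $\Dbar_t$, so $h=\hgt(j)$. Because $\Dbar_t$ is transitively closed, each $l_i$ satisfies $(l_i,j)\in\Dbar_t$. The $l_i$ are distinct and differ from $j$. On $\good$ this holds because the abilities strictly increase along the chain, $\thetastar_{l_1}>\dots>\thetastar_{l_h}>\thetastar_j$. Hence $\#\{l:(l,j)\in\Dbar_t\}\ge h$, which gives $L_{j,t}\ge1+h=\tier_t(j)$. The same strict chain shows that the $h+1$ models $l_1,\dots,l_h,j$ have pairwise distinct abilities. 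So $\tier_t(j)$ never exceeds the number $K$ of distinct values among $\thetastar_1,\dots,\thetastar_\M$. Every tier index therefore lies in $\{1,\dots,K\}$, and there are at most $K$ nonempty tiers.

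For monotonicity, $\D_t\subseteq\D_{t+1}$ implies $\Dbar_t\subseteq\Dbar_{t+1}$, since the closure is monotone. Any chain above $j$ in $\Dbar_t$ is then also a chain in $\Dbar_{t+1}$, so $\hgt_{t+1}(j)\ge\hgt_t(j)$ and the tier cannot drop. The one point needing care is that $\hgt$ must be finite, that is, $\Dbar$ must be acyclic so the recursion defining $\tier$ is well founded. This is exactly where $\good$ enters. A cycle in $\Dbar_t$ would force $\thetastar_j>\thetastar_j$ for some $j$. So on $\good$ chains have length at most $\M-1$, the recursion $\tier(j)=1+\max\{\tier(l):(l,j)\in\Dbar\}$ agrees with the longest-chain definition, and the argument above goes through. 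I expect no real obstacle beyond this bookkeeping. The main thing to get right is that all claims are deterministic consequences of $\good$, so the simultaneity over $j$ and $t$ costs nothing beyond $\Pr(\good)\ge1-\alpha$.
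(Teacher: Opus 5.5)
Your proof is correct and follows essentially the same route as the paper's: Theorem~\ref{thm:ranks}(b) applied to $\Dbar_t$ for the first inequality, a longest chain together with transitivity of $\Dbar_t$ for $L_{j,t}\ge\tier_t(j)$, a chain of strictly decreasing abilities for the bound on the number of tiers, and monotonicity of the transitive closure for the claim that tiers never decrease. The only difference is cosmetic: you obtain distinctness of the chain elements from the strict ordering of abilities on $\good$, whereas the paper obtains it from the irreflexivity of the strict partial order $\Dbar_t$, which on $\good$ amounts to the same thing.
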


The guarantee is only \eqref{eq:tierbound}. A lower
tier index does not imply a certified pairwise dominance, and tier $1$ also holds every model
with no certified comparison.

\paragraph{Relation to existing results.}
The counting part of Theorem~\ref{thm:ranks} comes from \citet{almohamad2022} and
\citet{mogstad2024}, and \citet{gu2026serpant} apply it at all times. They also define the same
tiers without a guarantee, which Theorem~\ref{thm:tiersmain} adds. Appendix~\ref{app:tiers} gives more on tiers.

\subsection{Evaluation can stop early, model by model}
\label{sec:stopping}

\paragraph{What needs a guarantee.}
Retiring a model, that is, dropping it from further evaluation, is data-dependent. Once a model
stops, its future scores are not observed and every pair involving it has its wealth frozen. The
guarantee must remain valid under both effects for any rule based only on observed scores.
After retirement the analyst observes less information than $\F_t$, which contains the scores of
all models on all revealed items.

\paragraph{Result.}
Freezing each wealth at the retirement of either of its models preserves
Assumption~\ref{ass:eproc}, and with it every result that rests on that assumption.

\begin{proposition}[Retirement preserves validity]\label{prop:retire}
Under \textup{(F)} or \textup{(S)}, let the item order in \textup{(F)} be drawn before evaluation
and remain unaffected by later decisions. At each time $t$ let $A_t\subseteq\Mset$ be the set of
models scored on item $t$, chosen as a function of the scores observed on items $1,\dots,t-1$ and
of $\xi$, with $A_1=\Mset$, $A_{t+1}\subseteq A_t$, and $A_{N+1}=\emptyset$ under \textup{(F)}.
Model $j$ is retired at $\sigma_j=\sup\{t:j\in A_t\}$ ($\sigma_j=\infty$ if never), so that item
$\sigma_j$ is included in its record, and retirement is permanent under both sampling models. Let
$\ind{\cdot}$ denote the indicator of the event in braces and $s\wedge t=\min\{s,t\}$, and let
$\F^{\mathrm{obs}}_t=\sigma\bigl(\xi,\ \ind{j\in A_s},\ \ind{j\in A_s}X_{sj}:\ s\le t,\ j\in\Mset\bigr)$
be the observed-data filtration, generated by which models were scored and the scores observed.
Call a bet
\emph{predictable} for a filtration if the bet on item $t$ is determined by the information at
time $t-1$. Then $\F^{\mathrm{obs}}_t\subseteq\F_t$, and each $\sigma_j$ is an
$(\F^{\mathrm{obs}}_t)$-stopping time and hence an $(\F_t)$-stopping time. Replacing each
$\Ep^{jl}_t$ by $\Ep^{jl}_{t\wedge\sigma_j\wedge\sigma_l}$, with any
$(\F^{\mathrm{obs}}_t)$-predictable bets, preserves
Assumption~\ref{ass:eproc} and every result resting on it.
\end{proposition}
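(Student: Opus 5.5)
The plan is to verify the filtration claims first, then show that each frozen wealth is still a nonnegative supermartingale-type process under $\Hyp_{jl}$ with respect to $(\F_t)$, and then invoke Assumption~\ref{ass:eproc} as the only input to Theorems~\ref{thm:validity}, \ref{thm:ranks} and~\ref{thm:tiersmain}.

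\emph{Filtrations.} First, $\F^{\mathrm{obs}}_t\subseteq\F_t$. The set $A_s$ is a function of scores on items $1,\dots,s-1$ and of $\xi$, so it is $\F_{s-1}$-measurable. The score $X_{sj}$ is $\F_s$-measurable: under (F) it equals $x_{\Pi(s)j}$ with $x$ deterministic, and under (S) it is a generator of $\F_s$. Products of these generators are therefore $\F_t$-measurable for $s\le t$. Next, $\sigma_j$ is a stopping time. Because the sets $A_t$ are nested, $\{\sigma_j\le t\}=\{j\notin A_{t+1}\}$. This event is determined by the scores observed on items $1,\dots,t$ and $\xi$, hence lies in $\F^{\mathrm{obs}}_t$. (If $A_{t+1}$ is defined as a function of the full scores, I would note that it only uses observed ones, since a model not scored cannot contribute.) Being an $(\F^{\mathrm{obs}}_t)$-stopping time, $\sigma_j$ is also an $(\F_t)$-stopping time by inclusion. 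Consequently $\sigma_{jl}:=\sigma_j\wedge\sigma_l$ is an $(\F_t)$-stopping time.

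\emph{Wealths under retirement.} The frozen wealth multiplies in the factor of \eqref{eq:update} only on items $t\le\sigma_{jl}$, where both $j$ and $l$ are scored. So $Z^{jl}_t$ is observed there, and the offset $b^{jl}_t$ uses $S^{jl}_{t-1}$, which is observed because both models were active on all earlier items. Any $(\F^{\mathrm{obs}}_t)$-predictable bet $\lambda_t$ is then $\F_{t-1}$-measurable.

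The key step is the conditional fairness bound $\E[Z^{jl}_t\mid\F_{t-1}]\le b^{jl}_t$ under $\Hyp_{jl}$, taken with respect to the full filtration $\F_{t-1}$. This is the argument of Step~1, and retirement does not affect it. Under (F), the item order is drawn before evaluation and is independent of $\xi$, so $\Pi(t)$ given $\F_{t-1}$ is uniform on the unrevealed items. Under (S), $X_t$ is independent of $\F_{t-1}$. Hence for each $\lambda$ the one-step factor $1+\lambda(Z-b)/(1+b)$ has conditional mean at most one and is nonnegative. This makes $\Ep^{jl}_{\lambda,t}$ a nonnegative $(\F_t)$-supermartingale. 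The same holds for the grid average and, more generally, for any predictable $\lambda_t\in[0,1)$.

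The stopped process $\Ep^{jl}_{t\wedge\sigma_{jl}}$ is again a nonnegative supermartingale by optional stopping for supermartingales. Since $\sigma_{jl}$ is a stopping time, the stopping indicator $\ind{t\le\sigma_{jl}}$ is predictable. Therefore $\E_P[\Ep^{jl}_{\tau\wedge\sigma_{jl}}]\le1$ for every bounded stopping time $\tau$, which is exactly the e-process property. It starts at one, is nonnegative and is adapted, so Assumption~\ref{ass:eproc} holds with the same filtration $(\F_t)$ for all pairs. Since Theorems~\ref{thm:validity}, \ref{thm:ranks} and~\ref{thm:tiersmain} use nothing else, they carry over.

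\emph{Main obstacle.} The delicate point is measurability. I need to confirm that $A_t$ and the bets, defined through observed data only, are genuinely $\F_{t-1}$-measurable, and that conditioning on the richer $\F_{t-1}$ (which includes unobserved scores of retired models) does not break the offset bound under (F). I would handle this by conditioning on $\Pi(1),\dots,\Pi(t-1),\xi$ directly. Given these, the remaining order is uniform, because $\Pi$ is independent of $\xi$ and the $A_s$ are deterministic functions of them. The bound then follows as in Step~1. The condition $A_{N+1}=\emptyset$ ensures that under (F) no update is attempted past $N$.
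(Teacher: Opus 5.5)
Your proposal is correct and follows essentially the same route as the paper. It shows the inclusion $\F^{\mathrm{obs}}_t\subseteq\F_t$ and the identity $\{\sigma_j\le t\}=\{j\notin A_{t+1}\}\in\F^{\mathrm{obs}}_t$, notes that observed-data bets are $(\F_t)$-predictable, applies the offset bound of Proposition~\ref{prop:instances} in the full filtration, and freezes via the stopped-supermartingale step of Lemma~\ref{lem:betting}. You assert rather than prove that $A_s$ is $\F_{s-1}$-measurable even though it depends on observed scores that depend on earlier $A_r$; the paper proves this by explicit induction on $t$, and your remark that each $A_s$ is a deterministic function of the past item order (or past scores) and $\xi$ is that induction, left unwritten.
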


\paragraph{Relation to existing results.}
Retirement times and bets chosen from the observed scores alone are also stopping times and predictable bets for $(\F_t)$. All frozen wealths therefore remain e-processes for one common
filtration, while the procedure never needs an unobserved score. Freezing at such times matters.
\citet{tavyrikov2025} show that averaging running maxima, which freezes each e-process at its
best past value, need not control the family-wise error. Existing sequential evaluation
methods stop evaluating one model \citep{zhou2026celeus,hsushekhar2026} or one pair at a time
\citep{kotawala2026resolution,gu2026serpant,arviv2026stop}. Ours stops models one by one inside a
simultaneous guarantee over all ranks. Appendix~\ref{app:rigor} says when a frozen pair may be updated again.

\section{Experiments}
\label{sec:experiments}

Four experiments, E1 to E4, test monitoring, certification on real data, compute savings and
the price of anytime validity. The real-data experiments use per-item results
of $395$ large language models (LLMs) from the Open LLM Leaderboard \citep{maiapolo2024tiny,openllmleaderboard2023}. Setups
are in Appendix~\ref{app:experiments}, with further experiments on cross-model dependence and on
what the multiplicity correction and the bet contribute.

\paragraph{E1. Only fixed-sample rank sets break under monitoring.}
With six tied models checked at $200$ looks and nominal level $5\%$, fixed-sample rank sets err in
$30\%$ to $51\%$ of runs and our exact test in $1.6\%$ (Figure~\ref{fig:peeking}a). The models answer
$2000$ simulated i.i.d.\ items, with equally spaced looks. The fixed-sample rank sets are the standard construction, Holm's correction \citep{holm1979} over
paired tests of all $\M(\M-1)$ ordered pairs, converted into rank intervals by \eqref{eq:LU}.
They err in $30\%$ of runs with the exact McNemar test and $51\%$ with the paired $z$-test. With
tied pairs at ranks 1 and 3 the rates are $9.7\%$, $18\%$ and ours $0.3\%$. With no exact ties
but gaps of $0.004$, they are $4.6\%$, $11.7\%$ and $0.2\%$. At a single pre-planned look both
fixed-sample procedures are valid, at $3.3\%$ and $3.6\%$ with all six tied, so the inflation is
caused by monitoring alone and not by the normal approximation.

\begin{figure}[!t]
\centering
\includegraphics[width=\textwidth]{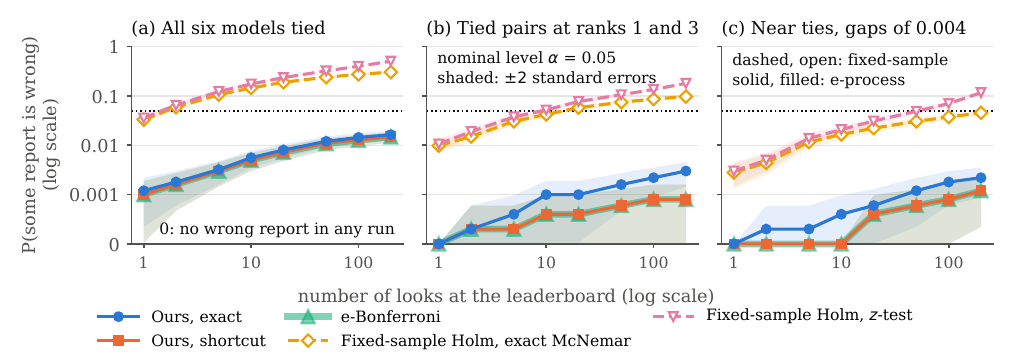}
\caption{Probability that some report is wrong against the number of looks, $\alpha=0.05$.
Most fixed-sample curves (dashed) exceed $\alpha$ as looks accumulate. E-process curves (solid)
stay below $\alpha$. Both axes use a log scale: tick labels are actual values a factor of ten apart,
except 0.}
\label{fig:peeking}
\end{figure}

\paragraph{E2. The data support only part of the ranking early.}
Figure~\ref{fig:real} follows the twenty highest-scoring models across
six benchmarks, while Table~\ref{tab:scale} summarizes the halfway and final results. On MMLU,
halfway through the benchmark, $50\%$ of the $190$ true dominances, the pairs whose order
on the full benchmark is strict, are certified and the
twenty models form $3.0$ certified tiers. HellaSwag is similar, while the other four benchmarks
separate fewer pairs at halfway. The difficulty comes from choosing twenty models near the top.
When twenty models are spread across the leaderboard, $91\%$ of their true dominances on MMLU are already
certified halfway through. We also used the top eight models. Of all $900$ runs, one had a wrong report, far below the $5\%$ level.

\paragraph{E3. Model-by-model stopping saves evaluation.}
The top twenty are chosen with the full benchmark in hand, which an operator cannot do, so we
repeat on all $395$ released models with no further selection. Most are far from the top-$3$ boundary, so their
evaluation can stop early. Certifying top-$3$ membership for all $395$ uses $6.2\%$ of the full
MMLU evaluation, against $31\%$ for the twenty highest-scoring alone. In absolute terms the $395$-model
run costs about four times as much. In simulation, retiring each model once its own top-$3$ question is
answered costs $22\%$ against $84\%$ for stopping every model at the same moment, and the cost
barely depends on $k$. The full leaderboard is also better resolved, with $86\%$ of the true dominances on MMLU
certified halfway and $99.6\%$ by the end (Table~\ref{tab:scale}). No report was wrong in any of these runs.

\begin{table}[t]
\caption{True dominances certified, as a percentage of those that exist, and mean tiers at halfway.
Ours checked at every $1\%$, $50$ item orders. No report was wrong in any of these runs.}
\label{tab:scale}
\centering\small
\begin{tabular}{@{}lrrrcrrr@{}}
\toprule
& \multicolumn{3}{c}{Top twenty ($179$--$190$ dominances)} & & \multicolumn{3}{c}{All $395$ models ($\approx77{,}700$)}\\
\cmidrule(lr){2-4}\cmidrule(lr){6-8}
Benchmark & half & end & tiers & & half & end & tiers\\
\midrule
MMLU          & $50\%$ & $99\%$ & $3.0$ & & $86\%$ & $99.6\%$ & $21.2$\\
HellaSwag     & $49\%$ & $96\%$ & $3.2$ & & $77\%$ & $99.2\%$ & $20.4$\\
GSM8K         & $22\%$ & $96\%$ & $2.9$ & & $78\%$ & $98.9\%$ & $10.8$\\
WinoGrande    & $2\%$  & $85\%$ & $2.0$ & & $26\%$ & $95.5\%$ & $4.2$\\
ARC-Challenge & $0\%$  & $82\%$ & $1.1$ & & $48\%$ & $97.0\%$ & $6.8$\\
TruthfulQA    & $17\%$ & $82\%$ & $2.0$ & & $55\%$ & $97.0\%$ & $6.1$\\
\bottomrule
\end{tabular}
\end{table}

\begin{figure}[!t]
\centering
\includegraphics[width=0.97\textwidth]{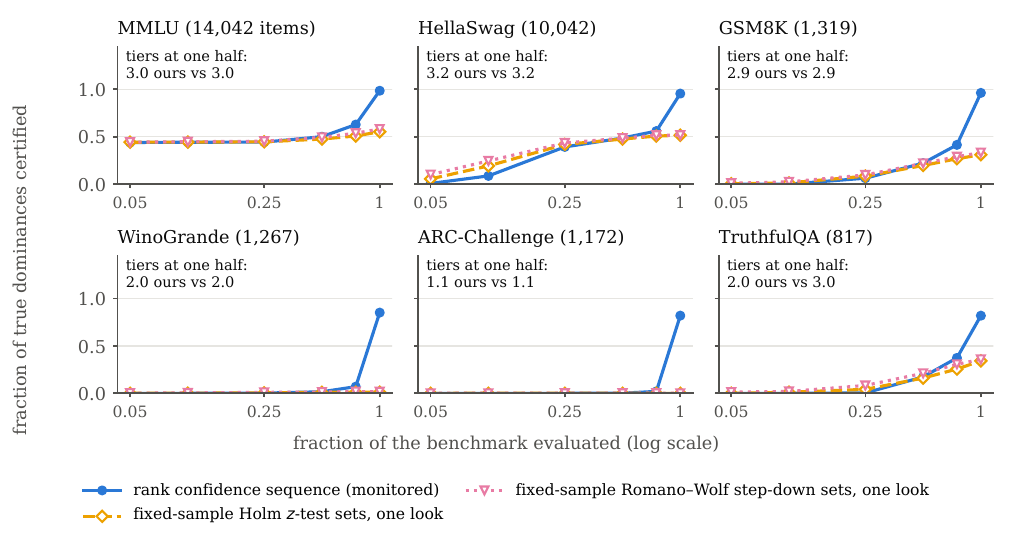}
\caption{Twenty highest-scoring models, $50$ item orders: fraction of true dominances
certified against fraction evaluated. Ours (Algorithm~\ref{alg:shortcut}) checked at
every $1\%$, Holm and Romano--Wolf at one look. Each panel gives the mean tiers at halfway, ours
against Romano--Wolf.}
\label{fig:real}
\end{figure}

\paragraph{E4. Anytime validity costs little at a pre-planned look.}
At a pre-planned look at halfway, ours matches or exceeds the strongest fixed-sample procedure,
the bootstrap step-down rank sets of \citet{mogstad2024}, based on \citet{romano2005}, on five of the
six benchmarks and trails by four percentage points on TruthfulQA (Figure~\ref{fig:real}).
Beyond halfway the offset of (F), which uses the finiteness of the benchmark, increasingly favors
ours, so the later separation should not be read as a pure power comparison.

\section{Limitations and conclusion}
\label{sec:limits}

\paragraph{Limitations.}
At an early pre-planned look, fixed-sample procedures certify more. Models arriving over time
\citep{fischerramdas2024} and adaptive item selection \citep{zhou2026celeus} are not covered.
Validity covers the benchmark as given. Contamination and transfer to other tasks lie outside it.

\paragraph{Conclusion.}
Rank confidence sequences give shared-item leaderboards rank statements valid for all models at
all times. In our experiments they stayed valid under monitoring where fixed-sample rank sets did
not, matched or exceeded them at a pre-planned look at halfway on five of six benchmarks, and settled
every model's top-$3$ status with $6.2\%$ of a full MMLU evaluation.
The code is available at \url{https://github.com/HamedKhosravi99/rank-confidence-sequences-supplement}.

\bibliography{references}
\bibliographystyle{iclr2027_conference}

\appendix
\raggedbottom
\addtocontents{toc}{\protect\setcounter{tocdepth}{2}}
\renewcommand{\contentsname}{Appendix contents}
\clearpage
\tableofcontents
\clearpage
\section{Method details}
\label{app:method}

Step~2 of Section~\ref{sec:method} can be carried out three ways, and this appendix gives all
three. Enumeration checks every ordering directly and is practical while the number of orderings
is manageable. Integer programming computes the same answer at any size. A closed-form shortcut
is fast but may certify less. All three are run after every update. If they are run less
often, the running maximum in \eqref{eq:EW} is taken over those runs only, so no more orderings are
eliminated. The certified set is then contained in $\D_t$, each rank set contains $\Rset_{j,t}$,
and the guarantee of Theorem~\ref{thm:validity} still holds. Algorithm~\ref{alg:exact} is the
enumeration.

\begin{algorithm}[H]
\caption{Exact certification by enumeration (small $\M$). Run after every update.}\label{alg:exact}
\begin{algorithmic}[1]
\Require wealths $\Ep^{jl}_t$; level $\alpha$; the set $\CS$ of surviving weak orders (initially $\W$)
\For{each weak order $\wo\in\CS$}
  \State $e\gets\frac{1}{|\T(\wo)|}\sum_{(j,l)\in\T(\wo)}\Ep^{jl}_t$ \Comment{average wealth over the pairs with $j\wle_\wo l$}
  \If{$e\ge1/\alpha$} remove $\wo$ from $\CS$ \Comment{permanent}
  \EndIf
\EndFor
\State $\D\gets\{(j,l)\in\pairs:\ l\wlt_\wo j\text{ for every }\wo\in\CS\}$ \Comment{certified: $j$ better than $l$}
\State $\Rset_j\gets\{1+\#\{l:j\wlt_\wo l\}:\ \wo\in\CS\}$ for each $j$ \Comment{rank set of model $j$}
\State \Return $\CS,\D,(\Rset_j)_j$
\end{algorithmic}
\end{algorithm}

Beyond about eight models enumeration is impractical. The closed-form rule of
Algorithm~\ref{alg:shortcut} then certifies a subset of $\D_t$ at $O(\M)$ operations per pair, and
Algorithm~\ref{alg:ilp} below computes $\D_t$ exactly. Transitivity supplies the second term of the
shortcut's statistic. Take any ordering that puts $l$ at least as high as $j$, and any third model
$m$. Either the ordering puts $m$ at least as high as $j$, and then the wealth $\Ep^{jm}$ counts
against it, or it puts $m$ below $j$ and hence below $l$, and then $\Ep^{ml}$ counts against it.
Either way the ordering is charged at least $\min\{\Ep^{jm},\Ep^{ml}\}$.

\begin{algorithm}[H]
\caption{Shortcut certification (any $\M$). Run after every update.}\label{alg:shortcut}
\begin{algorithmic}[1]
\Require wealths $\Ep^{jl}$; level $\alpha$; the set $\D$ of certified pairs (initially empty)
\For{each ordered pair $(j,l)\notin\D$}
  \State $B\gets\Ep^{jl}+\sum_{m\ne j,l}\min\{\Ep^{jm},\,\Ep^{ml}\}$
  \If{$B\ge\M(\M-1)/\alpha$} add $(j,l)$ to $\D$ \Comment{permanent}
  \EndIf
\EndFor
\State replace $\D$ by its transitive closure \Comment{if $j\succ m$ and $m\succ l$ then $j\succ l$; same good event, no added error probability}
\State \Return $\D$
\end{algorithmic}
\end{algorithm}

\paragraph{Exact certification for any $\M$.}
Algorithm~\ref{alg:ilp} computes $\D_t$ without enumerating orderings, by asking an integer
program whether any surviving ordering places $l$ at least as high as $j$. A weak order $\wo$ is
encoded by $y\in\{0,1\}^{\pairs}$ with $y_{ab}=1$ exactly when $(a,b)\in\T(\wo)$. The $0/1$
vectors with $y_{ab}+y_{ba}\ge1$ for all $a\ne b$, which is totality, and
$y_{ab}+y_{bc}-y_{ac}\le1$ for all distinct $a,b,c$, which is transitivity, are exactly the weak
orders. A \emph{call time} is a time at which an algorithm is run, and a \emph{wealth matrix} is the array
$(\Ep^{ab}_s)_{(a,b)\in\pairs}$ of pairwise wealths at one call time. With a run after every update
the call times up to $t$ are $1,\dots,t$. The order $\wo$ survives
call time $s$ when $\Ep^\wo_s<1/\alpha$, equivalently when
$g_s(y):=\sum_{(a,b)\in\pairs}y_{ab}(\Ep^{ab}_s-1/\alpha)<0$. For a set
$S$ of call times let
\begin{equation}\label{eq:ilp}
  z^\star(S)=\min\Bigl\{z:\ z\ge g_s(y)\ \text{for all }s\in S,\ y\in\{0,1\}^{\pairs}\text{ a weak order},\ y_{jl}=1\Bigr\},
\end{equation}
an integer program with one continuous variable. Then $(j,l)\in\D_t$ exactly when
$z^\star(\{1,\dots,t\})\ge0$, and a minimizer with $z^\star<0$ is a surviving ordering with
$(j,l)\in\T(\wo)$, a \emph{witness} that $(j,l)\notin\D_t$. Survival needs the strict inequality
$g_s(y)<0$ and certification its complement $z^\star\ge0$, so in exact arithmetic no tolerance is
needed. Algorithm~\ref{alg:ilp} keeps a pool of witnesses and drops one the first time its
average wealth $\Ep^\wo_t$ reaches $1/\alpha$. A witness $\wo$ \emph{covers} every pair in
$\T(\wo)$, and \eqref{eq:ilp} is solved only for pairs that no witness in the pool covers. The
program starts with few call-time constraints and adds them only when needed. This is safe
because a maximum over fewer times is no larger, so $z^\star(S)\ge0$ for any
$S\subseteq\{1,\dots,t\}$ already certifies the pair. Otherwise the minimizer is checked against
every call time, and a time at which it fails is added to $S$.
In exact arithmetic Algorithm~\ref{alg:ilp} computes $\D_t$ (Proposition~\ref{prop:ilp}). Under
numerical optimization the implementation, which solves the programs with HiGHS
\citep{huangfu2018} through SciPy \citep{virtanen2020scipy}, is conservative. It certifies a pair only when the solver's
lower bound on the optimum exceeds a tolerance of $10^{-9}\M(\M-1)/\alpha$, relative to the scale
of $g_s$. It accepts an ordering as a witness
only when it survives every call time by that margin, and it leaves a numerically ambiguous pair
unresolved until the next call. Hence within the solver's numerical guarantees it never
certifies more than $\D_t$. No such unresolved case arose in any run reported here.

\begin{algorithm}[H]
\caption{Exact certification by integer programming (any $\M$). Run after every update.}\label{alg:ilp}
\begin{algorithmic}[1]
\Require wealths $\Ep^{ab}_s$ at every call time $s\le t$; level $\alpha$; the certified set $\D$,
  the pool $\mathcal{Q}$ of witnesses (surviving weak orders found so far) and the set $S$ of
  constraint times, all initially empty
\State remove from $\mathcal{Q}$ every $\wo$ with $\Ep^\wo_t\ge1/\alpha$ \Comment{rejected for good}
\For{each $(j,l)\notin\D$ with $(j,l)\notin\T(\wo)$ for all $\wo\in\mathcal{Q}$, smallest $\Ep^{jl}_t$ first}
  \If{$B^{jl}_t\ge\M(\M-1)/\alpha$} add $(j,l)$ to $\D$ and \textbf{continue} \Comment{Proposition~\ref{prop:shortcut}}
  \EndIf
  \Loop
    \State solve \eqref{eq:ilp} for $S\cup\{t\}$: optimum $z^\star$, minimizer $\wo$
    \If{$z^\star\ge0$} add $(j,l)$ to $\D$ and \textbf{break}
    \EndIf
    \If{$\Ep^\wo_s<1/\alpha$ for every call time $s\le t$} add $\wo$ to $\mathcal{Q}$ and \textbf{break}
    \Else\ add to $S$ a call time $s$ with $\Ep^\wo_s\ge1/\alpha$
    \EndIf
  \EndLoop
\EndFor
\State \Return $\D$
\end{algorithmic}
\end{algorithm}

\begin{proposition}[Algorithm~\ref{alg:ilp} is exact]\label{prop:ilp}
For every $\M\ge2$, every $\alpha\in(0,1)$ and every sequence of wealth
matrices, one per call time, Algorithm~\ref{alg:ilp} in exact arithmetic returns the set $\D_t$
of Algorithm~\ref{alg:exact} at every call time $t$, and each inner loop ends after at most
$t$ solves.
\end{proposition}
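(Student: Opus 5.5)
The proof proceeds by induction over call times, with the claim that after the call at time $t$ the returned set equals $\D_t$ and every order in the pool $\mathcal{Q}$ lies in $\CS_t$. The first step is to establish the ILP characterization. The $0/1$ vectors satisfying totality $y_{ab}+y_{ba}\ge1$ and transitivity $y_{ab}+y_{bc}-y_{ac}\le1$ are in bijection with $\W$ via $y_{ab}=\ind{(a,b)\in\T(\wo)}$. Since $|\T(\wo)|\ge\M(\M-1)/2>0$, the condition $\Ep^\wo_s<1/\alpha$ is equivalent to $g_s(y)<0$. Hence $\max_{s\le t}g_s(y)<0$ if and only if $\wo\in\CS_t$. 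By definition, $(j,l)\notin\D_t$ if and only if some $\wo\in\CS_t$ has $j\wle_\wo l$, that is $y_{jl}=1$. Therefore $(j,l)\in\D_t$ if and only if $z^\star(\{1,\dots,t\})\ge0$. For any $S\subseteq\{1,\dots,t\}$, $z^\star(S)\le z^\star(\{1,\dots,t\})$, because the maximum over fewer constraints is smaller. So $z^\star(S\cup\{t\})\ge0$ already certifies, and the break on $z^\star\ge0$ is sound.

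The second step checks every branch of the algorithm. The shortcut branch adds only pairs of $\D_t$, by Proposition~\ref{prop:shortcut}. That result is taken as stated; it follows from the transitivity charging argument together with the fact that an average over $|\T(\wo)|\le\M(\M-1)$ terms is at least the sum divided by $\M(\M-1)$.

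Pairs already in $\D$ stay correct because $\CS_t\subseteq\CS_{t-1}$ makes $\D_t$ monotone. The first line prunes $\mathcal{Q}$ at time $t$, and by induction the pool was inside $\CS_{t-1}$, so after pruning $\mathcal{Q}\subseteq\CS_t$. A pair covered by a pooled witness therefore has a surviving order with $j\wle l$, so it is correctly left out of $\D$.

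For an uncovered pair, the loop can exit in two ways. It exits with certification only when $z^\star\ge0$, which is correct by the first step. It exits with a new witness only after checking that the minimizer survives all call times $s\le t$, so that witness lies in $\CS_t$ and correctly shows $(j,l)\notin\D_t$. Adding it keeps the invariant on $\mathcal{Q}$. Since every pair is either in $\D$, covered, or processed, the output is exactly $\D_t$.

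The third step is termination, which I expect to be the only delicate point. If the minimizer $y$ fails, the algorithm picks $s$ with $g_s(y)\ge0$ and $s\notin S\cup\{t\}$. Such an $s$ cannot be in $S\cup\{t\}$: for $s'\in S\cup\{t\}$ the constraint forces $g_{s'}(y)\le z^\star<0$. Thus each non-terminating iteration strictly enlarges $S\cup\{t\}$ within $\{1,\dots,t\}$. The first solve already includes $t$, so at most $t-1$ enlargements occur, and the loop ends after at most $t$ solves.

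The remaining detail is to confirm that $S$ persists across pairs and calls without harm. This holds because any subset of past call times gives a valid relaxation. It is also why the bound is stated per inner loop and not in total.
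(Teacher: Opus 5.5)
Your proposal is correct and follows essentially the paper's proof. It uses the same characterization $(j,l)\in\D_t\iff z^\star(\{1,\dots,t\})\ge0$ together with monotonicity of $z^\star$ in $S$, and the same invariants: the pool lies inside $\CS_t$, covered pairs are excluded from $\D_t$, every added pair is in $\D_t$, and $\D_t$ is monotone. Your termination argument also matches the paper's: each failed pass adds a new call time from $\{1,\dots,t-1\}$, so the loop ends after at most $t$ solves.
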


Exact certification is hard in the worst case. A decision problem is
\emph{coNP-hard} when every problem whose ``no'' answers can be checked in polynomial time reduces
to it. Such a problem has no polynomial-time algorithm unless $\mathrm{P}=\mathrm{NP}$.

\begin{proposition}[Certification is coNP-hard in the worst case]\label{prop:hard}
Deciding from a wealth matrix and a level $\alpha$ whether a given pair belongs to $\D_1$, the
certified set when the certifier is called once, on that matrix alone, is coNP-hard. The wealth
matrices of the reduction take the values $0$, $2^{\nu-1}$ and $2^\nu$ for an integer $\nu\ge1$,
have polynomial encoding size, and are produced by the general update \eqref{eq:wealth} of
Appendix~\ref{sec:eproc} with $N=\infty$ from polynomially many binary items with predictable
bets in $\{0,1\}$.
\end{proposition}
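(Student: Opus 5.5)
The plan is to reduce an NP-hard ordering problem to the \emph{complement} of membership in $\D_1$. With one call time and the survival rule of \eqref{eq:EW}, $(j,l)\notin\D_1$ holds exactly when some weak order $\wo$ with $(j,l)\in\T(\wo)$ satisfies $\Ep^\wo_1<1/\alpha$. Writing $c=1/\alpha$, this is $g(y)=\sum_{(a,b)\in\T(\wo)}(\Ep^{ab}-c)<0$, as in \eqref{eq:ilp}. A witness $\wo$ can be checked in polynomial time, so non-membership is in NP; showing it NP-hard shows membership is coNP-hard. I will take $c=2^{\nu-1}$, i.e.\ $\alpha=2^{1-\nu}$ with $\nu\ge2$. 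A wealth $0$, $2^{\nu-1}$ or $2^\nu$ then contributes $-c$ (a reward), $0$ (neutral) or $+c$ (a penalty) whenever its pair lies in $\T(\wo)$. The question becomes combinatorial: is there a weak order, with $j$ not above $l$, in which the rewarded pairs of $\T(\wo)$ outnumber the penalized ones?

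I would reduce from the minimum feedback arc set (linear ordering) problem. The input is a digraph $G$ on $n$ vertices and an integer $k$; the question is whether some linear order has at most $k$ arcs pointing backwards. Each vertex of $G$ becomes a model. For each arc $a\to b$ (meaning $a$ should rank above $b$), set $\Ep^{ba}=2^\nu$, a penalty if $b\wle_\wo a$, and $\Ep^{ab}=2^{\nu-1}$. Every other ordered pair among these models gets the neutral value $2^{\nu-1}$. In this encoding a tie between $a$ and $b$ puts both $(a,b)$ and $(b,a)$ in $\T(\wo)$, so it costs at least as much as ordering them either way. Hence breaking ties never increases $g$, and an optimal witness can be taken to be linear on the graph vertices; its penalty is then $c$ times the number of backward arcs. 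Two fresh models $j,l$, with all their pairs neutral, carry the constraint $(j,l)\in\T(\wo)$ at no cost.

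A threshold gadget turns ``at most $k$ backward arcs'' into ``$g<0$''. I would add $k+1$ auxiliary models $u_1,\dots,u_{k+1}$ and a reward gadget: set $\Ep^{u_iu_{i+1}}=0$ for each $i$, and $\Ep^{u_{k+1}u_1}=0$ unless that would double up; all remaining pairs touching auxiliaries are neutral. The aim is that the auxiliary pairs yield a total reward of exactly $(k+1)c$, independently of how the original vertices are placed. Then $g<0$ if and only if the best linear order has at most $k$ backward arcs. I expect this gadget to be the main obstacle. Rewards can be over-collected by tying auxiliaries, since a tie puts both directions in $\T(\wo)$, and I must check that no weak order collects more than the intended amount. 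The first attempt is a directed path $u_1\to\cdots\to u_{k+1}$ with a single reward per consecutive pair. I would then verify that ordering the auxiliaries along the path gains exactly $k$ rewards, and that any tie gains at most one extra reward while opening other pairs. If the count is off by a constant, I will shift $k$ by that constant, or add one neutral-to-penalty pair to compensate.

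Finally, realizability: the matrix must come from the general update \eqref{eq:wealth} with $N=\infty$, binary items and predictable bets in $\{0,1\}$. I would give every ordered pair $(a,b)$ its own dedicated block of items. In each item of the block, $x_a=1$ and all other scores are $0$, so $Z^{ab}=1$ on it, and the offset is $0$ under (S). A pair bets $\lambda=1$ only on its own block and $0$ elsewhere, which is predictable because the schedule is fixed in advance. With factor $1+Z\in\{0,1,2\}$, target $2^\nu$ uses $\nu$ items in its block and target $2^{\nu-1}$ uses $\nu-1$ items. Target $0$ uses one item with $x_b=1$, so $Z^{ab}=-1$; the reverse pair does not bet on that item, so it does not move. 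I would use a single-$\lambda$ grid (or define \eqref{eq:wealth} with one bettor) so that the values are exactly as intended. The total number of items is $O(\nu\M^2)$ with $\M=n+k+3$, which gives polynomial encoding size with $\nu$ a fixed small constant such as $\nu=2$.
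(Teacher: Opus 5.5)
Your reduction is correct, but it takes a different route from the paper's.

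The paper reduces from feedback arc set in tournaments and uses only the two extra models $j,l$. Vertex pairs get $2^\nu$ or $0$ according to the arc. Pairs between a vertex and $j$ or $l$ get $2^{\nu-1}$, with $\Ep^{jl}=0$ and $\Ep^{lj}=2^\nu$. The budget $K$ is encoded in an instance-dependent, non-dyadic level $\alpha=C'/\{2^\nu(K+\tfrac12+\M)\}$, where $C=\binom{\M}{2}$, $C'=\binom{\M+2}{2}$ and $2^\nu\ge C'$. The average in \eqref{eq:EW} has the tie-dependent denominator $|\T(\wo)|=C'+n_{\mathrm{tie}}$. The paper therefore needs an explicit computation, using $K\le\lfloor C/2\rfloor-1$, to show that ties never help, plus a halving argument inside tied vertex classes.

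You instead set the threshold exactly at the neutral value, $1/\alpha=2^{\nu-1}$. Survival then becomes the signed count $\sum_{(a,b)\in\T(\wo)}(\Ep^{ab}-1/\alpha)<0$, in which neutral pairs contribute nothing whatever the denominator. Ties among graph vertices are then harmless term by term, and the budget moves out of $\alpha$ into $k+1$ zero-wealth entries on auxiliary models. Your route buys a much simpler tie analysis. It reduces from feedback arc set in general digraphs, whose NP-hardness is classical, instead of the tournament result. It also gives hardness already at the fixed level $\alpha=1/2$ with $\nu=2$. The paper's version buys a matrix on only $\M+2$ models with no gadget.

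The step you flag as the main obstacle is in fact immediate. Every entry touching an auxiliary is neutral except the zeros, so $g$ splits into a graph part and an auxiliary part, each depending only on the corresponding restriction of $\wo$. The auxiliary part is at least $-1/\alpha$ times the number of zero entries, with equality when all auxiliaries are tied. So nothing can be over-collected, and your cycle gives exactly $k+1$ rewards.

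Only small normalizations remain:
\begin{itemize}
\item Take $k\ge1$, or use at least two auxiliaries.
\item Assume without loss of generality that $k$ is below the number of arcs, so the number of models stays polynomial.
\item Setting $\Ep^{ba}=2^\nu$ for an arc $a\to b$ penalizes placing $a$ weakly above $b$, that is, it counts forward arcs. This is harmless, since reversing a linear order exchanges forward and backward arcs.
\item If the digraph has 2-cycles, set $\Ep^{ba}=2^\nu$ for every arc $a\to b$ and $2^{\nu-1}$ on all remaining vertex pairs. This avoids the conflicting assignment $\Ep^{ab}=2^{\nu-1}$.
\end{itemize}
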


The reduction is from the feedback arc set problem in tournaments \citep{alon2006,charbit2007}.
A tournament is a directed graph with exactly one arc between every two vertices, and the problem
asks for an ordering of its vertices with the fewest arcs pointing backward. The reduction uses
bets in $\{0,1\}$, and Appendix~\ref{app:rigor} says what remains open for the default mixture
bet. Each integer program \eqref{eq:ilp} has $\M(\M-1)$ binary variables, one continuous variable
and $\M(\M-1)(\M-2)$ transitivity constraints. Algorithm~\ref{alg:ilp} solved every instance in
our experiments, and Appendix~\ref{app:ecorr} reports its running time.

An implementation stores $\log\Ep^{jl}_{\lambda,t}$ instead of $\Ep^{jl}_{\lambda,t}$ to avoid
overflow. As for cost, the update is $O(\M^2)$ per item, Algorithm~\ref{alg:shortcut} is $O(\M^3)$ and
Algorithm~\ref{alg:exact} is $O(\M^2)$ per surviving ordering. In NumPy on one core the update
costs $25\,\mu$s per item for $20$ models and $47\,\mu$s for $50$, the shortcut $0.07$\,ms and
$0.6$\,ms per call, and the enumeration $58$\,ms per call for $8$ models while every ordering
survives, falling as they are discarded. At each call Algorithm~\ref{alg:ilp} solves no integer
program for a pair that is already certified, that the shortcut certifies or that a witness in the
pool covers. For any other pair it solves at least one and at most $t$ (Proposition~\ref{prop:ilp}).

\section{Supporting theory}
\label{app:theory}

In Appendices~\ref{app:theory} to~\ref{app:experiments}, the true ability vector $\thetastar$ is also written $\theta$, as in Sections~\ref{sec:setup} and~\ref{sec:method}.

\subsection{Validity and growth of each pairwise wealth}
\label{sec:eproc}

A large wealth must be improbable when the hypothesis it argues against is in fact true, however
often and whenever the analyst looks. Fix $(j,l)\in\pairs$, write $Z_t=Z^{jl}_t$, and consider
the general form of the update \eqref{eq:update}, $\Ep^{jl}_0=1$ and
\begin{equation}\label{eq:wealth}
  \Ep^{jl}_t=\Ep^{jl}_{t-1}\Bigl(1+\lambda_t\,\frac{Z_t-b_t}{1+b_t}\Bigr),
\end{equation}
in which the offset $b_t\in(-1,1]$ and the bet $\lambda_t\in[0,1]$ are \emph{predictable}, that
is, determined by the information $\F_{t-1}$. Each $\Ep^{jl}_{\lambda,t}$ of \eqref{eq:update}
has this form with $\lambda_t\equiv\lambda$, and so does the grid average $\Ep^{jl}_t$, with
$\lambda_t$ the mean of $\Lambda$ weighted by the wealths $\Ep^{jl}_{\lambda,t-1}$
(Remark~\ref{rem:bets}). The factor in \eqref{eq:wealth} is nonnegative,
and its conditional expectation given $\F_{t-1}$ is at most one whenever
$\E[Z_t\mid\F_{t-1}]\le b_t$. The wealth is then a nonnegative \emph{supermartingale}: given the
past, its expected next value is at most its current value. It is therefore an
e-process for every predictable bet and stays one when frozen at a stopping time. The offsets
\eqref{eq:offset} satisfy that condition for every law with $\theta_j\le\theta_l$, and the
wealth of \eqref{eq:update} is an average over $\lambda\in\Lambda$ of such processes. The choice
of bets therefore affects power and never validity.

The pairwise construction follows the betting approach of \citet{waudbysmith2024}, including,
under (F), its version for sampling without replacement. \citet{waudbysmith2020} built earlier
confidence sequences for that setting without betting.

\begin{lemma}[Pairwise betting supermartingale]\label{lem:betting}
Let $P$ be a law under which $\E_P[Z_t\mid\F_{t-1}]\le b_t$ almost surely for every $t$. Then
$(\Ep^{jl}_t)$ of \eqref{eq:wealth} is a nonnegative $(\F_t)$-supermartingale under $P$ with
$\Ep^{jl}_0=1$, for every predictable choice of $(\lambda_t)$. If the process is frozen at an
$(\F_t)$-stopping time $\sigma$, that is, replaced by $\Ep^{jl}_{t\wedge\sigma}$ where
$t\wedge\sigma=\min\{t,\sigma\}$, the conclusion is unchanged.
\end{lemma}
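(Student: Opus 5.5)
The plan is to verify the supermartingale property factor by factor and then obtain the frozen version from the optional stopping argument for nonnegative supermartingales. Write $F_t=1+\lambda_t(Z_t-b_t)/(1+b_t)$, so that $\Ep^{jl}_t=\Ep^{jl}_{t-1}F_t$.

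First I would check that $F_t\ge0$ and that $\Ep^{jl}_t$ is adapted and integrable. Since $Z_t\in[-1,1]$ and $b_t\in(-1,1]$, we have $Z_t-b_t\ge-1-b_t$. Hence $(Z_t-b_t)/(1+b_t)\ge-1$, and because $\lambda_t\in[0,1]$ this gives $F_t\ge1-\lambda_t\ge0$. From above, $(Z_t-b_t)/(1+b_t)\le 2/(1+b_t)$, which is finite for each realization but not uniformly bounded. I will therefore avoid assuming integrability and prove $\E[\Ep^{jl}_t\mid\F_{t-1}]\le\Ep^{jl}_{t-1}$ using conditional expectations of nonnegative variables, which are always defined in $[0,\infty]$. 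Integrability then follows by induction: $\E\Ep^{jl}_t\le\E\Ep^{jl}_{t-1}\le\dots\le1$. Adaptedness holds because $Z_t$ is $\F_t$-measurable and $\lambda_t,b_t$ are $\F_{t-1}$-measurable.

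The key step is the conditional expectation of $F_t$. Since $\lambda_t$ and $b_t$ are predictable and $1+b_t>0$, we can pull them out:
\[
\E_P[F_t\mid\F_{t-1}]=1+\frac{\lambda_t}{1+b_t}\bigl(\E_P[Z_t\mid\F_{t-1}]-b_t\bigr)\le1 .
\]
Pulling them out requires $Z_t$ to be bounded, which it is. Then $\E[\Ep^{jl}_t\mid\F_{t-1}]=\Ep^{jl}_{t-1}\E[F_t\mid\F_{t-1}]\le\Ep^{jl}_{t-1}$, because $\Ep^{jl}_{t-1}\ge0$ is $\F_{t-1}$-measurable. To keep the argument clean, I would first work on the events $\{\Ep^{jl}_{t-1}\le n\}$ and then let $n\to\infty$ by monotone convergence.

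For freezing, I would write
\[
\Ep^{jl}_{t\wedge\sigma}=\Ep^{jl}_{(t-1)\wedge\sigma}\bigl(1+\ind{\sigma\ge t}(F_t-1)\bigr).
\]
Here $\ind{\sigma\ge t}=1-\ind{\sigma\le t-1}$ is $\F_{t-1}$-measurable. So the frozen process has the same form as \eqref{eq:wealth}, with the predictable bet $\lambda_t\ind{\sigma\ge t}\in[0,1]$, and the first part applies unchanged. Alternatively, one can cite the standard fact that a stopped supermartingale is a supermartingale.

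The only delicate point is integrability, since the factor $F_t$ is unbounded as $b_t\downarrow-1$. The truncation to $[-0.99,1]$ in \eqref{eq:offset} makes $F_t$ bounded by $1+2/0.01$ in the default method. In the general case the nonnegativity and induction argument above handles it.
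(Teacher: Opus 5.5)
Your proof is correct and follows the paper's main line. Each factor is nonnegative, its conditional expectation given $\F_{t-1}$ is at most one, multiplying by the nonnegative $\F_{t-1}$-measurable $\Ep^{jl}_{t-1}$ gives the supermartingale inequality, and integrability comes from the induction $\E[\Ep^{jl}_t]\le1$. Two differences in execution are worth recording.

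First, in the key step the quantity that can be unbounded is the predictable coefficient $\lambda_t/(1+b_t)$, not $Z_t$. Your localization on $\{\Ep^{jl}_{t-1}\le n\}$ is therefore aimed at the wrong factor. The identity you write is still true: localize on the $\F_{t-1}$-events $\{\lambda_t/(1+b_t)\le n\}$, on which everything is bounded and which exhaust the sample space, with $F_t\ge0$ guaranteeing that its conditional expectation is defined. The paper avoids the question altogether by writing the factor as $1-\lambda_t+\frac{\lambda_t}{1+b_t}(1+Z_t)$, a sum of two nonnegative terms, so only the nonnegative form of taking out what is known is needed.

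Second, for freezing the paper computes the stopped increment $\ind{\sigma\ge t}(\Ep^{jl}_t-\Ep^{jl}_{t-1})$ and uses $\{\sigma\ge t\}\in\F_{t-1}$. You instead note that $\Ep^{jl}_{t\wedge\sigma}$ is itself a wealth of the form \eqref{eq:wealth}, with the predictable bet $\lambda_t\ind{\sigma\ge t}\in[0,1]$ and the same offsets. That reduction is valid and needs no separate computation. It also makes explicit that a frozen wealth is simply a wealth that bets zero after $\sigma$, which is how retired pairs are treated in Step~4.
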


\begin{proposition}[Both sampling models]\label{prop:instances}
In each case below, for every $(j,l)\in\pairs$, every predictable bet $(\lambda_t)$ with values
in $[0,1]$, and every data-generating law under which $\Hyp_{jl}$ is true, $(\Ep^{jl}_t)$ is a
nonnegative $(\F_t)$-supermartingale started at $1$, hence an e-process for $\Hyp_{jl}$. This
holds whatever the other models' parameters and whatever the dependence between models.
\begin{enumerate}[label=(\alph*),leftmargin=*,itemsep=2pt]
\item Under \textup{(F)}, with $\xi$ independent of $\Pi$, for $t\le N$, with
  \begin{equation}\label{eq:bfinite}
    b_t=\max\Bigl\{-1+\delta,\ \min\Bigl\{1,\ \frac{-S_{t-1}}{N-t+1}\Bigr\}\Bigr\},
    \qquad S_{t-1}=\sum_{s<t}Z_s ,
  \end{equation}
  for any fixed $\delta\in(0,1]$, and $\Ep^{jl}_t=\Ep^{jl}_N$ for $t>N$.
\item Under \textup{(S)}, with $b_t\equiv0$.
\end{enumerate}
\end{proposition}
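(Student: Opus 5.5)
The plan is to reduce both cases to Lemma~\ref{lem:betting}. That lemma needs three things: the factor in \eqref{eq:wealth} is nonnegative, $b_t$ and $\lambda_t$ are $\F_{t-1}$-measurable, and $\E_P[Z_t\mid\F_{t-1}]\le b_t$ almost surely whenever $\Hyp_{jl}$ holds. Once these hold, the wealth is a nonnegative supermartingale started at $1$. Optional stopping for bounded stopping times then gives $\E_P[\Ep^{jl}_\tau]\le1$, which is the e-process property. No joint assumption across models is needed, because only the marginal conditional mean of $Z_t=X_{tj}-X_{tl}$ enters. Nonnegativity is immediate. Since $Z_t\ge-1$, $b_t\le1$ and $b_t>-1$, we have $(Z_t-b_t)/(1+b_t)\ge(-1-b_t)/(1+b_t)=-1$, so for $\lambda_t\in[0,1]$ the factor $1+\lambda_t(Z_t-b_t)/(1+b_t)$ is at least $0$.

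\textbf{Case (b).} Under (S), $b_t\equiv0$ is trivially predictable. Since $X_t$ is independent of $\F_{t-1}=\sigma(X_1,\dots,X_{t-1},\xi)$, we get $\E[Z_t\mid\F_{t-1}]=\theta_j-\theta_l\le0=b_t$, so the lemma applies directly.

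\textbf{Case (a).} Under (F), $S_{t-1}$ is a function of $\Pi(1),\dots,\Pi(t-1)$, so $b_t$ in \eqref{eq:bfinite} is $\F_{t-1}$-measurable. It also lies in $[-1+\delta,1]\subset(-1,1]$. For the conditional mean, note that $\xi$ is independent of $\Pi$ and $\Pi$ is uniform. Hence, given $\F_{t-1}$, $\Pi(t)$ is uniform on the $N-t+1$ unrevealed indices. This gives
\[
\E[Z_t\mid\F_{t-1}]=\frac{N(\theta_j-\theta_l)-S_{t-1}}{N-t+1}\le\frac{-S_{t-1}}{N-t+1}.
\]

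\textbf{Main obstacle: the truncation.} The bound above controls the untruncated ratio, but $b_t$ is clipped. Clipping from below only enlarges $b_t$, so it is harmless. Clipping from above replaces the ratio by $1$ only when the ratio exceeds $1$. In that event we still have $\E[Z_t\mid\F_{t-1}]\le1=b_t$, because $Z_t\le1$ always. So in every event $\E[Z_t\mid\F_{t-1}]\le b_t$.

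\textbf{Remaining details.} For $t>N$ the process is held constant at $\Ep^{jl}_N$, which preserves the supermartingale property trivially. The statement allows any predictable bet, including the grid-average $\lambda_t$ of Remark~\ref{rem:bets}. In every case, Lemma~\ref{lem:betting} then yields a nonnegative $(\F_t)$-supermartingale with initial value $1$, and hence an e-process for $\Hyp_{jl}$.
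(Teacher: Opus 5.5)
Your proposal is correct and follows the paper's proof essentially step for step. Both reduce the claim to Lemma~\ref{lem:betting} by checking that $b_t$ is predictable with $b_t\in(-1,1]$ and that $\E[Z_t\mid\F_{t-1}]\le b_t$: under (F) via the uniformity of $\Pi(t)$ given $\F_{t-1}$, with the upper clip absorbed by $Z_t\le 1$ and the lower clip only raising $b_t$; and under (S) via the independence of $X_t$ from $\F_{t-1}$.
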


\begin{remark}[Bets]\label{rem:bets}
Validity holds for every predictable $(\lambda_t)$, so the bet is chosen for power, and a
constant bet is a poor choice when gaps differ across pairs. For binary scores under (S) (the
finite case behaves the same away from the end of the benchmark), let
$\Delta=\theta_j-\theta_l>0$ and let $\rho=\Pr(Z\ne0)$ be the \emph{discordance rate}, the
probability that the two models score differently on an item. Here $Z$ stands for a generic
paired difference $Z^{jl}_t$. The expected log-growth of a constant bet is
$\E\log(1+\lambda Z)\approx\lambda\Delta-\lambda^2\rho/2$: it is maximized at
$\lambda^\star=\Delta/\rho$, which is in fact the exact maximizer when $\Delta<\rho$, and is
\emph{negative} beyond about $2\lambda^\star$. By the strong law of large numbers, a constant
bet with negative expected log-growth drives the wealth of a pair with small $\Delta/\rho$,
close relative to its discordance rate, to zero almost surely. Such a pair may therefore never
be certified, however long the evaluation runs (an early crossing of the threshold remains
possible). Two predictable alternatives avoid this, and a third mitigates it. The first is the
plug-in bet $\lambda_t=\min\{\bar\lambda,\max\{0,\widehat{\E}_{t-1}Y/\widehat{\E}_{t-1}Y^2\}\}$. Here
$Y_s=(Z_s-b_s)/(1+b_s)$, $\widehat{\E}_{t-1}$ is the empirical average over items $1,\dots,t-1$,
and $\bar\lambda<1$ is fixed. This bet approximates GRAPA, the ``growth rate adaptive to the particular
alternative'' bet \citep{waudbysmith2024}, which maximizes the average log-growth over the
past items. Clipping to $[0,\bar\lambda]\subset[0,1]$ meets the requirement of Lemma~\ref{lem:betting}, and the cap
$\bar\lambda<1$ keeps every factor at least $1-\bar\lambda>0$. The bet sets $\lambda_t=0$
whenever $\widehat{\E}_{t-1}Y^2=0$ (in particular $\lambda_1=0$). The second is online Newton
steps \citep{cutkosky2018,waudbysmith2024}, a second-order gradient update of $\lambda_t$ on the running log-wealth, projected onto
$[0,\bar\lambda]$. The third is a \emph{mixture}. For constants $\lambda^{(1)},\dots,\lambda^{(|\Lambda|)}$ let
$\Ep^{(k)}_t=\prod_{s\le t}(1+\lambda^{(k)}Y_s)$ be the wealth of the $k$th constant; the
mixture is the average $|\Lambda|^{-1}\sum_k\Ep^{(k)}_t$. It is an average of
supermartingales for one filtration, hence a supermartingale. It is also the wealth of the
single predictable bet $\lambda_t=\sum_k\Ep^{(k)}_{t-1}\lambda^{(k)}/\sum_k\Ep^{(k)}_{t-1}$, the
wealth-weighted mean of the grid (its denominator is positive because every $\lambda^{(k)}<1$),
so Lemma~\ref{lem:betting} applies verbatim. A mixture over a fixed grid covers the range of
$\Delta/\rho$ the grid spans but not arbitrarily small ratios. Once $\Delta/\rho$ falls below
about half the smallest grid value, every component has negative expected log-growth and the
mixture wealth also tends to zero (Proposition~\ref{prop:power}). The mixture trails the best
constant on the grid by at most a factor $|\Lambda|$ in wealth, that is, by $\log |\Lambda|$ in the threshold,
and has no learning phase. The bet for pair $(j,l)$ may depend on the whole past of all models.
\end{remark}

Under (S) the log-wealth of a constant bet is a sum of independent bounded terms, which yields
the growth rate of every bet on the grid, a consistency condition for the mixture and explicit
certification-time bounds, with no restriction on the dependence between models within an item.
Fix $(j,l)\in\pairs$ with $\Delta=\theta_j-\theta_l>0$, write $Z_t=Z^{jl}_t$, $\Ep_t=\Ep^{jl}_t$ for the grid average of \eqref{eq:update} and
$\Ep^\lambda_t=\Ep^{jl}_{\lambda,t}$ for the wealth of the constant bet $\lambda$, and let
$\lambda_{\min}$ be the smallest element of the grid
$\Lambda=\{\lambda^{(1)},\dots,\lambda^{(|\Lambda|)}\}\subset(0,\tfrac12]$. Let $\mu(\lambda)=\E\log(1+\lambda Z_1)$, $\rho=\E[Z_1^2]$, which is the
discordance rate of Remark~\ref{rem:bets} for binary scores,
$c_\lambda=\log\frac{1+\lambda}{1-\lambda}$,
$v_\lambda=\operatorname{Var}\log(1+\lambda Z_1)$ and $\eta=\log(|\Lambda|\M(\M-1)/\alpha)$. Let
$\tau^{B}=\inf\{t:\Ep_t\ge\M(\M-1)/\alpha\}$ be the e-Bonferroni certification time of $(j,l)$,
and let $\tau^{\mathrm{sc}}$ and $\tau$ be the times at which $(j,l)$ enters $\D^{\mathrm{sc}}_t$ and $\D_t$, respectively, where $\D^{\mathrm{sc}}_t$ is the set
certified by the shortcut (Proposition~\ref{prop:shortcut}). On every path the exact test certifies no later than the shortcut, and the shortcut no
later than e-Bonferroni.

\begin{proposition}[Growth, consistency and certification time under (S)]\label{prop:power}
Assume \textup{(S)}, no retirement, and $\Delta>0$.
\begin{enumerate}[label=(\alph*),leftmargin=*,itemsep=2pt]
\item \emph{Growth.} For $\lambda\in(0,1)$, $t^{-1}\log\Ep^{\lambda}_t\to\mu(\lambda)$ almost
  surely; $\mu$ is concave with $\mu(0)=0$, and for $\lambda\le\tfrac12$,
  $\lambda\Delta-\lambda^2\rho\le\mu(\lambda)\le\lambda\Delta$, so $\mu(\lambda)>0$ whenever
  $0<\lambda<\Delta/\rho$.
\item \emph{Consistency.} If $\mu(\lambda_{\min})>0$ then $\tau^{B}<\infty$ almost surely and
  $(j,l)\in\D_t$ for all $t\ge\tau^{B}$; if $\mu(\lambda_{\min})<0$ then every bet on the grid
  has negative growth and $\Ep_t\to0$ almost surely. The first case holds whenever
  $\lambda_{\min}<\Delta/\rho$; for binary scores,
  $\mu(\lambda)=\tfrac12\rho\,c_\lambda\bigl(\Delta/\rho-x_0(\lambda)\bigr)$ with
  $x_0(\lambda)=\log\frac{1}{1-\lambda^2}/c_\lambda$, so it holds if and only if
  $\Delta/\rho>x_0(\lambda_{\min})$, which is $\Delta/\rho>0.015$ for the default grid. For
  binary scores no finite grid is consistent for every $\Delta>0$: pairs with
  $0<\Delta/\rho<x_0(\lambda_{\min})$ have negative growth on every grid bet. Consistency for
  every positive gap would require bets approaching zero, which we do not pursue here.
\item \emph{Certification time.} Let $\lambda\in\Lambda$ have $\mu=\mu(\lambda)>0$ and write
  $c=c_\lambda$. For every integer $t\ge2\eta/\mu$,
  \begin{equation}\label{eq:tail}
    \Pr(\tau^{B}>t)\le\exp\Bigl(-\frac{t\mu^2}{2c^2}\Bigr)
    \quad\text{and}\quad
    \Pr(\tau^{B}>t)\le\exp\Bigl(-\frac{t\mu^2}{8v_\lambda+\frac43c\mu}\Bigr),
    \qquad v_\lambda\le4\lambda^2\rho .
  \end{equation}
  Define the rate and the horizon
  \[
    a_\lambda=\max\Bigl\{\frac{\mu^2}{2c^2},\ \frac{\mu^2}{8v_\lambda+\frac43c\mu}\Bigr\},
    \qquad
    t^\star(\varepsilon)=\max\Bigl\{\frac{2\eta}{\mu},\ \frac{\log(1/\varepsilon)}{a_\lambda}\Bigr\}.
  \]
  Then, for every
  $\varepsilon\in(0,1)$, $\Pr(\tau^{B}\le\lceil t^\star(\varepsilon)\rceil)\ge1-\varepsilon$,
  and $\E[\tau^{B}]\le2\eta/\mu+1/a_\lambda+2$; the same bounds hold for $\tau^{\mathrm{sc}}$
  and $\tau$.
\item \emph{Default grid.} If $0.06\le\Delta/\rho\le1$, some grid bet lies in
  $[\Delta/(4.2\rho),\Delta/(2\rho)]$, for which $0.18\,\Delta^2/\rho\le\mu\le\Delta^2/(2\rho)$
  and $c\le1.1\,\Delta/\rho$, and then
  \begin{equation}\label{eq:rate}
    t^\star(\varepsilon)\le\frac{11.2\,\rho\eta+\min\{74.7,\,270\rho\}\log(1/\varepsilon)}{\Delta^2},
    \quad
    \E[\tau^{B}]\le\frac{11.2\,\rho\eta+\min\{74.7,\,270\rho\}}{\Delta^2}+2 ,
  \end{equation}
  so certification takes $O(\rho\,\Delta^{-2}\log(|\Lambda|\M^2/(\alpha\varepsilon)))$ items. For
  binary scores with $x_0(0.03)<\Delta/\rho<0.06$ only the general bound above applies, with the exact growth
  $\mu(0.03)=\tfrac12\rho\,c_{0.03}(\Delta/\rho-x_0(0.03))$, which tends to zero as
  $\Delta/\rho\downarrow x_0(0.03)\approx0.015$: a fixed grid has no uniform rate over its
  positive-growth pairs.
\end{enumerate}
\end{proposition}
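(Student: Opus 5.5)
The plan is to work throughout with the log-wealth of a single constant bet, $\log\Ep^\lambda_t=\sum_{s\le t}\log(1+\lambda Z_s)$. Under (S) with $b_t\equiv0$, this is a sum of i.i.d.\ terms. Each term lies in $[\log(1-\lambda),\log(1+\lambda)]$, an interval of length $c_\lambda$.

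\textbf{Part (a).} The strong law of large numbers gives $t^{-1}\log\Ep^\lambda_t\to\mu(\lambda)$ almost surely. Concavity of $\mu$ follows from concavity of $\lambda\mapsto\log(1+\lambda z)$, and $\mu(0)=0$ is immediate. For the two-sided bound I would use elementary inequalities. The upper bound follows from $\log(1+x)\le x$, since $\E[\lambda Z_1]=\lambda\Delta$. The lower bound uses $\log(1+x)\ge x-x^2$ for $|x|\le\tfrac12$, which is applicable because $|\lambda Z|\le\lambda\le\tfrac12$; taking expectations gives $\lambda\Delta-\lambda^2\rho$.

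\textbf{Part (b).} Since $\Ep_t\ge|\Lambda|^{-1}\Ep^{\lambda_{\min}}_t$, positive growth of that single component drives $\Ep_t\to\infty$ almost surely, so $\tau^B<\infty$. Certification then persists for all later $t$, because e-Bonferroni rejection implies $(j,l)\in\D_t$: every $\wo$ with $(j,l)\in\T(\wo)$ has $\Ep^\wo\ge\Ep^{jl}/|\T(\wo)|\ge\Ep^{jl}/(\M(\M-1))\ge1/\alpha$, and elimination is permanent. If instead $\mu(\lambda_{\min})<0$, concavity together with $\mu(0)=0$ gives $\mu(\lambda)\le(\lambda/\lambda_{\min})\mu(\lambda_{\min})<0$ for every larger grid value. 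Each component then tends to $0$ almost surely, and so does their finite average. The sufficient condition $\lambda_{\min}<\Delta/\rho$ comes from part (a).

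For binary $Z\in\{-1,0,1\}$ I would write $p_\pm=\Pr(Z=\pm1)$, so that $\rho=p_++p_-$ and $\Delta=p_+-p_-$. Then $\mu=p_+\log(1+\lambda)+p_-\log(1-\lambda)$, which rearranges to $\tfrac12\Delta c_\lambda+\tfrac12\rho\log(1-\lambda^2)$. This is the stated formula, and the default-grid value $x_0(0.03)\approx0.015$ is numeric. The claim that no finite grid works for all gaps follows because $x_0(\lambda_{\min})>0$.

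\textbf{Part (c).} This is the main obstacle. The first step is to bound the stopping time through one grid bet: $\{\tau^B>t\}\subseteq\{\log\Ep^\lambda_t<\eta\}$, since $\Ep_t\ge|\Lambda|^{-1}\Ep^\lambda_t$. For $t\ge2\eta/\mu$ we have $\eta\le t\mu/2$, so this event implies that the centred sum $\sum_{s\le t}(\log(1+\lambda Z_s)-\mu)$ is at most $-t\mu/2$. Hoeffding's inequality, with range $c$, gives $\exp(-2(t\mu/2)^2/(tc^2))=\exp(-t\mu^2/(2c^2))$. Bernstein's inequality, with variance $v_\lambda$ and range $c$, gives $\exp\bigl(-(t\mu/2)^2/(2tv_\lambda+\tfrac23ct\mu/2)\bigr)$, which simplifies to the second expression. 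For the variance bound, $v_\lambda\le\E\log^2(1+\lambda Z)$, and $|\log(1+x)|\le2|x|$ for $|x|\le\tfrac12$ gives $v_\lambda\le4\lambda^2\rho$.

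Taking the better of the two rates, $a_\lambda$, gives $\Pr(\tau^B>t)\le e^{-a_\lambda t}$ for every $t\ge2\eta/\mu$. The high-probability statement follows by choosing $t=\lceil t^\star\rceil$. For the expectation, sum the tail: $\E\tau^B=\sum_{t\ge0}\Pr(\tau^B>t)$. Bound the terms up to $\lceil2\eta/\mu\rceil$ by $1$ and the remaining geometric tail by $1/(1-e^{-a})\le1/a+1$, which gives $2\eta/\mu+1/a_\lambda+2$. Path-wise ordering, $\tau\le\tau^{\mathrm{sc}}\le\tau^B$ as stated before the proposition, transfers both bounds to the other two certification times.

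\textbf{Part (d).} The default grid has consecutive ratios of about $2$, so for $\Delta/\rho\in[0.06,1]$ some grid value lies in $[\Delta/(4.2\rho),\Delta/(2\rho)]$. The endpoints need separate checking, for example $0.5\ge1/2$ and $0.03\ge0.06/4.2$. On this range, part (a) gives $\mu\ge\lambda\Delta-\lambda^2\rho$; minimising over the interval gives $\mu\ge0.18\Delta^2/\rho$, and the upper bound gives $\mu\le\Delta^2/(2\rho)$. Also $c_\lambda\le2\lambda/(1-\lambda^2)\cdot$const $\le1.1\Delta/\rho$. Substituting these into $t^\star$, with $2\eta/\mu\le11.2\rho\eta/\Delta^2$, bounds $1/a_\lambda$ by both the Hoeffding term ($\le74.7/\Delta^2$) and the Bernstein term ($\le270\rho/\Delta^2$, using $v\le4\lambda^2\rho$). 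The constants are then routine arithmetic, and the final remark about grid values below $0.06$ restates part (b).
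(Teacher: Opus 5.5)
This is essentially the paper's proof: the same strong law and elementary $\log(1+u)$ bounds in (a), the single-component and concavity arguments in (b), Hoeffding and Bernstein at deviation $t\mu/2$ plus the tail sum in (c), and the grid-ratio argument in (d), with your direct check that an e-Bonferroni rejection eliminates every order containing $(j,l)$ standing in for the citation of Proposition~\ref{prop:shortcut}. The one step to tighten is $c\le1.1\,\Delta/\rho$ in (d): as written, $c_\lambda\le2\lambda/(1-\lambda^2)$ yields only $c\le\tfrac43\,\Delta/\rho$, which would push the constants $74.7$ and $270$ to about $110$ and $274$; instead use that $c_\lambda/\lambda=2\operatorname{artanh}(\lambda)/\lambda$ is increasing, so $c_\lambda\le2\lambda\log3<2.2\lambda\le1.1\,\Delta/\rho$ for $\lambda\le\tfrac12$, and note that bounding the $\tfrac43c\mu$ term also uses $\Delta\le\rho$.
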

\subsection{Supplement to the validity theorem}
\label{app:closedsupp}

\begin{remark}[The restriction must be stated through $\T(\wo)$]\label{rem:vacuous}
Every subset $\mathcal{S}\subseteq\pairs$ of the one-sided hypotheses $\Hyp_{jl}$ is
\emph{jointly satisfiable}, since $\theta_1=\dots=\theta_\M$ makes all of them true. Requiring
only that an intersection be consistent with some weak order therefore excludes nothing. The
restriction that does exclude intersections is that of \citet{shaffer1986}: $\mathcal{S}$ needs a test only if it can be the \emph{exact}
set of true hypotheses, that is, $\mathcal{S}=\T(\wo)$ for some $\wo$. The full closure tests the
$2^{\M(\M-1)}-1$ intersection hypotheses $\bigcap_{(j,l)\in\mathcal{S}}\Hyp_{jl}$, one per
nonempty $\mathcal{S}\subseteq\pairs$. Of these, $|\W|$ remain: $13$ of $63$ for $\M=3$, $75$ of
$4095$ for $\M=4$. For $\M=2$ nothing is saved.
\end{remark}

The family of intersections $\bigcap_{(j,l)\in\T(\wo)}\Hyp_{jl}$, $\wo\in\W$, that
Algorithm~\ref{alg:exact} tests is not itself closed under intersection. This does no harm: the
proof of Theorem~\ref{thm:validity} uses only that the cell
$\Theta_{\wostar}=\{\theta:\wo(\theta)=\wostar\}$ containing the truth is rejected with
probability at most $\alpha$. The restriction to exactly realizable
intersections is also that of \citet{westfall1997}.

\begin{proposition}[The output is a partial order]\label{prop:structure}
On the event $\{\CS_t\ne\emptyset\}$, which contains the event in \eqref{eq:cover}, $\D_t$ is a
strict partial order on $\Mset$: irreflexive, asymmetric and transitive. If $\CS_t=\emptyset$ then
$\D_t=\pairs$; this has probability at most $\alpha$ uniformly in $t$, and it reveals that an error
has occurred.
\end{proposition}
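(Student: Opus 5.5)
We prove the two claims separately. Throughout, $\D_t=\{(j,l)\in\pairs:\ l\wlt_\wo j\text{ for all }\wo\in\CS_t\}$ as in \eqref{eq:D}.

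\emph{Structure when $\CS_t\ne\emptyset$.} Fix any surviving ordering $\wo_0\in\CS_t$. Every $(j,l)\in\D_t$ satisfies $l\wlt_{\wo_0}j$. So $\D_t$ is contained in the strict part $\{(j,l):\ l\wlt_{\wo_0}j\}$ of a single weak order, and that strict part is irreflexive and asymmetric. These two properties pass to subsets:
\begin{itemize}
\item \emph{Irreflexivity} is in any case immediate from $\pairs$ excluding diagonal pairs.
\item \emph{Asymmetry:} if both $(j,l)$ and $(l,j)$ were in $\D_t$, then $\wo_0$ would have $l\wlt_{\wo_0}j$ and $j\wlt_{\wo_0}l$, which is impossible.
\end{itemize}

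\emph{Transitivity.} Suppose $(j,l),(l,m)\in\D_t$. The case $j=m$ is excluded by asymmetry, so $j,l,m$ are distinct. For every $\wo\in\CS_t$ we have $l\wlt_\wo j$ and $m\wlt_\wo l$. Since $\wle_\wo$ is transitive, its strict part is transitive too (level vectors make this clear: $v_m<v_l<v_j$). Hence $m\wlt_\wo j$ for all $\wo\in\CS_t$, that is, $(j,m)\in\D_t$.

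\emph{Containment of events.} The event in \eqref{eq:cover} requires $\wostar\in\CS_t$ for all $t$, so it is contained in $\{\CS_t\ne\emptyset\}$ for each $t$.

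\emph{The empty case.} If $\CS_t=\emptyset$, the condition ``for all $\wo\in\CS_t$'' is vacuously true for every pair, so $\D_t=\pairs$. This event lies in the complement of the event in \eqref{eq:cover}. By Theorem~\ref{thm:validity}, that complement has probability at most $\alpha$, and the bound is uniform in $t$: it holds for the event $\exists t:\CS_t=\emptyset$ and for random times as well.

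\emph{Why this reveals an error.} $\pairs$ contains both $(j,l)$ and $(l,j)$ for some pair, since $\M\ge2$. The truth cannot satisfy both $\theta_j>\theta_l$ and $\theta_l>\theta_j$, so the output contains a false dominance. Equivalently, $\D_t$ is visibly cyclic, which signals an error.

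There is no real obstacle here. The only points needing care are the vacuous-quantifier case and the observation that the error probability bound is inherited directly from \eqref{eq:cover}.
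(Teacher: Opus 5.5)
Your proof is correct and follows the paper's argument essentially step for step. You fix a surviving order, obtain asymmetry and transitivity from its strict part (via the level vector, where the paper argues directly from transitivity of $\wle_\wo$), and handle the empty case by the vacuous quantifier together with the inclusion $\{\exists t:\CS_t=\emptyset\}\subseteq\{\exists t:\wostar\notin\CS_t\}$ and \eqref{eq:cover}.
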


\subsection{Testing only orderings loses nothing, and the shortcut is valid}
\label{sec:power}

Testing only the orderings instead of every set of pairwise hypotheses restricts the closed
test. Both claims of Section~\ref{sec:closed} about what this restriction costs hold on every
data set, not merely with high probability. Both compare procedures applied to the same wealths.

\begin{proposition}[Restriction never hurts]\label{prop:dominance}
Let $\D^{\mathrm{full}}_t$ be the output of the unrestricted closed test that rejects $\Hyp_{jl}$
once $\max_{s\le t}|\mathcal{S}|^{-1}\sum_{(a,b)\in\mathcal{S}}\Ep^{ab}_s\ge1/\alpha$ for every
set $\mathcal{S}\subseteq\pairs$ that contains $(j,l)$. Then
$\D^{\mathrm{full}}_t\subseteq\D_t$ for all $t$ on every sample path. For some wealth
configurations, exhibited in Example~\ref{ex:three}, the restricted test certifies a pair that the
full closure does not.
\end{proposition}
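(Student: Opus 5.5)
The argument is pathwise: we fix the wealth matrices at every time and compare the rejection rules directly, with no probability involved. Take $(j,l)\in\D^{\mathrm{full}}_t$. To show $(j,l)\in\D_t$, we must check that every weak order $\wo$ with $j\wle_\wo l$ has been eliminated by time $t$. Fix such a $\wo$ and set $\mathcal{S}=\T(\wo)$. By definition of $\T(\wo)$, the pair $(j,l)$ lies in $\mathcal{S}$. The set $\mathcal{S}$ is nonempty and is a subset of $\pairs$, so it is one of the sets the full closure must reject before it certifies $(j,l)$. Hence
\[
\max_{s\le t}\frac{1}{|\T(\wo)|}\sum_{(a,b)\in\T(\wo)}\Ep^{ab}_s\ge\frac1\alpha ,
\]
and the left-hand side is exactly $\max_{s\le t}\Ep^\wo_s$ from \eqref{eq:EW}. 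So $\wo\notin\CS_t$. Since $\wo$ was an arbitrary order with $j\wle_\wo l$, every surviving order satisfies $l\wlt_\wo j$, which means $(j,l)\in\D_t$ by \eqref{eq:D}. The identical statistic and threshold make this a one-line inclusion. The only points to note are that the full closure and the restricted test use the same running maximum over $s\le t$, and that the case $\CS_t=\emptyset$ is covered trivially because $\D_t=\pairs$ there (Proposition~\ref{prop:structure}).

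The strictness claim needs an explicit example with $\M=3$, to be stated as Example~\ref{ex:three}. The idea is that the full closure must also reject intersections $\mathcal{S}$ that are not of the form $\T(\wo)$. Some of these contain $(j,l)$ together with only low-wealth pairs, and such a set can keep a small average even when every realizable $\T(\wo)$ containing $(j,l)$ has a large average.

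I would try the following configuration at a single time $t=1$. Take pair $(1,3)$ as the target. Let $\Ep^{12}$ and $\Ep^{23}$ be very large, and set $\Ep^{13}=0$ and all the reverse wealths to $0$. Every weak order with $1\wle 3$ must also have $1\wle 2$ or $2\wle 3$ by transitivity, so its $\T(\wo)$ contains $(1,2)$ or $(2,3)$. Its size is at most $6$, so the average is at least $\min\{\Ep^{12},\Ep^{23}\}/6$, which exceeds $1/\alpha$ once both wealths exceed $6/\alpha$. The restricted test therefore certifies $(1,3)$. For the full closure, the set $\mathcal{S}=\{(1,3)\}$ has average $0<1/\alpha$, so $(1,3)$ is not certified.

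The main obstacle is that these wealths must be achievable. Zero wealths are allowed by Assumption~\ref{ass:eproc} only as values, and since the comparison is pathwise on arbitrary wealth configurations, any nonnegative matrix suffices for the statement as written. If realizability from data is demanded, I would instead use small positive values in place of $0$ and check that the inequalities above still hold.
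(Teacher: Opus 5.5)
Your proof of the inclusion $\D^{\mathrm{full}}_t\subseteq\D_t$ is the paper's argument essentially line for line. For any $\wo$ with $(j,l)\in\T(\wo)$, the set $\T(\wo)$ is one of those the full closure must reject, and its running mean is exactly $\max_{s\le t}\Ep^\wo_s$.

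For strictness you use a different witness from Example~\ref{ex:three}, and your witness is correct. Every $\wo$ with $1\wle_\wo3$ has $(1,2)$ or $(2,3)$ in $\T(\wo)$. The singleton $\{(1,3)\}$ violates totality, so it is no $\T(\wo)$, and its mean $\Ep^{13}=0$ blocks the full closure.

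Your example is simpler than the paper's, but it shows less. In your configuration every $\mathcal{S}\ni(1,2)$ has mean at least $\Ep^{12}/6\ge1/\alpha$, and likewise for $(2,3)$. So the full closure itself certifies $1\succ2$ and $2\succ3$. The transitive closure of $\D^{\mathrm{full}}_t$, which Lemma~\ref{lem:closure} allows at no cost, then already contains $(1,3)$. So do the shortcut and e-Bonferroni with closure. Your extra certification is therefore pure transitivity.

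Example~\ref{ex:three} is built so that the gain survives closing the competitor. In its regime $4/\alpha\le a<5/\alpha-c$ with $c<1/\alpha$, the full-closure output is $\{(1,3)\}$, which is already transitively closed. Yet the restricted test certifies $1\succ2$. It can do so because it never has to reject $\pairs\setminus\{(1,3)\}$. That intersection would encode $1\weq2$, $2\weq3$ and $3\wlt1$, so it is no $\T(\wo)$; the paper's example also beats e-Bonferroni.

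Your witness suffices for the proposition as literally stated, because $\D^{\mathrm{full}}_t$ there is not transitively closed. The paper's witness, however, isolates the actual benefit of testing only realizable intersections.
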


\begin{proposition}[Transitivity pooling; validity of Algorithm~\ref{alg:shortcut}]
\label{prop:shortcut}
Let
\[
  B^{jl}_t=\Ep^{jl}_t+\sum_{m\in\Mset\setminus\{j,l\}}\min\{\Ep^{jm}_t,\Ep^{ml}_t\}
\]
be the pooled statistic of Section~\ref{sec:method} and let
$\D^{\mathrm{sc}}_t=\{(j,l)\in\pairs:\max_{s\le t}B^{jl}_s\ge\M(\M-1)/\alpha\}$ be the set that
Algorithm~\ref{alg:shortcut} certifies before closure. Then $\D^{\mathrm{sc}}_t\subseteq\D_t$
for all $t$ on every sample path, so $\D^{\mathrm{sc}}$ inherits \eqref{eq:fwer}. It costs
$O(\M)$ per pair per update and contains the e-Bonferroni rejections
$\{\max_{s\le t}\Ep^{jl}_s\ge\M(\M-1)/\alpha\}$.
\end{proposition}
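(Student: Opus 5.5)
The plan is to show that whenever $\max_{s\le t}B^{jl}_s\ge\M(\M-1)/\alpha$, every weak order $\wo$ with $(j,l)\in\T(\wo)$, that is, every ordering placing $l$ at least as high as $j$, has been eliminated by time $t$. Then $(j,l)\in\D_t$ by \eqref{eq:D}, since every surviving ordering satisfies $l\wlt_\wo j$. It suffices to work at a single time $s\le t$ with $B^{jl}_s\ge\M(\M-1)/\alpha$ and show $\Ep^\wo_s\ge1/\alpha$ for each such $\wo$. Elimination is permanent through the running maximum in \eqref{eq:EW}, so $\wo\notin\CS_t$ follows.

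Fix $s$ and $\wo$ with $j\wle_\wo l$. First, $(j,l)\in\T(\wo)$, so $\Ep^{jl}_s$ appears in the sum defining $\Ep^\wo_s$. Next, for each third model $m\ne j,l$, use the transitivity argument sketched before Algorithm~\ref{alg:shortcut}. By totality either $j\wle_\wo m$, in which case $(j,m)\in\T(\wo)$. Or $m\wlt_\wo j\wle_\wo l$, in which case transitivity gives $m\wle_\wo l$ and so $(m,l)\in\T(\wo)$. In each case pick one such pair $p_m\in\T(\wo)$, whose wealth is at least $\min\{\Ep^{jm}_s,\Ep^{ml}_s\}$. The step that needs care is that the pairs $(j,l)$ and $p_m$, $m\ne j,l$, must be pairwise distinct, so that no wealth is counted twice. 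This holds because $p_m$ is either $(j,m)$ or $(m,l)$. These are determined by $m$, differ from $(j,l)$ since $m\ne l$ and $m\ne j$, and for $m\ne m'$ the pairs $(j,m),(m,l),(j,m'),(m',l)$ are all distinct.

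Since all wealths are nonnegative,
\[
\sum_{(a,b)\in\T(\wo)}\Ep^{ab}_s\ \ge\ \Ep^{jl}_s+\sum_{m\ne j,l}\min\{\Ep^{jm}_s,\Ep^{ml}_s\}=B^{jl}_s\ \ge\ \frac{\M(\M-1)}{\alpha}.
\]
Dividing by $|\T(\wo)|\le|\pairs|=\M(\M-1)$ gives $\Ep^\wo_s\ge1/\alpha$, as required. Hence $\D^{\mathrm{sc}}_t\subseteq\D_t$ on every path. The FWER bound \eqref{eq:fwer} then transfers to $\D^{\mathrm{sc}}$, because its error event is contained in that of $\D_t$. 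It also transfers to its transitive closure, by the remark in Section~\ref{sec:duality} that closure of a certified set preserves the good event $\good$.

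The remaining claims are immediate. The $B^{jl}_t$ is a sum of $\M-2$ minima plus one term, so it costs $O(\M)$ per pair per update. The e-Bonferroni inclusion follows from $B^{jl}_s\ge\Ep^{jl}_s$ by nonnegativity, so $\max_{s\le t}\Ep^{jl}_s\ge\M(\M-1)/\alpha$ implies $\max_{s\le t}B^{jl}_s\ge\M(\M-1)/\alpha$. The only delicate point is the injectivity of $m\mapsto p_m$ together with the count bound $|\T(\wo)|\le\M(\M-1)$, and both are elementary.
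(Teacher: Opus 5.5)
Your proposal is correct and follows the paper's proof essentially step for step: you fix one time $s\le t$ at which $B^{jl}_s\ge\M(\M-1)/\alpha$, use totality and transitivity to show that every weak order with $j\wle_\wo l$ contains $(j,l)$ and, for each third model $m$, one of $(j,m)$ or $(m,l)$, check that these $\M-1$ pairs are distinct, and divide by $|\T(\wo)|\le\M(\M-1)$. Your e-Bonferroni containment from $B^{jl}_s\ge\Ep^{jl}_s$ and your $O(\M)$ cost count also match the paper's.
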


\paragraph{Relation to existing results.}
To our knowledge both propositions are new. Proposition~\ref{prop:dominance} has a known relative. It is the
sequential counterpart of the fixed-sample observation of \citet{shaffer1986} that logical
relations among hypotheses permit a milder multiplicity correction at no cost in error. It
differs in that it holds for e-processes, on every sample path and at all times, and in that its
comparator $\D^{\mathrm{full}}$ is the e-value closed test of \citet{hartoglei2025}.
Proposition~\ref{prop:shortcut} has, to our knowledge, no direct counterpart. Its closest
relative is the certification logic of \citet{gu2026serpant}, which corrects by Bonferroni over the
pairs (e-Bonferroni in our terms) and uses transitivity only after certification, to propagate certified
pairs. The pooled statistic $B^{jl}_t$ uses transitivity
before certification (Appendix~\ref{app:method}). Because $B^{jl}_t\ge\Ep^{jl}_t$, the shortcut
contains their rejections, at a threshold that stays conservative relative to the exact test.

\begin{lemma}[Transitive closure adds no error]\label{lem:closure}
Let $(\D_t)$ be a certified set with good event $\good$, and let $\Dbar_t$ be the transitive
closure of $\D_t$, the set of $(j,l)\in\pairs$ joined by a directed path $j=v_0,\dots,v_r=l$,
$r\ge1$, with every $(v_{i-1},v_i)\in\D_t$. On $\good$, for every $t$, every
$(j,l)\in\Dbar_t$ satisfies $\thetastar_j>\thetastar_l$; in particular $\Dbar_t$ is a strict
partial order and $\D_t$ contains no directed cycle. Hence $(\Dbar_t)$ is a certified set with
the \emph{same} good event: closing under transitivity adds rejections but never turns an
error-free sample path into an erroneous one, so it leaves the family-wise error guarantee
unchanged.
\end{lemma}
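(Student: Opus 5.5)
The plan is a direct pathwise argument on the good event $\good$, using only the definition of the transitive closure and the fact that strict inequality of real numbers is transitive. No probabilistic input beyond $\Pr(\good)\ge1-\alpha$ is needed, because every claim is a deterministic consequence of being on $\good$.

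\textbf{Step 1: elements of the closure are true dominances.} Fix an outcome in $\good$ and a time $t$. Let $(j,l)\in\Dbar_t$, with witnessing path $j=v_0,\dots,v_r=l$. Each edge $(v_{i-1},v_i)$ lies in $\D_t$, so the definition of $\good$ gives $\thetastar_{v_{i-1}}>\thetastar_{v_i}$ for $i=1,\dots,r$. Chaining these $r$ strict inequalities, by induction on $r$, yields $\thetastar_j>\thetastar_l$. This holds for every $t$ at once, since $\good$ quantifies over all $t$.

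\textbf{Step 2: the order-theoretic consequences.}
\begin{itemize}
\item \emph{Irreflexivity and no cycles.} A directed cycle in $\D_t$ through $j$ would put $(j,j)\in\Dbar_t$. Step~1 would then give $\thetastar_j>\thetastar_j$, which is impossible. So $\D_t$ has no directed cycle, and $\Dbar_t$ is irreflexive.
\item \emph{Asymmetry.} If both $(j,l)$ and $(l,j)$ lay in $\Dbar_t$, Step~1 would give $\thetastar_j>\thetastar_l>\thetastar_j$, again impossible.
\item \emph{Transitivity.} This holds by construction, since concatenating two paths gives a path.
\end{itemize}
Hence $\Dbar_t$ is a strict partial order. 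It is also contained in the strict order $\{(a,b):\thetastar_a>\thetastar_b\}$.

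\textbf{Step 3: the certified-set property.} Let $\overline{\good}$ be the good event of $(\Dbar_t)$. Step~1 shows $\good\subseteq\overline{\good}$. Conversely, $\D_t\subseteq\Dbar_t$ (take paths of length $r=1$), so $\overline{\good}\subseteq\good$. Therefore $\overline{\good}=\good$, and its probability is at least $1-\alpha$, so $(\Dbar_t)$ is a certified set. For adaptedness, note that $\Dbar_t$ is a deterministic function of $\D_t$, so it is adapted whenever $\D_t$ is.

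The only step needing any care is Step~3, where we must recognize that the two good events coincide exactly rather than merely bound one probability by the other. This is immediate from $\D_t\subseteq\Dbar_t$, so I expect no real obstacle.
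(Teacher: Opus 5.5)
Your proof is correct and follows essentially the paper's argument. You chain the strict inequalities along a witnessing path on $\good$, deduce acyclicity and the strict-partial-order property, and get equality of the good events from the two inclusions $\good\subseteq\overline{\good}$ (Step~1) and $\overline{\good}\subseteq\good$ (from $\D_t\subseteq\Dbar_t$); the explicit asymmetry check and the adaptedness remark are harmless extras.
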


\begin{corollary}[Certified top-$k$ sets]\label{cor:topk}
Let $(\D_t)$ be a certified set with $L_{j,t},U_{j,t}$ as in \eqref{eq:LU}. For $k\in\Mset$ let
$I_{k,t}=\{j:U_{j,t}\le k\}$ and $O_{k,t}=\{j:L_{j,t}>k\}$. On $\good$, for
all $k$ and $t$ simultaneously, every $j\in I_{k,t}$ has $\rank_j(\thetastar)\le k$ and every
$j\in O_{k,t}$ has $\rank_j(\thetastar)>k$.
\end{corollary}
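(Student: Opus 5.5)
The corollary follows directly from part~(b) of Theorem~\ref{thm:ranks}, so the plan is to work on the good event $\good$ throughout and use only deterministic counting. By Theorem~\ref{thm:ranks}(b), on $\good$ we have $L_{j,t}\le\rank_j(\thetastar)\le U_{j,t}$ for all $j$ and all $t$ at once. Take any $k\in\Mset$ and any time $t$. If $j\in I_{k,t}$, then $\rank_j(\thetastar)\le U_{j,t}\le k$. If $j\in O_{k,t}$, then $\rank_j(\thetastar)\ge L_{j,t}>k$. The bounds of Theorem~\ref{thm:ranks}(b) hold on the single event $\good$ uniformly in $j$ and $t$, and the sets $I_{k,t}$ and $O_{k,t}$ are deterministic functions of $\D_t$. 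Hence both conclusions hold for all $k$ and $t$ simultaneously on $\good$, and no union bound over $k$ is needed.

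To keep the argument self-contained, I would briefly recheck part~(b). On $\good$, each $(l,j)\in\D_t$ has $\thetastar_l>\thetastar_j$, so $\#\{l:(l,j)\in\D_t\}\le\#\{l:\thetastar_l>\thetastar_j\}$. This gives $L_{j,t}\le\rank_j(\thetastar)$.

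For the upper bound, each $(j,l)\in\D_t$ has $\thetastar_j>\thetastar_l$. So the models certified below $j$ are among those with strictly smaller ability. The models with strictly larger ability are disjoint from that set and from $j$ itself. This gives $\rank_j(\thetastar)-1\le \M-1-\#\{l:(j,l)\in\D_t\}$, which is $\rank_j(\thetastar)\le U_{j,t}$.

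The only point needing care is the tie convention, and even that is not a real obstacle. Tied models share a rank, so ``$\rank_j\le k$'' is the right reading of ``inside the top $k$'', consistent with the remark in Step~3 that ties can put more than $k$ models inside. I would also note that, by definition of a certified set, $\Pr(\good)\ge1-\alpha$. So the top-$k$ certificates hold with probability at least $1-\alpha$ for every certified set, whether it comes from the exact test, the shortcut, or e-Bonferroni, and also after transitive closure by Lemma~\ref{lem:closure}.
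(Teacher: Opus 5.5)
Your proposal is correct and follows the paper's proof: on $\good$ you chain $\rank_j(\thetastar)\le U_{j,t}\le k$ and $\rank_j(\thetastar)\ge L_{j,t}>k$ using Theorem~\ref{thm:ranks}(b) and the definitions of $I_{k,t}$ and $O_{k,t}$. Your recheck of part~(b) reproduces the paper's counting argument for that part.
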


\begin{example}[Strictness, $\M=3$]\label{ex:three}
Let $\theta_1>\theta_2>\theta_3$ and suppose that at some time $t$ the wealths are
$\Ep^{13}\ge6/\alpha$, $\Ep^{12}=a$ and $\Ep^{23}=c$, with the three wealths of true hypotheses
all zero and with no earlier time at which $\Ep^{12}$ or the mean over
$\pairs\setminus\{(1,3)\}$ was larger. Of the eight weak orders with $1\wle_\wo2$, six contain
$(1,3)$ and are eliminated through $\Ep^{13}$; the other two place model $3$ strictly below model
$1$ and hence, by transitivity, below model $2$, so their means are $a/3$ and $a/4$. The
restricted test therefore certifies $1\succ2$ once $a\ge4/\alpha$, whereas the full closure must
also reject $\pairs\setminus\{(1,3)\}$, needing $a+c\ge5/\alpha$, and e-Bonferroni needs
$a\ge6/\alpha$. Whenever $c<1/\alpha$, the restricted test is strictly stronger than both, for
$4/\alpha\le a<5/\alpha-c$: with $\alpha=0.1$, $\Ep^{13}=60$, $a=42$ and
$c=5$, it certifies $1\succ2$ while the full-closure statistic stands at $9.4$ against a
threshold of $10$ and $\Ep^{12}$ at $42$ against $60$.
\end{example}

The bound of Proposition~\ref{prop:shortcut} is crude in two places that could be tightened. It
divides by $\M(\M-1)$ for every ordering, although $|\T(\wo)|=\binom{\M}{2}+\#\{\text{tied pairs}\}$
reaches that value only when every pair is tied. It also has no step-down, that is, no lowering of
the threshold for the remaining orders once some orders are rejected. The exact test has neither
weakness, so what either change could recover is at most what the exact test gains over
e-Bonferroni. In simulation the exact test certifies about $7\%$ more true dominances than
e-Bonferroni after $500$ items and at most $3\%$ more by $4000$. On the eight highest-scoring
models of the real leaderboard it certifies $2$--$5\%$ more up to three quarters of a benchmark
and nearly the same number at the end (Appendix~\ref{app:ecorr}).

\begin{remark}[Weights must not depend on time]\label{rem:weights}
The plain average in \eqref{eq:EW} may be replaced by any weighted form
$\Ep^\wo_t=\sum_{(j,l)\in\T(\wo)}w^\wo_{jl}\,\Ep^{jl}_t$ with fixed weights $w^\wo_{jl}\ge0$
summing to one over $\T(\wo)$, which the proof of Theorem~\ref{thm:validity} covers without change; the weights
may depend on $\wo$. What is \emph{not} valid is letting them adapt: a predictably re-weighted
average $\sum_kw_{k,t}\Ep^{(k)}_t$ of supermartingales is not a supermartingale in general, so
the weights in \eqref{eq:EW} must not depend on $t$. Propositions~\ref{prop:dominance}
and~\ref{prop:shortcut} are stated for the plain average.
\end{remark}

\subsection{Tiers and retirement}
\label{app:tiers}

An \emph{antichain} of a strict partial order $\D$ is a set of models no two of which are
related by $\D$. The tier guarantee of Section~\ref{sec:duality} is part (c) below, with its
monotonicity in (d); the rest is standard order theory.

\begin{theorem}[Tier guarantee, full statement]\label{thm:tiers}
Let $\D$ be a strict partial order on $\Mset$ and compute tiers from $\D$ as in Step~3 of
Section~\ref{sec:method}. Parts (a), (b) and (e) hold for every such $\D$. In (c) and (d),
$\tier_t(j)$ is the tier computed from the transitive closure $\Dbar_t$ of a certified set on
the event $\good$, where $\Dbar_t$ is a strict partial order by Lemma~\ref{lem:closure}, and (d)
assumes in addition $\D_t\subseteq\D_{t+1}$; the subscript is dropped when $t$ is fixed. On $\good$
these tiers equal those computed from $\D_t$ itself: $\D_t\subseteq\Dbar_t$, and each pair of
$\Dbar_t$ is joined by a chain in $\D_t$, which has no cycle on $\good$, so longest chains have the
same length.
\begin{enumerate}[label=(\alph*),leftmargin=*,itemsep=2pt]
\item \emph{Partition into antichains.} The tiers partition $\Mset$, tiers
  $1,\dots,\max_j\tier(j)$ are all nonempty, and no certified dominance holds between two models
  of the same tier.
\item \emph{Order.} If $(j,l)\in\D$ then $\tier(j)<\tier(l)$. Every model in tier $s\ge2$ is
  dominated by some model in tier $s-1$.
\item \emph{Guarantee.} On $\good$, simultaneously for all $j$ and all $t$, the bound
  \eqref{eq:tierbound} holds, that is, $\rank_j(\thetastar)\ge L_{j,t}\ge\tier_t(j)$,
  where $L_{j,t}$ is computed from $\Dbar_t$; the number of tiers is at most the number of
  distinct values among $\thetastar_1,\dots,\thetastar_\M$.
\item \emph{Monotonicity.} If $\D_t\subseteq\D_{t+1}$ then $\tier_t(j)\le\tier_{t+1}(j)$: a model
  is never promoted as evidence accrues.
\item \emph{Minimality.} No partition of $\Mset$ into antichains of $\D$ has fewer blocks
  \citep{mirsky1971}.
\end{enumerate}
\end{theorem}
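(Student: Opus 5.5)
Parts (a), (b) and (e) are pure order theory for a strict partial order $\D$. The plan is to establish them first from the recursion $\tier(j)=1+\max\{\tier(l):(l,j)\in\D\}$, which is well defined because $\D$ is acyclic. Then (c) and (d) follow by adding the good event $\good$ and Lemma~\ref{lem:closure}.

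For (b), take $(j,l)\in\D$. The recursion gives $\tier(l)\ge1+\tier(j)$ directly. If $\tier(l)=s\ge2$, the maximum in the recursion is attained at some $l'$ with $(l',l)\in\D$ and $\tier(l')=s-1$.

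For (a), the tiers partition $\Mset$ trivially. By (b), two models in the same tier cannot be related, so each tier is an antichain. Nonemptiness of tiers $1,\dots,\max_j\tier(j)$ follows by iterating the second claim of (b) downward from a model of maximal tier.

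For (e), I will use the standard Mirsky argument. A longest chain has $\max_j\tier(j)$ elements, namely $l_1,\dots,l_h,j$ with $\hgt(j)=h$. An antichain contains at most one element of a chain, so every antichain partition has at least that many blocks. The tiers achieve this bound.

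For (c), work on $\good$ and a fixed $t$. The first step is to note that $\Dbar_t$ is a strict partial order consisting only of true dominances, by Lemma~\ref{lem:closure}. The remark before the list then shows that tiers from $\D_t$ and from $\Dbar_t$ coincide, since on $\good$ every $\Dbar_t$ pair is realized by a chain in $\D_t$. The next step is $\rank_j(\thetastar)\ge L_{j,t}$, which is Theorem~\ref{thm:ranks}(b) applied to the certified set $\Dbar_t$.

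For $L_{j,t}\ge\tier_t(j)$, take a longest chain $l_1,\dots,l_h$ above $j$ in $\Dbar_t$. By transitivity of $\Dbar_t$, each $(l_i,j)$ lies in $\Dbar_t$. The $l_i$ are distinct by acyclicity, so $\#\{l:(l,j)\in\Dbar_t\}\ge h$, and hence $L_{j,t}\ge 1+h=\tier_t(j)$.

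The bound on the number of tiers also comes from this chain. On $\good$ its abilities are strictly decreasing along it, so a chain with $s$ elements forces $s$ distinct values of $\thetastar$.

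For (d), assume $\D_t\subseteq\D_{t+1}$. Then $\Dbar_t\subseteq\Dbar_{t+1}$, since closure is monotone. Every chain above $j$ in $\Dbar_t$ is therefore a chain in $\Dbar_{t+1}$, so $\hgt$ cannot decrease. Both orders are acyclic on $\good$, so heights are finite and well defined.

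The only delicate point is bookkeeping rather than mathematics. The hypotheses of (c) and (d) must be applied on $\good$ simultaneously for all $t$, which $\good$ provides by definition. Beyond that, I must check that tiers computed from $\D_t$ and from $\Dbar_t$ agree; this uses the absence of cycles, so that longest chains are finite and have the same length in both relations. I expect this step to be the main place where care is needed.
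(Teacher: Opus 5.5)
Your proposal is correct and follows the paper's proof essentially step for step. Both use the Mirsky chain-versus-antichain count for (e). For (c), both take a longest chain and use transitivity and distinctness to get $L_{j,t}\ge\tier_t(j)$, then apply Theorem~\ref{thm:ranks}(b) to $\Dbar_t$. Both bound the number of tiers by strictly decreasing abilities along a maximal chain, and both get (d) from monotonicity of the closure.

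The only difference is in (b): you read off both claims from the recursion, whereas the paper extends or truncates chains. This is equivalent, because the recursion computes $1+\hgt(j)$. Since you switch back to the chain form in (c) and (e), state the one-line identity $\hgt(j)=\max\{1+\hgt(l):(l,j)\in\D\}$ (with $\hgt(j)=0$ when no such $l$ exists) explicitly.
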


\begin{remark}[What tiers do not say]\label{rem:tiercaveat}
With $\D=\{(a,b)\}$ on $\{a,b,c\}$ the tiers are $\{a,c\}$ and $\{b\}$, and $c$ is comparable
to neither, which is why $\tier(j)<\tier(l)$ does not imply that $j$ is certified better than
$l$. The intuitive definition ``same tier if and only if neither dominates the other'' fails because
incomparability is not transitive: with $\D=\{(a,c)\}$, $a$ and $b$ are incomparable, $b$ and
$c$ are incomparable, and $a\succ c$.
\end{remark}

The cost of retirement is bounded under (S) for the rule that retires a model once every pair
involving it is resolved. The top-$k$ rule, under which a pair can be frozen before it is
certified, is measured in Appendix~\ref{app:ecost} and not bounded here.

\begin{corollary}[Cost of all-pairs retirement under (S)]\label{cor:cost}
Assume \textup{(S)}, bets that are functions of each pair's own past (the mixture is), and the
\emph{all-pairs rule}, which retires model $j$ at the first time every pair $\{j,l\}$ is
certified in one direction. For an unordered pair let $q(j,l)$ be the ordered pair in the true
direction, and let $\pairs^+=\{(j,l):\theta_j>\theta_l\}$. For $q=(j,l)\in\pairs^+$ write
$\Delta_q=\theta_j-\theta_l$, $\rho_q=\E[(Z^{jl}_1)^2]$ and
$\mu_q(\lambda)=\E\log(1+\lambda Z^{jl}_1)$. These are the quantities $\Delta$, $\rho$ and
$\mu(\lambda)$ of Proposition~\ref{prop:power} computed for the pair $q$. Assume
that every $q\in\pairs^+$ involving $j$ has a bet $\lambda_q\in\Lambda$ with
$\mu_q=\mu_q(\lambda_q)>0$, which holds if and only if $\mu_q(\lambda_{\min})>0$, and let
$a_q=a_{\lambda_q}$ be its rate in Proposition~\ref{prop:power}(c) and $\tau^{B}_q$ the
e-Bonferroni time of $q$ computed without retirement. If model $j$ is tied with no other model, put
$\kappa_j=\max_{l\ne j}2\eta/\mu_{q(j,l)}$ and $\beta_j=\max_{l\ne j}1/a_{q(j,l)}$. Then
$\sigma_j\le\max_{l\ne j}\tau^{B}_{q(j,l)}$ on every path,
$\Pr(\sigma_j>t)\le(\M-1)e^{-t/\beta_j}$ for every integer $t\ge\kappa_j$, and
\begin{equation}\label{eq:cost}
  \E[\sigma_j]\le\kappa_j+\beta_j\bigl(\log(\M-1)+1\bigr)+2 .
\end{equation}
If every model is untied and every true pair satisfies $0.06\le\Delta_q/\rho_q\le1$, then with
$\Delta_j=\min_{l\ne j}|\theta_j-\theta_l|$ and $\rho_j=\max_{l\ne j}\rho_{q(j,l)}$,
\begin{equation}\label{eq:costgrid}
  \E[\sigma_j]\le\frac{11.2\,\rho_j\eta+\min\{74.7,\,270\rho_j\}\bigl(\log(\M-1)+1\bigr)}{\Delta_j^2}+2,
\end{equation}
and consequently $\E[\sum_j\sigma_j]=O\bigl(\sum_j\rho_j\Delta_j^{-2}\log(|\Lambda|\M^2/\alpha)\bigr)$.
A model tied with another model is never retired under the all-pairs rule on the good event
$\good$; over any finite evaluation horizon it therefore incurs the full planned cost.
\end{corollary}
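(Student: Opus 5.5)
The proof has three parts: a pathwise bound on $\sigma_j$, then a tail bound, then integration of the tail. \textbf{Pathwise bound.} Fix an untied model $j$. For each $l\ne j$ the true-direction pair $q(j,l)\in\pairs^+$ enters the e-Bonferroni set at $\tau^B_{q(j,l)}$. By Proposition~\ref{prop:shortcut}, and since the exact test certifies no later than the shortcut, it is then in the certified set. The subtlety is retirement. Other models may retire before $j$ and freeze pairs involving $j$. But a pair $(j,l)$ is frozen only when $j$ or $l$ is retired. Under the all-pairs rule, $l$ retires only once every pair involving $l$, including $\{j,l\}$, is certified in some direction.

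So until $\{j,l\}$ is resolved, neither endpoint has retired, and its wealth coincides with the no-retirement wealth. This uses the assumption that bets depend only on the pair's own past, so retirement of other models does not change this pair's bets. Hence $\{j,l\}$ is resolved by time $\tau^B_{q(j,l)}$. It could be resolved earlier, possibly in the wrong direction; that is allowed, since the claim only needs resolution in some direction. Since $j$ retires at the first time all its pairs are resolved, $\sigma_j\le\max_{l\ne j}\tau^B_{q(j,l)}$ on every path.

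\textbf{Tail and expectation.} A union bound over the $\M-1$ pairs with Proposition~\ref{prop:power}(c) gives the tail. For integers $t\ge\kappa_j\ge2\eta/\mu_q$, each $\Pr(\tau^B_q>t)\le e^{-a_q t}\le e^{-t/\beta_j}$, so $\Pr(\sigma_j>t)\le(\M-1)e^{-t/\beta_j}$. For the expectation, write $\E[\sigma_j]\le\sum_{t\ge0}\Pr(\sigma_j>t)$. Bound the terms with $t<t_0$ by $1$ and the rest by $\min\{1,(\M-1)e^{-t/\beta_j}\}$, where $t_0=\lceil\kappa_j\rceil$. Split at $t_1=\max\{t_0,\lceil\beta_j\log(\M-1)\rceil\}$. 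Terms before $t_1$ contribute at most $\kappa_j+\beta_j\log(\M-1)+2$. The geometric tail beyond $t_1$ contributes at most $\sum_{t\ge t_1}e^{-(t-\beta_j\log(\M-1))/\beta_j}\le 1/(1-e^{-1/\beta_j})-1\le\beta_j$. Together these give \eqref{eq:cost}. The main fiddly step is keeping the rounding constants within the ``$+2$'', which I would settle by comparing the sum with the integral $\int_{t_1-1}^\infty$.

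\textbf{Grid case.} Under $0.06\le\Delta_q/\rho_q\le1$, Proposition~\ref{prop:power}(d) gives a grid bet with $2\eta/\mu_q\le 11.2\rho_q\eta/\Delta_q^2$ and $1/a_q\le\min\{74.7,270\rho_q\}/\Delta_q^2$; these constants are the ones already behind \eqref{eq:rate}. Maximize over $l$ using $\Delta_q\ge\Delta_j$ and $\rho_q\le\rho_j$, and substitute into \eqref{eq:cost} to get \eqref{eq:costgrid}. Summing over $j$ gives the $O(\cdot)$ statement, since $\eta=\log(|\Lambda|\M(\M-1)/\alpha)$ and the $\log(\M-1)$ term is dominated by $\eta$.

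\textbf{Tied models.} If $\theta_j=\theta_l$, then on $\good$ neither $(j,l)$ nor $(l,j)$ is ever certified, because both would be false dominances. So $\{j,l\}$ is never resolved and $j$ is never retired.
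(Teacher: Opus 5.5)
Your proposal follows the paper's proof step for step: the same argument that a pair is frozen only after it is resolved, so its wealth agrees with the unretired one up to $\tau^{B}_{q(j,l)}$, then the same union bound with Proposition~\ref{prop:power}(c), the same split of $\sum_t\Pr(\sigma_j>t)$ at $t_1=\lceil\max\{\kappa_j,\beta_j\log(\M-1)\}\rceil$, and the same substitution of Proposition~\ref{prop:power}(d). One bookkeeping correction: the tail $\sum_{t\ge t_1}(\M-1)e^{-t/\beta_j}$ is bounded by $1+\beta_j$, not $\beta_j$, because its first term can equal $1$ (e.g.\ when $t_1=\beta_j\log(\M-1)$, where the sum is $1/(1-e^{-1/\beta_j})>\beta_j$); this is offset because the head is at most $t_1\le\kappa_j+\beta_j\log(\M-1)+1$ rather than $+2$, so the total $t_1+1+\beta_j$ gives \eqref{eq:cost} exactly as in the paper.
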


\section{Proofs}
\label{app:proofs}

\newcommand{\why}[1]{\tag*{{\small[#1]}}}

Time is discrete, $t\in\N_0=\{0,1,2,\dots\}$, and $\ind{A}$ denotes the indicator of an event
$A$. We write $A^c$ for the complement of an event $A$ and abbreviate ``almost surely'' to a.s.
A random variable is \emph{$\F_t$-measurable} if its value is determined by the information
$\F_t$. Thus adapted means $\F_t$-measurable at each $t$, and predictable means
$\F_{t-1}$-measurable. Conditional expectations of nonnegative random variables are always
defined, possibly infinite, and the tower property holds for them; we use this without comment.
In each proof, the bracket at the right of a line states what justifies that line.

\subsection{Pairwise wealths}

\begin{proof}[Proof of Lemma~\ref{lem:betting}]
Write $\phi_t=1+\lambda_t(Z_t-b_t)/(1+b_t)$, so that $\Ep^{jl}_t=\prod_{s\le t}\phi_s$ by the
definition of the wealth in \eqref{eq:wealth}.

\emph{Nonnegativity and measurability.}
\begin{align*}
  \frac{Z_t-b_t}{1+b_t} &\ \ge\ \frac{-1-b_t}{1+b_t}=-1
    \why{$Z_t\ge-1$ and $1+b_t>0$}\\
  \phi_t &\ \ge\ 1-\lambda_t\ \ge\ 0
    \why{previous line and $\lambda_t\in[0,1]$}\\
  \Ep^{jl}_t=\prod_{s\le t}\phi_s &\ \ge\ 0,\ \text{and is }\F_t\text{-measurable}
    \why{$Z_s$, $b_s$, $\lambda_s$ are $\F_s$-measurable for $s\le t$}
\end{align*}

\emph{Bound on one factor.} To avoid any integrability question for the signed quantity
$\lambda_t(Z_t-b_t)/(1+b_t)$, rewrite $\phi_t$ as a sum of two nonnegative terms:
\begin{align*}
  \phi_t &= 1-\lambda_t+\frac{\lambda_t}{1+b_t}\,(1+Z_t)
    \why{algebra; both terms are nonnegative since $\lambda_t\le1$, $1+Z_t\ge0$, $1+b_t>0$}\\
  \E_P[\phi_t\mid\F_{t-1}] &= 1-\lambda_t+\frac{\lambda_t}{1+b_t}\bigl(1+\E_P[Z_t\mid\F_{t-1}]\bigr)
    \why{$\lambda_t$ and $b_t$ are $\F_{t-1}$-measurable}\\
  &\ \le\ 1-\lambda_t+\frac{\lambda_t}{1+b_t}(1+b_t)=1
    \why{$\E_P[Z_t\mid\F_{t-1}]\le b_t$ and $\lambda_t/(1+b_t)\ge0$}
\end{align*}

\emph{Supermartingale.}
\begin{align*}
  \E_P[\Ep^{jl}_t\mid\F_{t-1}] &= \Ep^{jl}_{t-1}\,\E_P[\phi_t\mid\F_{t-1}]
    \why{$\Ep^{jl}_{t-1}\ge0$ is $\F_{t-1}$-measurable}\\
  &\ \le\ \Ep^{jl}_{t-1}
    \why{bound on one factor}\\
  \E_P[\Ep^{jl}_t] &\ \le\ \E_P[\Ep^{jl}_{t-1}]\ \le\ \dots\ \le\ \Ep^{jl}_0=1
    \why{tower property; induction on $t$}
\end{align*}
so the process is integrable and is a nonnegative supermartingale started at $1$.

\emph{Frozen process.} Let $\sigma$ be an $(\F_t)$-stopping time.
\begin{align*}
  \Ep^{jl}_{t\wedge\sigma}-\Ep^{jl}_{(t-1)\wedge\sigma}
    &= \ind{\sigma\ge t}\bigl(\Ep^{jl}_t-\Ep^{jl}_{t-1}\bigr)
    \why{both sides vanish on $\{\sigma<t\}$ and agree on $\{\sigma\ge t\}$}\\
  \E_P\bigl[\Ep^{jl}_{t\wedge\sigma}-\Ep^{jl}_{(t-1)\wedge\sigma}\,\big|\,\F_{t-1}\bigr]
    &= \ind{\sigma\ge t}\,\bigl(\E_P[\Ep^{jl}_t\mid\F_{t-1}]-\Ep^{jl}_{t-1}\bigr)\ \le\ 0
    \why{$\{\sigma\ge t\}=\{\sigma\le t-1\}^c\in\F_{t-1}$; $\Ep^{jl}_t$, $\Ep^{jl}_{t-1}$ integrable by the previous block; supermartingale property}
\end{align*}
so the stopped process is again a nonnegative supermartingale started at $1$.
\end{proof}

\begin{proof}[Proof of Proposition~\ref{prop:instances}]
By Lemma~\ref{lem:betting} it suffices to verify, whenever $\theta_j\le\theta_l$, that $b_t$ is
predictable with $b_t\in(-1,1]$ and that $\E[Z_t\mid\F_{t-1}]\le b_t$.

(a) Under (F), let $z_i=x_{ij}-x_{il}$, so that $\Hyp_{jl}$ reads $\sum_{i=1}^Nz_i\le0$. Fix
$t\le N$.
\begin{align*}
  \E[Z_t\mid\F_{t-1}] &= \frac{\sum_{i=1}^Nz_i-S_{t-1}}{N-t+1}
    \why{given $\F_{t-1}$, $\Pi(t)$ is uniform on the $N-t+1$ unrevealed items, and $\xi$ is independent of $\Pi$}\\
  &\ \le\ \frac{-S_{t-1}}{N-t+1}
    \why{$\sum_{i=1}^Nz_i\le0$}\\
  \E[Z_t\mid\F_{t-1}] &\ \le\ 1
    \why{$Z_t\le1$, scores lying in $[0,1]$}\\
  \E[Z_t\mid\F_{t-1}] &\ \le\ \min\Bigl\{1,\frac{-S_{t-1}}{N-t+1}\Bigr\}\ \le\ b_t
    \why{the two lines above; taking the maximum with $-1+\delta$ in eq.~\eqref{eq:bfinite} can only raise the bound}\\
  -1<-1+\delta\ \le\ b_t &\ \le\ 1
    \why{eq.~\eqref{eq:bfinite}: the clipping, as Lemma~\ref{lem:betting} requires; the minimum with $1$ and $\delta\le1$}\\
  b_t &\ \text{is }\F_{t-1}\text{-measurable}
    \why{it is a function of $S_{t-1}=\sum_{s<t}Z_s$}
\end{align*}
For $t>N$ the process is constant. All $\M(\M-1)$ processes are adapted to the single filtration
generated by the shared permutation; independence of permutations across pairs is neither
available nor needed.

(b) Under (S), with $b_t\equiv0$:
\begin{align*}
  \E[Z_t\mid\F_{t-1}] &= \E[X_{tj}-X_{tl}]
    \why{$X_t$ is independent of $\F_{t-1}=\sigma(X_1,\dots,X_{t-1},\xi)$}\\
  &= \theta_j-\theta_l\ \le\ 0=b_t
    \why{definition of $\theta$; $\Hyp_{jl}$}
\end{align*}
Nothing is assumed about the joint law of $(X_{t1},\dots,X_{t\M})$, and $\lambda_t$ may depend on
the past of all models, since only $\F_{t-1}$-measurability is used.
\end{proof}

\begin{proof}[Proof of Proposition~\ref{prop:power}]
Under (S) with $b_t\equiv0$ the factor of the constant bet $\lambda$ on item $s$ is
$1+\lambda Z_s$; write $\ell_s=\log(1+\lambda Z_s)$, so that
$\log\Ep^{\lambda}_t=\sum_{s\le t}\ell_s$.

\emph{(a) Growth.} Let $f(u)=\log(1+u)-u+u^2$; then $f(0)=0$ and $f'(u)=u(1+2u)/(1+u)$, which
is $\le0$ on $[-\tfrac12,0]$ and $\ge0$ on $[0,\tfrac12]$, so $f\ge0$ on $[-\tfrac12,\tfrac12]$.
\begin{align*}
  \ell_1,\ell_2,\dots\ \text{are i.i.d.\ with}\ |\ell_s| &\le \log\tfrac{1}{1-\lambda}
    \why{$Z_s$ i.i.d., $|Z_s|\le1$, $\lambda<1$}\\
  t^{-1}\textstyle\sum_{s\le t}\ell_s &\to \E\ell_1=\mu(\lambda)\ \text{a.s.}
    \why{strong law of large numbers}\\
  \mu\ \text{is concave on }[0,1)\text{, and }\mu(0) &= 0
    \why{$\lambda\mapsto\log(1+\lambda z)$ is concave for each $z\in[-1,1]$; expectation preserves concavity}\\
  u-u^2\le\log(1+u) &\le u\quad\text{for }u\in[-\tfrac12,\tfrac12]
    \why{$f\ge0$; concavity of $\log$}\\
  \lambda\Delta-\lambda^2\rho\le\mu(\lambda) &\le \lambda\Delta\quad\text{for }\lambda\le\tfrac12
    \why{$u=\lambda Z_1\in[-\tfrac12,\tfrac12]$; take expectations, $\E Z_1=\Delta$, $\E Z_1^2=\rho$}
\end{align*}

\emph{(b) Consistency.} Suppose $\mu(\lambda_{\min})>0$.
\begin{align*}
  \log\Ep_t &\ge \log\Ep^{\lambda_{\min}}_t-\log |\Lambda|\to+\infty\ \text{a.s.}
    \why{the mixture averages nonnegative terms; (a)}\\
  \tau^{B} &< \infty\ \text{a.s.}
    \why{$\tau^{B}$ is the first time $\Ep_t\ge\M(\M-1)/\alpha$}\\
  (j,l) &\in \D_t\ \text{for all }t\ge\tau^{B}
    \why{e-Bonferroni uses the running maximum and is contained in $\D^{\mathrm{sc}}_t\subseteq\D_t$ (Proposition~\ref{prop:shortcut}); $\D_t$ is monotone}
\end{align*}
Suppose $\mu(\lambda_{\min})<0$ and let $\lambda\in\Lambda$ with $\lambda>\lambda_{\min}$.
\begin{align*}
  \mu(\lambda_{\min}) &\ge \tfrac{\lambda_{\min}}{\lambda}\mu(\lambda)+\bigl(1-\tfrac{\lambda_{\min}}{\lambda}\bigr)\mu(0)
    \why{concavity of $\mu$}\\
  \mu(\lambda) &\le \tfrac{\lambda}{\lambda_{\min}}\mu(\lambda_{\min})<0
    \why{$\mu(0)=0$; rearrange}\\
  \Ep_t &\to 0\ \text{a.s.}
    \why{(a): each of the $|\Lambda|$ components has $\log\Ep^{\lambda}_t\to-\infty$}
\end{align*}
For binary scores, with $p_\pm=\Pr(Z_1=\pm1)=(\rho\pm\Delta)/2$ and $\rho>0$:
\begin{align*}
  \mu(\lambda)=p_+\log(1+\lambda)+p_-\log(1-\lambda) &= \tfrac12\rho\log(1-\lambda^2)+\tfrac12\Delta\,c_\lambda
    \why{$Z_1\in\{-1,0,1\}$; $\log(1+\lambda)\pm\log(1-\lambda)$ equal $\log(1-\lambda^2)$ and $c_\lambda$}\\
  &= \tfrac12\rho\,c_\lambda\bigl(\Delta/\rho-x_0(\lambda)\bigr),\qquad x_0(\lambda)=\log\tfrac{1}{1-\lambda^2}/c_\lambda
    \why{factor out $\tfrac12\rho c_\lambda>0$}\\
  \mu(\lambda_{\min})>0 &\iff \Delta/\rho>x_0(\lambda_{\min})
    \why{previous line; $x_0(0.03)=0.000900/0.0600=0.0150$}
\end{align*}

\emph{(c) Certification time.} Fix $\lambda\in\Lambda$ with $\mu>0$ and put $V_s=\ell_s-\mu$.
\begin{align*}
  \ell_s &\in [\log(1-\lambda),\log(1+\lambda)],\ \text{an interval of length }c
    \why{$|Z_s|\le1$}\\
  \{\tau^{B}>t\} &\subseteq \{\Ep_t<\tfrac{\M(\M-1)}{\alpha}\}\subseteq\{\Ep^{\lambda}_t<\tfrac{|\Lambda|\M(\M-1)}{\alpha}\}=\{\textstyle\sum_{s\le t}\ell_s<\eta\}
    \why{first crossing; $\Ep_t\ge\Ep^{\lambda}_t/|\Lambda|$; definition of $\eta$}\\
  \Pr\bigl(\textstyle\sum_{s\le t}\ell_s\le t\mu-x\bigr) &\le \exp\bigl(-2x^2/(tc^2)\bigr),\quad x>0
    \why{Hoeffding's inequality, i.i.d.\ summands with range $c$}\\
  \Pr(\tau^{B}>t) &\le \exp\bigl(-t\mu^2/(2c^2)\bigr)\quad\text{for integer }t\ge2\eta/\mu
    \why{the two previous lines with $x=t\mu-\eta\ge t\mu/2$; the bound decreases in $x$}\\
  |\log(1+u)| &\le 2|u|\quad\text{for }u\in[-\tfrac12,\tfrac12]
    \why{$u\ge0$: $\log(1+u)\le u$; $u<0$: $-\log(1+u)\le\tfrac{1}{1+u}-1=\tfrac{-u}{1+u}\le2|u|$}\\
  v_\lambda &\le \E\ell_1^2\le4\lambda^2\rho
    \why{previous line with $u=\lambda Z_1$}\\
  \Pr\bigl(\textstyle\sum_{s\le t}V_s\le-x\bigr) &\le \exp\Bigl(-\tfrac{x^2}{2(tv_\lambda+cx/3)}\Bigr),\quad x>0
    \why{Bernstein's inequality: $V_s$ i.i.d., centered, $|V_s|\le c$}\\
  \Pr(\tau^{B}>t) &\le \exp\Bigl(-\tfrac{(t\mu/2)^2}{2(tv_\lambda+ct\mu/6)}\Bigr)=\exp\Bigl(-\tfrac{t\mu^2}{8v_\lambda+\frac43c\mu}\Bigr)\quad\text{for integer }t\ge2\eta/\mu
    \why{$\{\tau^{B}>t\}\subseteq\{\sum_{s\le t}V_s<-(t\mu-\eta)\}$ with $t\mu-\eta\ge t\mu/2$; the bound decreases in $x$}
\end{align*}
which is \eqref{eq:tail}; since both bounds hold, $\Pr(\tau^{B}>t)\le e^{-a_\lambda t}$ for
every integer $t\ge2\eta/\mu$, with $a_\lambda$ the larger of the two rates. Write $a=a_\lambda$.
\begin{align*}
  \Pr(\tau^{B}>\lceil t^\star(\varepsilon)\rceil) &\le \varepsilon
    \why{$\lceil t^\star\rceil\ge2\eta/\mu$ and $a\lceil t^\star\rceil\ge\log(1/\varepsilon)$}\\
  \E[\tau^{B}]=\textstyle\sum_{t\ge0}\Pr(\tau^{B}>t) &\le t_0+\textstyle\sum_{t\ge t_0}e^{-at}\le t_0+1+\tfrac1a,\quad t_0=\lceil2\eta/\mu\rceil
    \why{probabilities are at most one below $t_0$; geometric series; $1/(1-e^{-a})\le1+1/a$}\\
  \E[\tau^{B}] &\le 2\eta/\mu+1/a+2
    \why{$t_0\le2\eta/\mu+1$}\\
  \text{the bounds hold for }\tau^{\mathrm{sc}} &\text{ and }\tau
    \why{$\tau\le\tau^{\mathrm{sc}}\le\tau^{B}$ on every path, Proposition~\ref{prop:shortcut}}
\end{align*}

\emph{(d) Default grid.} Let $I=[\Delta/(4.2\rho),\Delta/(2\rho)]$. If
$0.06\le\Delta/\rho\le0.126$ then $0.03\in I$, since $0.03\le\Delta/(2\rho)$ iff
$\Delta/\rho\ge0.06$ and $0.03\ge\Delta/(4.2\rho)$ iff $\Delta/\rho\le0.126$. If
$0.126<\Delta/\rho\le1$, sort the grid increasingly and let $\lambda^{(k)}$ be the largest grid
bet at most $\Delta/(4.2\rho)$; either $\lambda^{(k)}=\Delta/(4.2\rho)\in I$, or
$\lambda^{(k)}<\Delta/(4.2\rho)\le1/4.2<0.5$, so that $\lambda^{(k+1)}$ exists and, consecutive grid
ratios being at most $2.084<2.1$,
$\Delta/(4.2\rho)<\lambda^{(k+1)}\le2.084\,\lambda^{(k)}<\Delta/(2\rho)$. In both cases some
grid bet $\lambda$ lies in $I$, and for it:
\begin{align*}
  \mu\ge\lambda(\Delta-\lambda\rho) &\ge 0.18\,\Delta^2/\rho
    \why{(a); $\lambda(\Delta-\lambda\rho)$ is concave in $\lambda$, with values $0.181\,\Delta^2/\rho$ and $0.25\,\Delta^2/\rho$ at the ends of $I$}\\
  \mu\le\lambda\Delta\le\tfrac{\Delta^2}{2\rho},\qquad c &\le 2.2\lambda\le1.1\,\tfrac{\Delta}{\rho}
    \why{(a); $c_\lambda/\lambda=2\operatorname{artanh}(\lambda)/\lambda$ increases to $2.197$ at $\lambda=\tfrac12$}\\
  \tfrac{2\eta}{\mu}\le\tfrac{11.2\,\rho\eta}{\Delta^2},\qquad \tfrac{2c^2}{\mu^2} &\le \tfrac{2(1.21)}{0.0324}\,\Delta^{-2}=74.7\,\Delta^{-2}
    \why{substitute the bounds on $\mu$ and $c$}\\
  8v_\lambda\le32\lambda^2\rho\le\tfrac{8\Delta^2}{\rho},\qquad \tfrac43c\mu &\le \tfrac43(1.1)\tfrac12\,\tfrac{\Delta^3}{\rho^2}\le0.734\,\tfrac{\Delta^2}{\rho}
    \why{$\lambda\le\Delta/(2\rho)$; the bounds on $c$ and $\mu$; $\Delta\le\rho$ in this regime}\\
  \tfrac{8v_\lambda+\frac43c\mu}{\mu^2} &\le \tfrac{8.734\,\Delta^2/\rho}{0.0324\,\Delta^4/\rho^2}\le270\,\tfrac{\rho}{\Delta^2}
    \why{the two previous lines}
\end{align*}
and \eqref{eq:rate} follows from (c) with $1/a_\lambda\le\min\{74.7,270\rho\}\Delta^{-2}$. For
binary scores the exact growth of $\lambda=0.03$ is the formula in (b), with $c_{0.03}=0.0600$.
\end{proof}
\subsection{Validity, structure and power}

\begin{proof}[Proof of Theorem~\ref{thm:validity}]
Fix the data-generating law $P$, with parameter $\thetastar$ and $\wostar=\wo(\thetastar)$, and
write $\Ep^{\wo}=(\Ep^{\wo}_t)_{t\ge0}$ for the process of \eqref{eq:EW}.

\emph{$\Ep^{\wostar}$ is an e-process under $P$.} Let $\tau$ be a bounded
$(\F_t)$-stopping time.
\begin{align*}
  (j,l)\in\T(\wostar) &\ \Rightarrow\ \thetastar_j\le\thetastar_l
    \why{definition of $\T(\wo)$}\\
  &\ \Rightarrow\ \E_P\bigl[\Ep^{jl}_\tau\bigr]\le1
    \why{Assumption~\ref{ass:eproc}; $\tau$ is a stopping time for the common filtration}\\
  \E_P\bigl[\Ep^{\wostar}_\tau\bigr]
    &= \sum_{(j,l)\in\T(\wostar)}w^{\wostar}_{jl}\,\E_P\bigl[\Ep^{jl}_\tau\bigr]
    \why{eq.~\eqref{eq:EW} in its weighted form (Remark~\ref{rem:weights}); linearity, the weights being deterministic}\\
  &\ \le\ \sum_{(j,l)\in\T(\wostar)}w^{\wostar}_{jl}=1
    \why{$w^{\wostar}_{jl}\ge0$, the previous line, and the weights sum to one}
\end{align*}
Only linearity of expectation is used; the joint law of the pairwise processes plays no role.
That all of them are e-processes for the \emph{same} filtration is needed, because $\tau$ must
be a stopping time for each of them at once.

\emph{Ville's inequality.} Let $\tau^\star=\inf\{t:\Ep^{\wostar}_t\ge1/\alpha\}$ and fix
a deterministic $n$.
\begin{align*}
  \tau^\star &\ \text{is an }(\F_t)\text{-stopping time}
    \why{$\Ep^{\wostar}$ is adapted}\\
  \Ep^{\wostar}_{\tau^\star\wedge n} &\ \ge\ \alpha^{-1}\ind{\tau^\star\le n}
    \why{nonnegativity on $\{\tau^\star>n\}$; the threshold on $\{\tau^\star\le n\}$}\\
  1 &\ \ge\ \E_P\bigl[\Ep^{\wostar}_{\tau^\star\wedge n}\bigr]
    \why{the e-process property above, at the bounded stopping time $\tau^\star\wedge n$}\\
  &\ \ge\ \alpha^{-1}\Pr(\tau^\star\le n)
    \why{the pathwise bound of the previous display}\\
  \Pr(\tau^\star<\infty) &\ \le\ \alpha
    \why{let $n\to\infty$}\\
  \{\exists t:\wostar\notin\CS_t\} &= \{\tau^\star<\infty\}
    \why{definition of $\CS_t$}
\end{align*}
which is \eqref{eq:cover}.

\emph{Family-wise error.} On $\{\tau^\star=\infty\}$, for every $t$:
\begin{align*}
  \wostar &\in\CS_t
    \why{definition of $\tau^\star$ and definition of $\CS_t$}\\
  (j,l)\in\D_t &\ \Rightarrow\ l\wlt_\wo j\ \text{for every }\wo\in\CS_t
    \why{eq.~\eqref{eq:D}}\\
  &\ \Rightarrow\ l\wlt_{\wostar}j
    \why{$\wostar\in\CS_t$}\\
  &\ \Rightarrow\ \thetastar_j>\thetastar_l
    \why{$\wostar=\wo(\thetastar)$}
\end{align*}
So the event in \eqref{eq:fwer} is contained in $\{\tau^\star<\infty\}$, whose probability is at
most $\alpha$ by Ville's inequality above.

\emph{Random times.} Both events are of the form $\{\forall t:\dots\}$, so they imply the
corresponding statement at any $\N_0$-valued random time, whether or not it is a stopping time.
\end{proof}

\begin{proof}[Proof of Proposition~\ref{prop:structure}]
Suppose $\CS_t\ne\emptyset$ and let $\wo\in\CS_t$ be arbitrary.

\emph{Irreflexive.} $\D_t\subseteq\pairs$, which contains no pair $(j,j)$.

\emph{Asymmetric.}
\begin{align*}
  (j,l),(l,j)\in\D_t &\ \Rightarrow\ l\wlt_\wo j\ \text{and}\ j\wlt_\wo l
    \why{eq.~\eqref{eq:D}}\\
  &\ \Rightarrow\ \text{a contradiction}
    \why{$l\wlt_\wo j$ means $l\wle_\wo j$ and not $j\wle_\wo l$}
\end{align*}

\emph{Transitive.} Let $(j,l),(l,m)\in\D_t$.
\begin{align*}
  m &\ne j
    \why{asymmetry}\\
  l\wlt_\wo j\ &\text{and}\ m\wlt_\wo l
    \why{eq.~\eqref{eq:D}}\\
  m &\wle_\wo j
    \why{$\wle_\wo$ is transitive}\\
  j\wle_\wo m\ &\text{is impossible}
    \why{$j\wle_\wo m\wle_\wo l$ would give $j\wle_\wo l$, contradicting $l\wlt_\wo j$}\\
  m &\wlt_\wo j
    \why{the two previous lines}\\
  (j,m) &\in\D_t
    \why{$\wo\in\CS_t$ was arbitrary; eq.~\eqref{eq:D}}
\end{align*}

\emph{Empty confidence set.} If $\CS_t=\emptyset$:
\begin{align*}
  \D_t &= \pairs
    \why{the condition in eq.~\eqref{eq:D} is vacuous}\\
  \{\exists t:\CS_t=\emptyset\} &\subseteq \{\exists t:\wostar\notin\CS_t\}
    \why{$\wostar\in\CS_t$ would make $\CS_t$ nonempty}\\
  \Pr(\exists t:\CS_t=\emptyset) &\ \le\ \alpha
    \why{eq.~\eqref{eq:cover}}
\end{align*}
\end{proof}

\begin{proof}[Proof of Proposition~\ref{prop:dominance}]
Let $(j,l)\in\D^{\mathrm{full}}_t$ and let $\wo$ be any weak order with $(j,l)\in\T(\wo)$; put
$\mathcal{S}=\T(\wo)$.
\begin{align*}
  \mathcal{S} &\subseteq\pairs\ \text{and}\ \mathcal{S}\ni(j,l)
    \why{definition of $\T(\wo)$ and the choice of $\wo$}\\
  \max_{s\le t}|\mathcal{S}|^{-1}\sum_{(a,b)\in\mathcal{S}}\Ep^{ab}_s &\ \ge\ 1/\alpha
    \why{$(j,l)\in\D^{\mathrm{full}}_t$: the full closure rejects every such $\mathcal{S}$}\\
  \Ep^\wo_s &= |\mathcal{S}|^{-1}\sum_{(a,b)\in\mathcal{S}}\Ep^{ab}_s
    \why{eq.~\eqref{eq:EW} with arithmetic-mean weights}\\
  \max_{s\le t}\Ep^\wo_s &\ \ge\ 1/\alpha,\ \text{so }\wo\notin\CS_t
    \why{the two previous lines; definition of $\CS_t$}\\
  (j,l) &\in\D_t
    \why{$\wo$ with $(j,l)\in\T(\wo)$ was arbitrary; eq.~\eqref{eq:D}}
\end{align*}
Example~\ref{ex:three} gives a configuration in which the inclusion is strict; it has been
checked by enumerating the $13$ weak orders and the $63$ subsets.
\end{proof}

\begin{proof}[Proof of Proposition~\ref{prop:shortcut}]
Let $(j,l)\in\D^{\mathrm{sc}}_t$ and choose $s\le t$ with $B^{jl}_s\ge\M(\M-1)/\alpha$. Let
$\wo$ be any weak order with $j\wle_\wo l$, that is $(j,l)\in\T(\wo)$, and let
$m\in\Mset\setminus\{j,l\}$.

\emph{Claim: $j\wle_\wo m$ or $m\wle_\wo l$.}
\begin{align*}
  \text{neither }j\wle_\wo m\text{ nor }m\wle_\wo l &\ \Rightarrow\ m\wlt_\wo j\ \text{and}\ l\wlt_\wo m
    \why{totality of $\wle_\wo$}\\
  &\ \Rightarrow\ l\wlt_\wo j
    \why{transitivity}\\
  &\ \Rightarrow\ \text{a contradiction with }j\wle_\wo l
    \why{$l\wlt_\wo j$ means not $j\wle_\wo l$}
\end{align*}

\emph{Pooling.}
\begin{align*}
  \T(\wo) &\ni (j,l)\ \text{and, for each }m\in\Mset\setminus\{j,l\},\ (j,m)\text{ or }(m,l)
    \why{definition of $\T(\wo)$ and the claim}\\
  \text{these }\M-1\text{ pairs} &\ \text{are distinct}
    \why{two could coincide only if $(j,m)=(m',l)$ for some $m,m'\notin\{j,l\}$, which would force $m'=j$}\\
  \sum_{(a,b)\in\T(\wo)}\Ep^{ab}_s &\ \ge\ \Ep^{jl}_s+\sum_{m\in\Mset\setminus\{j,l\}}\min\{\Ep^{jm}_s,\Ep^{ml}_s\}=B^{jl}_s
    \why{all e-processes are nonnegative; the pair chosen for $m$ contributes at least the minimum}\\
  \Ep^\wo_s &\ \ge\ \frac{B^{jl}_s}{\M(\M-1)}\ \ge\ \frac1\alpha
    \why{eq.~\eqref{eq:EW} with $|\T(\wo)|\le\M(\M-1)$; the choice of $s$}\\
  \wo &\notin\CS_t
    \why{definition of $\CS_t$, since $s\le t$}\\
  (j,l) &\in\D_t
    \why{$\wo$ with $(j,l)\in\T(\wo)$ was arbitrary; eq.~\eqref{eq:D}}
\end{align*}
The same time $s$ serves for all $\wo$, which is why the running maximum may be taken over
$B^{jl}$. The e-Bonferroni rejections are contained in $\D^{\mathrm{sc}}_t$ because
$B^{jl}_s\ge\Ep^{jl}_s$. The cost is one sum of $\M-2$ minima per pair.
\end{proof}

\subsection{Closure, ranks, tiers and retirement}

\begin{proof}[Proof of Lemma~\ref{lem:closure}]
Work on $\good$ and fix $t$. Let $(j,l)\in\Dbar_t$, joined by the directed path
$v_0=j,v_1,\dots,v_r=l$ with $(v_{i-1},v_i)\in\D_t$ for all $i$.
\begin{align*}
  \thetastar_{v_{i-1}} &> \thetastar_{v_i}\quad\text{for every }i
    \why{$(v_{i-1},v_i)\in\D_t$ and definition of $\good$}\\
  \thetastar_j=\thetastar_{v_0} &> \thetastar_{v_1}>\dots>\thetastar_{v_r}=\thetastar_l
    \why{chain the previous line}\\
  &\D_t\ \text{has no directed cycle}
    \why{a cycle through $j$ is such a path from $j$ to itself and would give $\thetastar_j>\thetastar_j$}\\
  &\Dbar_t\ \text{is a strict partial order}
    \why{transitive by construction; irreflexive because $(j,j)\in\Dbar_t$ would be a directed cycle}\\
  &\text{the good event of }(\Dbar_t)\text{ contains }\good
    \why{the second line holds on $\good$ for every $t$ and every $(j,l)\in\Dbar_t$}\\
  &\text{the good event of }(\Dbar_t)\text{ is contained in }\good
    \why{$\D_t\subseteq\Dbar_t$ for every $t$}
\end{align*}
so the two events coincide.
\end{proof}

\begin{proof}[Proof of Theorem~\ref{thm:ranks}]
(a) On the event in \eqref{eq:cover}, for all $j$ and $t$:
\begin{align*}
  \wostar &\in\CS_t
    \why{eq.~\eqref{eq:cover}}\\
  \rank_j(\thetastar) &= \rank_j(\wostar)
    \why{the rank depends on $\theta$ only through $\wo(\theta)$}\\
  &\in\Rset_{j,t}
    \why{definition of $\Rset_{j,t}$ and the first line}
\end{align*}

(b) Work on $\good$ and fix $j$ and $t$.
\begin{align*}
  (l,j)\in\D_t &\ \Rightarrow\ \thetastar_l>\thetastar_j
    \why{definition of $\good$}\\
  \#\{l:\thetastar_l>\thetastar_j\} &\ \ge\ \#\{l:(l,j)\in\D_t\}
    \why{previous line}\\
  \rank_j(\thetastar) &\ \ge\ 1+\#\{l:(l,j)\in\D_t\}=L_{j,t}
    \why{eq.~\eqref{eq:rank} and eq.~\eqref{eq:LU}}\\
  (j,l)\in\D_t &\ \Rightarrow\ l\ne j\ \text{and}\ \thetastar_l<\thetastar_j
    \why{$\D_t\subseteq\pairs$; definition of $\good$}\\
  \#\{l:\thetastar_l>\thetastar_j\} &\ \le\ (\M-1)-\#\{l:(j,l)\in\D_t\}
    \why{each $l$ with $(j,l)\in\D_t$ is one of the $\M-1$ other models and is not counted on the left}\\
  \rank_j(\thetastar) &\ \le\ \M-\#\{l:(j,l)\in\D_t\}=U_{j,t}
    \why{eq.~\eqref{eq:rank} and eq.~\eqref{eq:LU}}
\end{align*}
Ties cause no difficulty: models tied with $j$ are counted on neither side. Since
$\D_t\subseteq\Dbar_t$ and $\Dbar_t$ is certified with the same good event
(Lemma~\ref{lem:closure}), using $\Dbar_t$ can only raise $L$ and lower $U$.

(c) Let $\wo\in\CS_t$; recall $\rank_j(\wo)=1+\#\{l:j\wlt_\wo l\}$.
\begin{align*}
  (l,j)\in\D_t &\ \Rightarrow\ j\wlt_\wo l,\ \text{so }l\text{ is counted in }\rank_j(\wo)
    \why{eq.~\eqref{eq:D}}\\
  \rank_j(\wo) &\ \ge\ 1+\#\{l:(l,j)\in\D_t\}=L_{j,t}
    \why{previous line}\\
  (j,l)\in\D_t &\ \Rightarrow\ l\wlt_\wo j,\ \text{so }l\text{ is not counted}
    \why{eq.~\eqref{eq:D}; $l\wlt_\wo j$ excludes $j\wlt_\wo l$}\\
  \rank_j(\wo) &\ \le\ \M-\#\{l:(j,l)\in\D_t\}=U_{j,t}
    \why{at most the $\M-1$ other models are counted, and those with $(j,l)\in\D_t$ are not}
\end{align*}
Hence $\Rset_{j,t}\subseteq[L_{j,t},U_{j,t}]$.

\emph{Strictness.} Take $\M=3$, $\alpha=0.1$ and, as hypothetical values at some time with no
earlier crossing, $\Ep^{12}=10$, $\Ep^{13}=90$, $\Ep^{21}=0.2$, $\Ep^{32}=21$,
$\Ep^{23}=\Ep^{31}=0$. Enumeration shows that exactly four weak orders survive, namely
$3\weq2\wlt1$, \ $3\wlt2\weq1$, \ $2\wlt3\wlt1$ and $3\wlt2\wlt1$, writing each from its lowest-ranked model to its highest (so $3\weq2\wlt1$ ranks model 1 first and ties models 2 and 3), and that $\D_t=\{(1,3)\}$.
Hence $U_{1,t}=2$, whereas model $1$ has rank $1$ in every surviving weak order, so
$\Rset_{1,t}=\{1\}$. The gain comes from the tie convention: $\Hyp_{12}$ cannot be rejected
while the tie $1\weq2$ survives, yet a tie at the top still gives rank $1$. Projection sets need
not be intervals: with $(\Ep^{12},\Ep^{13},\Ep^{21},\Ep^{23},\Ep^{31},\Ep^{32})=(0,5,0,31,5,31)$
and $\alpha=0.1$, again with no earlier crossing, exactly $1\weq2\wlt3$ and $3\wlt1\weq2$
survive, so $\Rset_{3,t}=\{1,3\}$ while $[L_{3,t},U_{3,t}]=[1,3]$.
\end{proof}

\begin{proof}[Proof of Corollary~\ref{cor:topk}]
On $\good$, for all $j$, $k$ and $t$ at once:
\begin{align*}
  j\in I_{k,t} &\ \Rightarrow\ \rank_j(\thetastar)\ \le\ U_{j,t}\ \le\ k
    \why{Theorem~\ref{thm:ranks}(b); definition of $I_{k,t}$}\\
  j\in O_{k,t} &\ \Rightarrow\ \rank_j(\thetastar)\ \ge\ L_{j,t}\ >\ k
    \why{Theorem~\ref{thm:ranks}(b); definition of $O_{k,t}$}
\end{align*}
\end{proof}

\begin{proof}[Proof of Theorem~\ref{thm:tiers}]
Let $\D$ be a strict partial order on the finite set $\Mset$. For (c) work on $\good$, fix $t$,
and take $\D=\Dbar_t$. Part (d) is deterministic and is proved for any nested pair whose closures
are strict partial orders.
\begin{align*}
  &\text{a chain cannot repeat an element}
    \why{a repetition would give $(v,v)\in\D$ by transitivity}\\
  &\hgt(j)\le\M-1\ \text{is well defined}
    \why{previous line}
\end{align*}

\emph{(b), first claim.} Let $(j,l)\in\D$ and let $l_1,\dots,l_h$ be a chain above $j$.
\begin{align*}
  l_1,\dots,l_h,j &\ \text{is a chain above }l
    \why{$(l_h,j)\in\D$ by the chain, $(j,l)\in\D$ by assumption}\\
  \hgt(l) &\ \ge\ \hgt(j)+1
    \why{take $h=\hgt(j)$}\\
  \tier(j) &< \tier(l)
    \why{$\tier=1+\hgt$}
\end{align*}

\emph{(a).}
\begin{align*}
  &\text{the tiers partition }\Mset
    \why{they are the level sets of the function $\tier$}\\
  &\text{no two models of one tier are related by }\D
    \why{first claim of (b)}\\
  &\text{tiers }1,\dots,\max_j\tier(j)\text{ are nonempty}
    \why{downward induction from the second claim of (b), proved next from the first claim only}
\end{align*}

\emph{(b), second claim.} Let $\hgt(j)=h\ge1$, with a longest chain $l_1,\dots,l_h$ above $j$.
\begin{align*}
  \hgt(l_h) &\ \ge\ h-1
    \why{$l_1,\dots,l_{h-1}$ is a chain above $l_h$}\\
  \hgt(l_h) &\ \le\ h-1
    \why{$(l_h,j)\in\D$ and the first claim}\\
  \tier(l_h) &= \tier(j)-1,\ \text{and }l_h\text{ dominates }j
    \why{the two previous lines; $(l_h,j)\in\D$}
\end{align*}

\emph{(c).} Let $l_1,\dots,l_h$ be a longest chain above $j$, so $h=\hgt(j)$.
\begin{align*}
  (l_i,j) &\in\D\quad\text{for every }i
    \why{transitivity along the chain}\\
  L_{j,t}=1+\#\{l:(l,j)\in\D\} &\ \ge\ 1+h=\tier_t(j)
    \why{the $l_i$ are distinct}\\
  \rank_j(\thetastar) &\ \ge\ L_{j,t}
    \why{Theorem~\ref{thm:ranks}(b) applied to $\Dbar_t$, on $\good$}
\end{align*}
For the last assertion let $j$ have the largest tier, $s$; a longest chain above $j$, followed by
$j$, consists of $s$ models with strictly decreasing $\thetastar$ on $\good$.

\emph{(d).}
\begin{align*}
  \D_t\subseteq\D_{t+1} &\ \Rightarrow\ \Dbar_t\subseteq\Dbar_{t+1}
    \why{a directed path in $\D_t$ is one in $\D_{t+1}$}\\
  &\ \Rightarrow\ \text{every chain above }j\text{ at time }t\text{ is one at time }t+1
    \why{previous line and the definition of a chain}\\
  &\ \Rightarrow\ \tier_t(j)\le\tier_{t+1}(j)
    \why{$\hgt$ cannot decrease, and $\tier=1+\hgt$}
\end{align*}

\emph{(e).} Recall that an antichain is a set of models no two of which are related by $\D$.
Let $s$ be the number of tiers.
\begin{align*}
  &\text{there is a chain of }s\text{ elements}
    \why{a longest chain above a model of maximal tier, followed by that model}\\
  &\text{an antichain meets a chain in at most one element}
    \why{two elements of a chain are related by $\D$}\\
  &\text{every partition into antichains has at least }s\text{ blocks}
    \why{each block contains at most one element of that chain}\\
  &\text{the tiers attain }s
    \why{they are $s$ antichains by (a); \citet{mirsky1971}}
\end{align*}
\end{proof}

\begin{proof}[Proof of Proposition~\ref{prop:retire}]
Let $(\F_t)$ be the filtration of Section~\ref{sec:setup}, which contains the scores of all
models on the items drawn so far, observed or not. Under (F) the order $\Pi$ is drawn before
evaluation, so retirement does not change which item is revealed next.

\emph{Induction on $t$: $A_t$ is $\F_{t-1}$-measurable and $\F^{\mathrm{obs}}_t\subseteq\F_t$.}
\begin{align*}
  A_1 &= \Mset\ \text{is }\F_0\text{-measurable}
    \why{deterministic}\\
  A_s\ \text{is }\F_{s-1}\text{-measurable},\ s\le t
    &\ \Rightarrow\ \ind{j\in A_s},\ \ind{j\in A_s}X_{sj}\ \text{are }\F_t\text{-measurable},\ s\le t
    \why{induction hypothesis; $X_{sj}$ is $\F_s$-measurable and $\F_s\subseteq\F_t$}\\
  &\ \Rightarrow\ \F^{\mathrm{obs}}_t\subseteq\F_t
    \why{$\F^{\mathrm{obs}}_t$ is generated by $\xi$ and these variables}\\
  &\ \Rightarrow\ A_{t+1}\ \text{is }\F_t\text{-measurable}
    \why{$A_{t+1}$ is a measurable function of them and of $\xi$}
\end{align*}

\emph{Stopping times, bets, frozen wealths.}
\begin{align*}
  \{\sigma_j\le t\} &= \{j\notin A_{t+1}\}
    \why{the sets $A_t$ are nested, so $j\in A_s$ exactly for $s\le\sigma_j$}\\
  &\in\F^{\mathrm{obs}}_t
    \why{$A_{t+1}$ is a function of $\xi$ and of the scores observed on items $1,\dots,t$}\\
  \sigma_j\ \text{and}\ \sigma_j\wedge\sigma_l &\ \text{are }(\F_t)\text{-stopping times}
    \why{previous line makes them $(\F^{\mathrm{obs}}_t)$-stopping times, and $\F^{\mathrm{obs}}_t\subseteq\F_t$}\\
  \text{every }(\F^{\mathrm{obs}}_t)\text{-predictable bet} &\ \text{is }(\F_t)\text{-predictable}
    \why{$\F^{\mathrm{obs}}_{t-1}\subseteq\F_{t-1}$}\\
  (\Ep^{jl}_t) &\ \text{is a nonnegative }(\F_t)\text{-supermartingale started at }1
    \why{Proposition~\ref{prop:instances} with these bets, under any law with $\theta_j\le\theta_l$}\\
  (\Ep^{jl}_{t\wedge\sigma_j\wedge\sigma_l}) &\ \text{is a nonnegative }(\F_t)\text{-supermartingale started at }1
    \why{last sentence of Lemma~\ref{lem:betting} at the stopping time $\sigma_j\wedge\sigma_l$}
\end{align*}
Because the sets $A_t$ are nested, both models were scored on every item up to
$\sigma_j\wedge\sigma_l$, so $S_{t-1}$ in \eqref{eq:bfinite} is computable and the frozen process
depends on observed scores only. The analysis runs in the larger filtration; the procedure never
needs the unobserved scores. The frozen processes therefore satisfy Assumption~\ref{ass:eproc},
and every result of Section~\ref{sec:theory} and Appendix~\ref{app:theory} that rests on it
continues to hold.
\end{proof}

\begin{proof}[Proof of Corollary~\ref{cor:cost}]
Under the all-pairs rule a pair $\{j,l\}$ is frozen at $\sigma_j\wedge\sigma_l$, and $\sigma_l$
is the first time every pair at $l$, $\{j,l\}$ included, is resolved (likewise $\sigma_j$). This rule
depends only on observed scores, as Proposition~\ref{prop:retire} requires. So a
pair is frozen only after it is resolved. Until then both models are active, every $Z^{jl}_s$ is
observed, and the bet depends on the pair's own past, which is the same with and without
retirement. So until resolution the pair's wealths equal those computed without retirement. If
$\{j,l\}$ were unresolved at time $\tau^{B}_{q(j,l)}$ it would be unfrozen, its wealth in the
true direction would equal the unretired one, which has crossed $\M(\M-1)/\alpha$, and
e-Bonferroni $\subseteq\D_t$ would resolve it. Hence:
\begin{align*}
  \sigma_j &\le \max_{l\ne j}\tau^{B}_{q(j,l)}
    \why{every pair at $j$ is resolved by its own $\tau^{B}_{q(j,l)}$, as shown above}\\
  \Pr(\sigma_j>t) &\le \textstyle\sum_{l\ne j}\Pr(\tau^{B}_{q(j,l)}>t)\le(\M-1)e^{-t/\beta_j}\quad\text{for integer }t\ge\kappa_j
    \why{union bound; Proposition~\ref{prop:power}(c) at each pair, since $t\ge2\eta/\mu_q$ and $a_q\ge1/\beta_j$}\\
  \E[\sigma_j] &\le t_1+\textstyle\sum_{t\ge t_1}(\M-1)e^{-t/\beta_j}\le t_1+1+\beta_j,\quad t_1=\lceil\max\{\kappa_j,\beta_j\log(\M-1)\}\rceil
    \why{geometric series, $\sum_{t\ge t_1}e^{-t/\beta_j}\le e^{-t_1/\beta_j}(1+\beta_j)$, and $(\M-1)e^{-t_1/\beta_j}\le1$}\\
  \E[\sigma_j] &\le \kappa_j+\beta_j\bigl(\log(\M-1)+1\bigr)+2
    \why{$t_1\le\max\{\kappa_j,\beta_j\log(\M-1)\}+1$}
\end{align*}
which is \eqref{eq:cost}. Under the conditions of \eqref{eq:costgrid},
Proposition~\ref{prop:power}(d) gives, for every pair $q$ at $j$,
$2\eta/\mu_q\le11.2\rho_q\eta/\Delta_q^2\le11.2\rho_j\eta/\Delta_j^2$ and
$1/a_q\le\min\{74.7,270\rho_q\}/\Delta_q^2\le\min\{74.7,270\rho_j\}/\Delta_j^2$, since
$\rho_q\le\rho_j$, $\Delta_q\ge\Delta_j$ and $\min\{74.7,270\rho\}$ is nondecreasing in $\rho$;
summing over $j$ with $\min\{74.7,270\rho_j\}\le270\rho_j$ and $\log(\M-1)\le\eta$ gives the
order statement. On $\good$ no tied pair is ever certified (Theorem~\ref{thm:validity}), so a
model with a tied partner is never retired.
\end{proof}

\subsection{Integer programming and hardness of exact certification}

\begin{proof}[Proof of Proposition~\ref{prop:ilp}]
Identify $y\in\{0,1\}^{\pairs}$ with the relation $\{(a,b):y_{ab}=1\}$ on $\Mset$, and write
$[t]=\{1,\dots,t\}$ for the call times so far. Write $\ind{\T(\wo)}$ for the indicator vector of
$\T(\wo)$, whose entries are $\ind{(a,b)\in\T(\wo)}$ for $(a,b)\in\pairs$. A \emph{total
preorder} is a reflexive, transitive and total relation, that is, a weak order as defined in
Section~\ref{sec:method}.
\begin{align*}
  &y\ \text{satisfies totality and transitivity}\iff y=\ind{\T(\wo)}\ \text{for a unique weak order }\wo
    \why{a relation is a total preorder iff it is total and transitive; $\T(\wo)$ determines $\wo$}\\
  \Ep^\wo_s<1/\alpha &\iff \textstyle\sum_{(a,b)\in\T(\wo)}\Ep^{ab}_s<|\T(\wo)|/\alpha\iff g_s(y)<0
    \why{eq.~\eqref{eq:EW} with arithmetic-mean weights; $|\T(\wo)|=\sum_{ab}y_{ab}$}\\
  \wo\in\CS_t &\iff \max_{s\in[t]}g_s(y)<0
    \why{$\CS_t$ discards $\wo$ at the first call with $\Ep^\wo_s\ge1/\alpha$}\\
  (j,l)\in\D_t &\iff z^\star([t])\ge0
    \why{eq.~\eqref{eq:D}: no $\wo\in\CS_t$ has $(j,l)\in\T(\wo)$; the minimum in \eqref{eq:ilp} is over a finite set and is attained}\\
  z^\star(S) &\le z^\star([t])\quad\text{for every }S\subseteq[t]
    \why{for each $y$ a maximum over fewer times is no larger}
\end{align*}
The algorithm maintains three invariants at every call time $t$. First, every order in the pool
lies in $\CS_t$. This holds because an order enters the pool only after passing every survival
constraint and leaves at the first call at which it fails one, and $\CS_t$ is decreasing in $t$.
Second, a pair covered by the pool is not in $\D_t$, because the covering order has
$\max_{s\in[t]}g_s<0$. Third, a pair added to $\D$ is in $\D_t$, by
Proposition~\ref{prop:shortcut} when it is added through $B^{jl}_t$, and by the last two lines
above when it is added because $z^\star(S\cup\{t\})\ge0$. Every pair not in $\D$ is covered when the call
ends. Every pair in $\D$ stays in $\D_{t'}$ for $t'\ge t$ because $\D_{t'}$ is monotone, so never
revisiting a certified pair loses nothing. Hence the returned set is $\D_t$. Each pass through
the loop for one pair either stops or adds to $S$ a call time at which the minimizer $\wo$ has
$g_s\ge0$. That time is not in $S\cup\{t\}$, because $\max_{s\in S\cup\{t\}}g_s(y)=z^\star(S\cup\{t\})<0$ for
the minimizer. Each failed pass therefore adds a new time from $[t-1]$, so the loop stops after at
most $t$ solves.
\end{proof}

\begin{proof}[Proof of Proposition~\ref{prop:hard}]
The reduction is from the feedback arc set problem in tournaments. Given a tournament $G$ on
$[\M]$, one arc between every two vertices, and an integer $K$, the problem is to decide whether
some linear ordering of $[\M]$ has at most $K$ \emph{backward} arcs, arcs $a\to b$ with $b$
placed above $a$. This problem is NP-hard \citep{alon2006,charbit2007}. Write
$C=\binom{\M}{2}$. An ordering and its reverse have $C$ backward arcs between them, so some
ordering has at most $\lfloor C/2\rfloor$, and instances with $K\ge\lfloor C/2\rfloor$ are
trivial. The problem therefore stays NP-hard with $1\le K\le\lfloor C/2\rfloor-1$, which we
assume.

\emph{Construction.} Take $\M'=\M+2$ models, the vertices of $G$ and two more, $j$ and $l$. Let
$C'=\binom{\M'}{2}$, let $\nu$ be the smallest integer with $2^\nu\ge C'$, and set
$c=2^\nu(K+\tfrac12+\M)/C'$ and $\alpha=1/c$, which lies in $(0,1)$ since $2^\nu\ge C'$ and
$K+\tfrac12+\M>1$. The wealth matrix is: for vertices $a\ne b$, $\Ep^{ab}=2^\nu$ if $a\to b$ is an
arc of $G$ and $\Ep^{ab}=0$ otherwise; $\Ep^{jl}=0$ and $\Ep^{lj}=2^\nu$; and
$\Ep^{ja}=\Ep^{aj}=\Ep^{la}=\Ep^{al}=2^{\nu-1}$ for every vertex $a$. Recall that $(a,b)\in\T(\wo)$,
that is $a\wle_\wo b$, reads ``$b$ at least as good as $a$'', so an arc $a\to b$ contributes $2^\nu$
exactly when $\wo$ places $b$ at least as high as $a$. The instance has $\M'(\M'-1)$ wealth
entries in $\{0,2^{\nu-1},2^\nu\}$ with $2^\nu<2C'$ and a rational $\alpha$ with numerator and denominator
below $2C'(K+\M+1)$, all of $O(\log\M)$ bits, so it is constructed in polynomial time. We show
that $(j,l)\notin\D_1$ if and only if $G$ has an ordering with at most $K$ backward arcs. Since
the feedback arc set problem is NP-hard, deciding $(j,l)\in\D_1$ is coNP-hard.

\emph{If.} Let an ordering of $[\M]$ have $B\le K$ backward arcs and let $\wo$ rank the
vertices by it, all strictly, with $j$ strictly below every vertex and $l$ strictly above.
\begin{align*}
  \textstyle\sum_{(a,b)\in\T(\wo)}\Ep^{ab} &= 2^\nu B+\M\cdot2^{\nu-1}+\M\cdot2^{\nu-1}+0=2^\nu(B+\M)
    \why{backward arcs; one of $(j,a),(a,j)$ per vertex, likewise for $l$; $(j,l)\in\T(\wo)$ costs $\Ep^{jl}=0$}\\
  \Ep^\wo &= 2^\nu(B+\M)/C'\ \le\ 2^\nu(K+\M)/C'\ <\ c
    \why{$|\T(\wo)|=C'$, no ties; $B\le K$}
\end{align*}
so $\wo$ survives, $(j,l)\in\T(\wo)$, and $(j,l)\notin\D_1$.

\emph{Only if.} Let $\wo$ survive with $(j,l)\in\T(\wo)$. Ties are allowed, so let $B$ be the
number of backward arcs between vertices that $\wo$ orders strictly, $n_v$ the number of
tied pairs of vertices, and $n_{\mathrm{tie}}$ the number of tied pairs of all kinds, so that
$|\T(\wo)|=C'+n_{\mathrm{tie}}$. A tied pair of vertices costs $2^\nu+0$. A vertex strictly ordered against $j$
costs $2^{\nu-1}$ and one tied with $j$ costs $2^\nu$, likewise for $l$. The pair $\{j,l\}$ costs
$0$ if $j\wlt_\wo l$ and $2^\nu$ if the two are tied. Hence
\begin{align*}
  \textstyle\sum_{(a,b)\in\T(\wo)}\Ep^{ab} &\ \ge\ 2^\nu\bigl(B+n_v\bigr)+2^\nu\M+2^{\nu-1}(n_{\mathrm{tie}}-n_v)
    \why{every tie outside the vertex pairs adds at least $2^{\nu-1}$ to the strict baseline $2^\nu(B+\M)$}\\
  \frac{2^\nu(B+\M+\tfrac{n_{\mathrm{tie}}}2+\tfrac{n_v}2)}{C'+n_{\mathrm{tie}}} &< c=\frac{2^\nu(K+\tfrac12+\M)}{C'}
    \why{$\wo$ survives; $|\T(\wo)|=C'+n_{\mathrm{tie}}$}\\
  B+\tfrac{n_v}2+n_{\mathrm{tie}}\Bigl(\tfrac12-\tfrac{K+\frac12+\M}{C'}\Bigr) &< K+\tfrac12
    \why{multiply out and cancel $2^\nu\M$}\\
  K+\tfrac12+\M\ \le\ \tfrac{\M(\M-1)}4+\M-\tfrac12 &< \tfrac{(\M+2)(\M+1)}4=\tfrac{C'}2
    \why{$K\le\lfloor C/2\rfloor-1\le\M(\M-1)/4-1$}\\
  B+\tfrac{n_v}2 &< K+\tfrac12
    \why{the coefficient of $n_{\mathrm{tie}}$ is positive}
\end{align*}
Now refine $\wo$ to a linear ordering of the vertices: keep the strict comparisons and order
each class of tied vertices so that at most half of the arcs inside it are backward. This is
possible because an ordering of the class and its reverse have every internal arc backward
exactly once. The arcs inside the classes number $n_v$, so the ordering has at most
$B+n_v/2<K+\tfrac12$ backward arcs, hence at most $K$. Allowing ties therefore cannot help an
ordering survive.

\emph{Realizability.} With $N=\infty$ the factor of the update \eqref{eq:wealth} is $1+\lambda Z$
with $Z\in\{-1,0,1\}$ for binary scores, so with bets in $\{0,1\}$ every factor is $0$, $1$ or
$2$ and every wealth is $0$ or a power of two. Give each ordered pair $(a,b)$ its own items, on
which every other pair bets $0$. With bet $1$, $k$ items on which model $a$ scores $1$ and model
$b$ scores $0$ make $\Ep^{ab}=2^k$. With bet $1$, one item on which $a$ scores $0$ and $b$ scores
$1$ makes $\Ep^{ab}=0$. At most $(\nu+1)\M'(\M'-1)$ items are needed, the schedule is fixed
in advance, hence predictable, and the certifier is called once, after the last item. The
default mixture bet is not covered: its factors are at least $\tfrac12$, so its wealths never
vanish.
\end{proof}

\section{Scope of the guarantee and open problems}
\label{app:rigor}

Every result stated in this paper is proved in Appendix~\ref{app:proofs}. Four kinds of
machine check accompany the proofs but do not replace them. The counts of weak orders,
$13$ of $63$ intersections for $\M=3$ and $75$ of $4095$ for $\M=4$, are verified by exhaustive
enumeration, as is the three-model example of Appendix~\ref{app:theory}. That the exact test
certifies everything the full closure certifies is checked at $\M=3$ by enumerating all $63$ subsets for each of
many random wealth matrices. The remaining containments, between e-Bonferroni, the shortcut and
the exact test, are checked on random wealth matrices at $\M=3$, $4$ and $5$. The fourth averages the
finite-benchmark wealth exactly over all item orders and confirms that its expectation is at most
one at every time. These are implementation checks on finitely many inputs.

\paragraph{What the guarantee does not cover.}
The method needs a scalar ability per model, and pairwise-battle data do not supply one. In
battle data a platform picks two models, both answer the same prompt, and a judge picks the
winner. The natural parameter is then $p_{jl}=\Pr(j\text{ beats }l)$, and
``$p_{jl}>\tfrac12$'' is not transitive in general: $j$ may usually beat $l$, and $l$ usually
beat $m$, while $m$ usually beats $j$. Lemma~\ref{lem:closure} and the rank \eqref{eq:rank} rest
on transitivity of $>$ on $\R$. The method also needs the set of models fixed before any data
arrive, because a new entrant enlarges $\W$ and changes every $\T(\wo)$. Models arriving over time
would require online closed testing, which admits new hypotheses as they arrive
\citep{fischerramdas2024}, or an error budget split in advance among future entrants. Finally, the
method needs the item order not to depend on the data.
Lemma~\ref{lem:betting} allows a pair to be frozen at a stopping time under both sampling models,
but not the choice of \emph{which item} to evaluate next: under (S) that breaks the independence
of $X_t$ from $\F_{t-1}$, and under (F) it breaks the uniformity of $\Pi(t)$. The
inverse-probability-weighted signal of \citet{zhou2026celeus} is a natural repair and is not
worked out here. Pausing a pair and resuming it is covered under (F) only if the skipped items
are scored retroactively before its wealth is updated again, since $S_{t-1}$ in
\eqref{eq:bfinite} needs every revealed difference. Scoring the skipped items forgoes the saving
that pausing was meant to buy.

\paragraph{What is open inside it.}
Deciding membership in $\D_t$ is coNP-hard over the wealth matrices that the update
\eqref{eq:wealth} can produce with some predictable bets
(Proposition~\ref{prop:hard}), by a reduction that uses bets in $\{0,1\}$. The default mixture
bet keeps every wealth strictly positive, and whether it can produce such hard instances is
unknown. Nothing in this paper depends on the answer: Algorithm~\ref{alg:ilp} solved every
instance in our experiments, and Algorithm~\ref{alg:shortcut} is the polynomial fallback. On the power
side, Proposition~\ref{prop:power} and Corollary~\ref{cor:cost} give consistency, certification
times and the expected cost of retirement. That cost bound covers the all-pairs rule under (S).
It does not cover the top-$k$ rule behind the compute savings we report, under which a pair can
be frozen before it is certified. Validity under either rule follows from Proposition~\ref{prop:retire},
which needs no such bound. What remains open is a rate under (F), a cost bound for the top-$k$ rule, and a rate for
the exact test sharper than the bound it inherits from e-Bonferroni.

\section{Experiments}
\label{app:experiments}

The data, the designs and the definition of an error are common to every experiment and are
given first. Each subsection after that asks one question and answers it. Appendices~\ref{app:epeek}
to~\ref{app:eprice} carry the numbers reported in Section~\ref{sec:experiments}, in the order in
which it reports them, and Appendices~\ref{app:dependence} and~\ref{app:ecorr} add experiments
that Section~\ref{sec:experiments} does not report.

\paragraph{The real leaderboard.}
The data are the per-item results of $395$ Open LLM Leaderboard models, as released with
tinyBenchmarks \citep{maiapolo2024tiny,openllmleaderboard2023}, on six benchmarks: MMLU
\citep{hendrycks2021mmlu} ($14{,}042$ items), HellaSwag \citep{zellers2019hellaswag} ($10{,}042$),
GSM8K \citep{cobbe2021gsm8k} ($1319$), WinoGrande \citep{sakaguchi2020winogrande} ($1267$),
ARC-Challenge \citep{clark2018arc} ($1172$) and TruthfulQA \citep{lin2022truthfulqa} ($817$, whose per-item scores can take any value in $[0,1]$).
The estimand is (F): the truth is the ranking by mean score on the whole benchmark, so every
certified dominance can be checked against it. Three model subsets are used, the $20$ and the $8$
highest-scoring models by mean score over the six benchmarks and the \emph{spread subset} of $20$ models evenly spaced
through the ranking of all $395$, together with all $395$ released models with no further selection.
Items are revealed in a uniformly random order, redrawn $50$ times per setting, and the procedure
reports every $1\%$ of the benchmark, with checkpoints at fractions $0.05$, $0.10$, $0.25$,
$0.50$, $0.75$ and $1.00$. Throughout, $\alpha=0.05$, the bets are the mixture grid of
Section~\ref{sec:method}, and the width of a rank interval is $U_{j,t}-L_{j,t}$, zero for a model
whose rank is resolved. Certification uses the shortcut for the twenty-model subsets and for
all $395$, and enumeration for the eight-model subset.

\paragraph{The simulations.}
Two simulated designs recur. The first streams i.i.d.\ items under (S), and the models share each
item's difficulty, so that they are positively dependent. Item $t$ has difficulty
$d_t\sim N(0,1)$, and model $j$, with logistic ability $\omega_j$, answers it correctly with
probability $1/(1+e^{-(\omega_j-d_t)})$, independently of the other models given $d_t$. Its six
models have logistic abilities $(0.5,\dots,0.5)$, all tied at accuracy $0.602$; $(1,1,0.6,0.6,0.2,-0.2)$,
two tied pairs; $(1,0.98,0.62,0.60,0.22,0.20)$, near ties; $(1,0.8,0.6,0.4,0.2,0)$, evenly
spread; or, in the bet comparison of Appendix~\ref{app:ecorr}, $(0.5,0.4,0.3,0.2,0.1,0)$, a close
race with gaps of about $0.02$. True accuracies lie between $0.46$ and $0.70$. This design is used
wherever a known tie is needed, because ties are where a multiple-comparison procedure is tested
hardest. The second design is (F). It draws one benchmark of $N=5000$ items from the same
logistic model, and the truth is each model's accuracy on those items. It has either $20$ models with accuracies from $0.40$ to $0.78$ and
adjacent gaps between $0.016$ and $0.022$, or $6$ models from $0.50$ to $0.70$ with gaps of about
$0.04$, or $6$ models whose two leaders have equal logistic ability.
Replication counts are given with each experiment, and shaded bands in the figures are two
standard errors.

\paragraph{What counts as an error.}
A run counts as wrong if any report it issues, at any monitored time, certifies a dominance that
is false. A rank interval that misses a true rank implies such a dominance, by
Theorem~\ref{thm:ranks}(b), so this single count covers the dominances and the intervals
together; the exact rank sets $\Rset_j$ of \eqref{eq:D} are not scored separately. It is the error rate faced by
an analyst who is free to stop at any look, which is the quantity the guarantee bounds.

\paragraph{The fixed-sample competitors.}
Three constructions stand in for current practice. Holm rank sets apply Holm's step-down
correction \citep{holm1979} to the $\M(\M-1)$ one-sided paired tests and convert the rejections
into rank intervals by the counting formula \eqref{eq:LU}. A \emph{step-down} correction tests the
hypotheses in order of their evidence and relaxes the threshold as hypotheses are rejected. Holm rank
sets are the simultaneous rank intervals of \citet{neuhof2024confident}, which
\citet{neuhof2026mmlu} also compute for a leaderboard. Here the paired test is either a paired
$z$-test or an exact one-sided McNemar test \citep{mcnemar1947}, which is a sign test on the
items where the two models' scores differ. Romano--Wolf step-down rank
sets \citep{romano2005,mogstad2024} set the threshold by a bootstrap of the studentized
maximum, the largest over pairs of the mean paired difference divided by its standard error. The
bootstrap is a centered multiplier bootstrap with $200$ draws, which reweights the centered
per-item differences by independent random multipliers. All are recomputed at each look where
they are monitored.

\paragraph{Implementation.}
The implementation, a few hundred lines of Python, follows Section~\ref{sec:method} alone. Its
automated tests run the machine checks listed in Appendix~\ref{app:rigor}. A separate simulation shows that the bound of
Theorem~\ref{thm:validity} is nearly attained: with four tied models of accuracy $0.5$ scored
independently, $400$ i.i.d.\ items under (S) checked after every item, a fixed bet of $0.5$ and
$\alpha=0.2$, the true weak order is rejected at some time in $18.4\%$ of $4000$ runs, with a
standard error of $0.6\%$.

\subsection{Does monitoring break a fixed-sample rank set, and does ours survive it?}
\label{app:epeek}

\paragraph{Setup.}
Six models of the first simulated design, all tied, with tied pairs at ranks 1 and 3, or with near ties, are evaluated on $2000$ items, and every procedure
sees the same data. The fixed-sample rank sets are recomputed at each of a number of equally
spaced looks, which is the horizontal axis of Figure~\ref{fig:peeking}. Ours is checked at the same
looks, although its guarantee would also cover a check after every item. There are $5000$ replications.

\paragraph{Monitoring destroys the guarantee of a fixed-sample rank set and leaves ours intact.}
At $200$
looks the fixed-sample sets make a false statement in $30\%$ of runs with the exact McNemar test and
$51\%$ with the paired $z$-test, against $1.6\%$ for our exact test. At a single pre-planned look the same two
procedures are valid, at $3.3\%$ and $3.6\%$, so what fails is the repeated look and not the
normal approximation. The effect does not need exact ties. With tied pairs at ranks 1 and 3 the
figures are $9.7\%$ and $18\%$ against our $0.3\%$, and when no two models are exactly tied but
some gaps are $0.004$ they are $4.6\%$ and $11.7\%$ against our $0.2\%$.

\subsection{How much of a real leaderboard can be certified?}
\label{app:ereal}

\paragraph{Setup.}
The real leaderboard described above, in two forms: the three selected subsets of
twenty and eight models, and all $395$ released models with no further selection. Selecting
the twenty highest-scoring models is legitimate under (F), because the score matrix is fixed
before the item order is drawn, but it is not a selection an operator could make prospectively,
which is why the unselected run matters.

\paragraph{Half a benchmark certifies up to half of the true dominances among close models, and
$26\%$ to $86\%$ of them among all $395$ models.}
At half of MMLU the twenty highest-scoring models
fall into $3.0$ certified tiers with $50\%$ of the $190$ true dominances certified and a mean rank-interval
width of $9.5$, while the spread subset reaches $11.2$ tiers. At three quarters of MMLU ours certifies
$63\%$ of the top-twenty true dominances against $51\%$ for Holm and $54\%$ for Romano--Wolf, with
$4.9$ tiers against $3.3$ and $3.6$. Across the $900$ runs that the three subsets, six benchmarks
and $50$ orders make up, exactly one report was false, in one item order of the HellaSwag
eight-model set, well below $\alpha=5\%$.

\paragraph{The whole leaderboard.}
Removing the selection makes certification easier, even though the correction grows.
With all $395$ models,
$86\%$ of the $77{,}802$ true dominances of MMLU are certified at half the benchmark and $99.6\%$
at the end, against $83\%$ and $88\%$ for Holm, and the certified partial order has $225$ tiers
against Holm's $26$. Table~\ref{tab:full} gives the six benchmarks; WinoGrande is the hard case,
at $26\%$ of true dominances at half. Ours never certified a false dominance in $300$ runs monitored at about a hundred looks
each, and neither did the Holm sets, which use the paired $z$-test and are recomputed at the six
checkpoints. With $395$ models the shortcut threshold is $\M(\M-1)/\alpha=3{,}112{,}600$, but the correction
grows only as $2\log\M$ while the number of easy pairs grows as $\M^2$. These percentages come
from the shortcut, so by Proposition~\ref{prop:shortcut} the exact test would certify at least as many. Romano--Wolf is omitted
here because its bootstrap over $155{,}630$ pairs is not affordable at this $\M$.

\begin{table}[t]
\caption{All $395$ released models, no further selection, $50$ item orders per benchmark. True dominances
certified, as a percentage of those that hold on the whole benchmark; tiers at the end. Holm uses
the paired $z$-test.}
\label{tab:full}
\centering\small
\begin{tabular}{@{}lrrrrrrrr@{}}
\toprule
 & & \multicolumn{2}{c}{$0.25$} & \multicolumn{2}{c}{$0.50$} & \multicolumn{2}{c}{$1.00$} & tiers\\
\cmidrule(lr){3-4}\cmidrule(lr){5-6}\cmidrule(lr){7-8}\cmidrule(lr){9-9}
Benchmark & items & ours & Holm & ours & Holm & ours & Holm & ours / Holm\\
\midrule
MMLU          & $14{,}042$ & $76$ & $77$ & $86$ & $83$ & $99.6$ & $88$ & $225$ / $26$\\
HellaSwag     & $10{,}042$ & $62$ & $63$ & $77$ & $74$ & $99.2$ & $81$ & $165$ / $24$\\
GSM8K         & $1319$     & $63$ & $66$ & $78$ & $75$ & $98.9$ & $83$ & $124$ / $14$\\
WinoGrande    & $1267$     & $5$  & $6$  & $26$ & $17$ & $95.5$ & $34$ & $53$ / $5$\\
ARC-Challenge & $1172$     & $18$ & $22$ & $48$ & $39$ & $97.0$ & $56$ & $68$ / $8$\\
TruthfulQA    & $817$      & $19$ & $35$ & $55$ & $53$ & $97.0$ & $66$ & $69$ / $13$\\
\bottomrule
\end{tabular}
\end{table}

\paragraph{Why the end of a benchmark flatters our sets.}
The large gap at fraction $1.0$ in Figure~\ref{fig:real} is not a power difference. Under (F) a
pair is resolved with certainty once the unrevealed items can no longer overturn it, and the
offset \eqref{eq:bfinite} captures most of this. The fixed-sample constructions instead treat the
benchmark as a sample. They therefore certify only $55$--$58\%$ of the top-twenty true dominances on
MMLU at the point where the ranking is in fact known exactly. Our own sets fall short of the full
ranking for pairs whose margin on the whole benchmark is one or two items. The bet uses only the
null's bound on the conditional mean of the next difference. A null that has become impossible,
because the revealed items already settle the pair, is therefore not by itself infinite evidence.
After that point the offset sits at its lower clip $-1+\delta$, with $\delta=0.01$, and each item
multiplies the wealth by at most $1+\lambda(2-\delta)/\delta$. Pairs with a margin of one or two
items have very few such items. The fair comparison is therefore at fractions up to one half,
where the three procedures are close and the fixed-sample sets are ahead early on.

\subsection{How much compute does retirement save?}
\label{app:ecost}

\paragraph{Setup.}
Cost is counted in model-item evaluations against a fixed benchmark of $N=5000$ items, divided by
what a full evaluation of every model would cost, with $300$ replications and reports every $25$
items. Five retirement rules are compared: no retirement; stopping everything once every model's
top-$k$ status is certified; retiring a model once its own top-$k$ status is certified, which is
the rule of Section~\ref{sec:method}; the same but only once every pair between the model and a
model whose status is still open is certified, so that a retirement can never block another
certification; and, for the full ranking, retiring a model once every pair involving it is
certified. The group-sequential comparison is a Pocock design \citep{pocock1977}, as used by
\citet{arviv2026stop}, who add items in batches of $100$. Our implementation compares a
$z$-statistic with one constant critical value at each of ten equally spaced looks fixed in
advance, without a multiplicity correction, and a second version adds a Bonferroni correction
over the $380$ ordered pairs. Both versions close their certifications under transitivity.

\paragraph{Retiring each model as soon as its own question is answered costs about a fifth of a
full evaluation, and stopping everything at one moment costs four times as much.}
Certifying the
top-$3$ status of all $20$ simulated models costs $22\%$ of a full run under the retirement rule
against $84\%$ under the common stopping rule, with no report ever wrong. With six models and
$k=2$ the three top-$k$ rules cost $44\%$, $22\%$ and $30\%$: the cautious fourth rule gained
nothing, since the plain rule already reached the goal in every run, as did no retirement at all. Certifying the entire ranking is far more expensive, $51\%$ of a full run for
six models and $90\%$ for twenty, because adjacent gaps of about $0.02$ are resolved only when
the finite benchmark is nearly exhausted. The mixture bet cost less than a fixed bet of $0.25$ in
every setting but one, for example $51\%$ against $63\%$ for the full ranking of six models. The
exception is two leaders of equal logistic ability among six models, where the fixed bet was
slightly cheaper under the three top-$k$ rules.

\paragraph{The cost barely depends on $k$.}
Between $k=2$ and $k=16$ among twenty close models the cost stays at $21$--$24\%$ of a full run,
because it is set by the models adjacent to the $k$th place and not by $k$ itself. At $k=1$
and $k=19$ it falls to $18\%$ and $16\%$, where only one side of the boundary has to be resolved.
Ours reaches the goal in $97.5$--$99.5\%$ of runs at every $k$, over $200$ replications.

\paragraph{On the real leaderboard.}
The top-$3$ goal costs $31\%$ of a full MMLU run for the twenty highest-scoring models and $4\%$
for the spread subset, and $57\%$ of GSM8K and $68\%$ of HellaSwag for
the top twenty, reached in $96\%$ to $100\%$ of orders. Where the top twenty are
indistinguishable, on ARC-Challenge and WinoGrande, retirement saves almost nothing, as
expected. Table~\ref{tab:fullcost} shows that applying the same rule to all $395$ models costs a smaller
fraction still, because most models on the leaderboard are far from the top-$3$
boundary and retire almost at once, whereas the twenty highest-scoring are near-ties by
construction. The comparison is of fractions and not of compute: $6.2\%$ of the $395$-model
evaluation of MMLU is about four times the absolute cost of $31\%$ of the twenty-model one. Every
model's status is settled in $100\%$ of orders on MMLU and HellaSwag. On the other four
benchmarks some models near the boundary stay unsettled, about two on ARC-Challenge and
WinoGrande, three on GSM8K and nine on TruthfulQA. The all-or-nothing goal is therefore reached
in only $0$--$2\%$ of orders, while $97.7$--$99.5\%$ of models are settled. No report was wrong in
any of these runs.

\begin{table}[t]
\caption{Certifying every model's top-$3$ status, all $395$ models, $50$ item orders per
benchmark. Cost is model-item evaluations divided by $395N$; settled is the mean fraction of models whose top-$3$ status is certified at the end.}
\label{tab:fullcost}
\centering\small
\begin{tabular}{@{}lrr@{}}
\toprule
Benchmark & cost & settled\\
\midrule
MMLU          & $6.2\%$  & $100.0\%$\\
HellaSwag     & $15.2\%$ & $100.0\%$\\
GSM8K         & $16.7\%$ & $99.2\%$\\
WinoGrande    & $47.3\%$ & $99.5\%$\\
ARC-Challenge & $36.9\%$ & $99.5\%$\\
TruthfulQA    & $27.3\%$ & $97.7\%$\\
\bottomrule
\end{tabular}
\end{table}

\paragraph{Against a group-sequential design.}
The Pocock design is as cheap as ours and answers the question far less often. With the twenty
simulated models it reaches the goal in $56\%$ of runs against our $99\%$, because a $z$-test
cannot use the fact that the benchmark is finite and its looks must be fixed in advance. With no
multiplicity correction and only asymptotic validity, it also errs in $2.7\%$ of runs at that
cost, and in $3.7\%$ without retirement. The same holds across the sweep over $k$, where the
uncorrected design reaches the goal in $56$--$68\%$ of runs with a false statement in $3$--$4\%$.
A Bonferroni correction supplies the missing multiplicity correction but removes the saving. The
corrected version costs $32\%$ and reaches the goal in $3\%$ of runs. With two leaders of equal
logistic ability among six models, whose accuracies on the realized benchmark differ only by chance, the
uncorrected design errs in $1\%$ of runs, and in $2\%$ without retirement, while ours errs in
none, at an equal cost of $12\%$. All costs are averaged over every run, whether the goal was
reached or not.

\subsection{What does anytime validity cost?}
\label{app:eprice}

\paragraph{Setup.}
The comparison is made at the one look where a fixed-sample procedure is valid, so that it
measures only the power given up for the right to monitor. In simulation the first design runs
to $2000$ items, with either Holm construction run once and our exact test monitored at $200$
looks. On the real leaderboard the three fixed-sample constructions are compared with ours on the
twenty highest-scoring models, with Romano--Wolf computed at the six checkpoints only, since its
bootstrap is too slow for every look.

\paragraph{In simulation the price of anytime validity is a fraction of one dominance.}
After $2000$ items with two tied pairs, both
Holm constructions certify $12.98$ of $13$ true dominances and ours $12.79$; with near ties the
figures are $11.97$ and $11.62$ of $15$. On the real leaderboard the fixed-sample sets lead by up to eight
percentage points at a quarter of a benchmark, and by half that lead is gone: ours matches or
exceeds Romano--Wolf on five of the six benchmarks and is behind by four points on TruthfulQA. Table~\ref{tab:realE4}
shows Romano--Wolf to be the strongest of the three. Beyond one half ours pulls ahead for a reason unrelated to
power, given in Appendix~\ref{app:ereal}. This cost buys the right to look: the same
fixed-sample sets, once monitored, err in up to $4\%$ of orders (McNemar on TruthfulQA aside),
and ours in none. The
monitoring penalty is smaller here than in the tied simulations because the gaps among the top
twenty are mostly nonzero.

\begin{table}[t]
\caption{Fixed-sample constructions against ours, twenty highest-scoring models, $50$ item
orders. Top: true dominances certified at one half, as a percentage of those that exist. Bottom:
probability that some report, at any look, was false.}
\label{tab:realE4}
\centering\small
\begin{tabular}{@{}lrrrrrr@{}}
\toprule
 & MMLU & HellaSwag & GSM8K & WinoGrande & ARC-Challenge & TruthfulQA$^{a}$\\
\midrule
\multicolumn{7}{@{}l}{\emph{True dominances certified at one half of the benchmark (\%)}}\\
Ours, monitored        & 50 & 49 & 22 & 2 & 0 & 17\\
Holm, McNemar          & 47 & 47 & 19 & 1 & 0 & 61$^{b}$\\
Holm, paired $z$-test  & 48 & 48 & 20 & 1 & 0 & 16\\
Romano--Wolf           & 49 & 49 & 22 & 1 & 0 & 21\\
\midrule
\multicolumn{7}{@{}l}{\emph{Probability of a false statement at some look}}\\
Ours, monitored        & 0 & 0 & 0 & 0 & 0 & 0\\
Holm, McNemar          & 0 & 0.02 & 0 & 0 & 0 & 1.00$^{b}$\\
Holm, paired $z$-test  & 0 & 0.02 & 0.04 & 0 & 0.02 & 0\\
Romano--Wolf           & 0.02 & 0.02 & 0.02 & 0.04 & 0.02 & 0\\
\bottomrule
\end{tabular}

\smallskip
\raggedright\footnotesize
$^{a}$Per-item scores in $[0,1]$, not binary. $^{b}$The sign test of McNemar's procedure
targets the median of the paired differences, which is not the mean-score estimand for non-binary
scores. Its certifications on TruthfulQA are not comparable and every order contained a false
one. This is a test mismatch, not a monitoring effect.
\end{table}

\subsection{Does the guarantee depend on how model errors are correlated?}
\label{app:dependence}

\paragraph{Setup.}
Each regime holds every model's marginal accuracy fixed and varies only the dependence among its
item-level scores, so that any difference between regimes is attributable to dependence alone.
For six models and $4000$ i.i.d.\ items under (S), with the accuracies of the first simulated
design as marginals, binary scores come from a Gaussian copula,
$X_{tj}=\ind{a_j\sqrt{\gamma}\,G_t+\sqrt{1-\gamma}\,\varepsilon_{tj}\le\Phi^{-1}(\theta_j)}$ with
$G_t$ and $\varepsilon_{tj}$ independent standard normal, $\Phi$ the standard normal distribution
function, $\gamma\in[0,1)$ the weight of the shared factor $G_t$ and $a_j\in\{-1,1\}$ the sign with
which model $j$ loads on it. This gives $\Pr(X_{tj}=1)=\theta_j$ for every $\gamma$ and every
choice of signs. The four regimes are independence ($\gamma=0$), positive equicorrelation at
$\gamma=0.5$ and $\gamma=0.9$ with every $a_j=1$, and a mixed-sign regime at $\gamma=0.8$ in which
three models load positively and three negatively; the resulting binary correlations are about
$0$, $0.32$, $0.70$ and $\pm0.55$. Monitoring is every $20$ items, with the mixture bet and the exact
certifier, $5000$ replications for the tied configurations and $2000$ for the others, and the
same code runs unchanged in every regime.

\paragraph{Validity is unchanged across dependence structures and speed is not.}
With six tied models the
anytime false-statement probability is $2.0\%$, $1.5\%$, $1.6\%$ and $1.6\%$ across the four
regimes, and with two tied pairs it is $0.6\%$, $0.4\%$, $0.4\%$ and $0.5\%$, all at
$\alpha=0.05$. Table~\ref{tab:dependence} collects the four regimes. Speed depends strongly on the regime. With evenly spread
logistic abilities at half the items the procedure certifies $70\%$ of the true dominances under
independence, with a mean rank-interval width of $1.49$ and $3.3$ tiers, against $96\%$, $0.22$
and $5.3$ under strong positive dependence, where certifying every model's top-$3$ status costs
$16\%$ of a full evaluation instead of $41\%$. The mixed-sign regime is less favorable by design,
since pairs within a group benefit from positive dependence while pairs across groups are
negatively dependent and harder to resolve: it certifies more true dominances than independence at the
halfway point but its top-$3$ cost is $46\%$ and the goal is reached in only $26\%$ of runs. With
near ties, independence and the two positive regimes certify $74\%$, $79\%$ and $80\%$ at half.
Here $80\%$ is a practical ceiling, because the three near-tied pairs, with gaps of about
$0.004$, are almost never resolved within $4000$ items. The mixed-sign regime is slightly slower than
independence, at $72\%$. The regimes differ in the variance of the paired difference,
$\operatorname{Var}(Z^{jl}_t)=\operatorname{Var}(X_{tj})+\operatorname{Var}(X_{tl})-2\operatorname{Cov}(X_{tj},X_{tl})$:
positive covariance makes the paired signal less noisy and negative covariance more noisy, so a
pair with margin $0.075$ has paired-difference variance about $0.45$, $0.30$ and $0.14$ under
independence, $\gamma=0.5$ and $\gamma=0.9$, and its median first-certification time falls from
$28\%$ to $18\%$ to $9\%$ of the items.

\begin{table}[t]
\caption{Four dependence regimes with identical marginal accuracies. ``Anytime error'' uses six
tied models; the other columns use evenly spread logistic abilities after half the items, with the
probability of reaching the top-$3$ goal in parentheses.}
\label{tab:dependence}
\centering\small
\begin{tabular}{@{}lccccc@{}}
\toprule
Dependence & Anytime error & True dominances & Mean rank width & Tiers & Top-$3$ cost (goal)\\
\midrule
Independent              & 2.0\% & 70\% & 1.49 & 3.3 & 41\% (49\%)\\
Positive, $\gamma=0.5$   & 1.5\% & 76\% & 1.18 & 3.7 & 32\% (79\%)\\
Positive, $\gamma=0.9$   & 1.6\% & 96\% & 0.22 & 5.3 & 16\% (100\%)\\
Mixed sign, $\gamma=0.8$ & 1.6\% & 80\% & 1.01 & 3.9 & 46\% (26\%)\\
\bottomrule
\end{tabular}
\end{table}

\subsection{What does the multiplicity correction contribute?}
\label{app:ecorr}

\paragraph{Setup.}
Every certifier receives the same wealth paths, so only the multiplicity correction
varies. Four certifiers are compared: e-Bonferroni with closure,
which is the certification logic of \citet{gu2026serpant}; the shortcut of
Algorithm~\ref{alg:shortcut}; the exact test of Algorithm~\ref{alg:exact}; and a variant of the
exact test whose weights in \eqref{eq:EW} are uniform over the pairs of $\T(\wo)$ in the same or
in adjacent levels of $\wo$, not over all of them. The simulated comparison uses six
models over $4000$ items monitored every $20$, with $2000$ replications; every error rate was at
most $1.9\%$. The real comparison reuses the wealths of Appendix~\ref{app:ereal} unchanged.

\paragraph{The exact test certifies more pairs than e-Bonferroni, mostly early in evaluation,
and never fewer.}
On simulated data
the exact test certifies about $7\%$ more true dominances than e-Bonferroni after $500$ items and
at most $3\%$ more by $4000$, because a threshold enters the sample size an e-process needs only
through its logarithm: halving the multiplicity factor saves $\log 2$ divided by the growth rate,
a fixed number of items. The counts are $5.10$ against $4.77$ of $15$ true dominances after $500$
items with near ties and $5.81$ against $5.45$ with evenly spread logistic abilities, becoming $12.01$
against $12.00$ and $13.23$ against $12.90$, respectively, by $4000$. On the real wealths the exact test
certifies $2$--$5\%$ more than e-Bonferroni among the eight highest-scoring models up to three
quarters of a benchmark and nearly the same number at the end, ahead at $20\%$ of the looks and behind
at none. Table~\ref{tab:realE3} gives the four certifiers side by side. The shortcut and e-Bonferroni with closure are nearly identical in
these experiments. Here transitivity pooling guarantees that nothing is lost but brings no
measurable gain in power. Over $600$ further simulated runs the shortcut was ahead at $121$ of $120{,}000$
looks and behind at none. Adjacent-level weights are valid by Remark~\ref{rem:weights} but lose
most of the power early, certifying $0.3$ and $0.4$ pairs at $500$ items, because for a strict
order they average only the $\M-1$ adjacent pairs and give zero weight to the non-adjacent pairs,
whose large wealths are what eliminates orderings far from the truth.

\begin{table}[t]
\caption{True dominances certified from identical wealths, so that only the correction differs;
means over $50$ item orders and the six benchmarks, and for twenty models also over the top and
spread subsets. The last two columns count the looks at which the row beat or trailed the
reference.}
\label{tab:realE3}
\centering\footnotesize\setlength{\tabcolsep}{4pt}
\begin{tabular}{@{}lrrrrrrrr@{}}
\toprule
 & \multicolumn{6}{c}{fraction of the benchmark} & \multicolumn{2}{c}{looks vs.\ e-Bonferroni}\\
\cmidrule(lr){2-7}\cmidrule(lr){8-9}
 & $0.05$ & $0.10$ & $0.25$ & $0.50$ & $0.75$ & $1.00$ & ahead & behind\\
\midrule
\multicolumn{9}{@{}l}{\emph{Eight highest-scoring models, $27.5$ true dominances, $31{,}600$ looks}}\\
e-Bonferroni with closure (reference) & 2.15 & 3.11 & 4.86 & 7.51 & 12.62 & 26.04 & \multicolumn{2}{c}{reference}\\
Shortcut (Algorithm~\ref{alg:shortcut})  & 2.15 & 3.11 & 4.86 & 7.51 & 12.63 & 26.04 & 118 & 0\\
Exact (Algorithm~\ref{alg:exact})        & 2.19 & 3.21 & 4.99 & 7.90 & 13.02 & 26.11 & 6{,}320 & 0\\
Exact, adjacent-level weights  & 2.04 & 2.35 & 3.90 & 4.97 & 7.87 & 26.20 & 831 & 15{,}052\\
\midrule
\multicolumn{9}{@{}l}{\emph{Twenty models, top and spread subsets pooled, $187.8$ true dominances, $63{,}200$ looks}}\\
e-Bonferroni with closure (reference) & 27.0 & 37.9 & 64.5 & 89.5 & 110.7 & 177.6 & \multicolumn{2}{c}{reference}\\
Shortcut (Algorithm~\ref{alg:shortcut})  & 27.0 & 37.9 & 64.5 & 89.5 & 110.7 & 177.6 & 473 & 0\\
\bottomrule
\end{tabular}
\end{table}

The exact test and its adjacent-level variant enumerate every weak order, $545{,}835$ of them
for eight models but about $2.7\times10^{21}$ for twenty, which is why only the two closed-form
certifiers appear in the lower half of Table~\ref{tab:realE3}.

\paragraph{Beyond eight models, by integer programming.}
Computed by integer programming, the exact test certifies only slightly more than the shortcut. On the
twelve real settings at $\M=20$ the shortcut trails Algorithm~\ref{alg:ilp} by $1.4$ true
dominances at half a benchmark, with a range of $0.1$ to $2.9$, and by $0.3$ at the end.
Table~\ref{tab:ilp} reports every setting. The shortcut is never ahead, as Proposition~\ref{prop:shortcut} requires. The gap is
largest while certification is in progress, as on the spread subsets at one half, and nearly
vanishes once the shortcut has certified nearly everything, as at the end of every spread subset,
or while nothing is certified yet, as on ARC-Challenge and WinoGrande among the top twenty. Neither certifier was ever wrong, and Algorithm~\ref{alg:ilp} solved every
instance. Cost is modest: a run with looks every $1\%$ of a benchmark needs about
$294$ integer programs at $26$\,ms each, eight seconds in all, on one CPU core with HiGHS
through SciPy and the relative optimality gap set to zero. On simulated runs of $3000$ items with
looks every $20$, the times are about $0.2$, $15$ and $63$ seconds at $\M=10$, $20$ and $30$,
with $85$, $388$ and $935$ programs; there the exact test and the shortcut certify the same
$44.7$ of $45$, $189.3$ of $190$ and $431.3$ of $435$ true dominances.

\begin{table}[t]
\caption{The exact test by integer programming against the shortcut, at $\M=20$ with $50$ item
orders per row, drawn independently of those behind Table~\ref{tab:scale}, so the shortcut columns
differ slightly from it. True dominances certified, rank width at one half, and what the exact test cost
to run.}
\label{tab:ilp}
\centering\footnotesize\setlength{\tabcolsep}{2.5pt}
\begin{tabular}{@{}lrrrrrrrrrrr@{}}
\toprule
 & & \multicolumn{2}{c}{$0.25$} & \multicolumn{2}{c}{$0.50$} & \multicolumn{2}{c}{$1.00$} & \multicolumn{2}{c}{width at $0.50$} & \multicolumn{2}{c}{cost}\\
\cmidrule(lr){3-4}\cmidrule(lr){5-6}\cmidrule(lr){7-8}\cmidrule(lr){9-10}\cmidrule(lr){11-12}
Benchmark, subset & true & exact & shortcut & exact & shortcut & exact & shortcut & exact & shortcut & programs & s\\
\midrule
MMLU, top & 190 & 84.1 & 84.1 & 93.6 & 92.4 & 187.3 & 187.1 & 9.64 & 9.76 & 244 & 6\\
HellaSwag, top & 190 & 77.4 & 76.3 & 94.9 & 94.1 & 181.7 & 181.3 & 9.51 & 9.59 & 292 & 8\\
GSM8K, top & 190 & 14.4 & 13.4 & 45.1 & 43.1 & 183.2 & 182.8 & 14.49 & 14.69 & 331 & 9\\
WinoGrande, top & 180 & 0.1 & 0.1 & 3.3 & 3.1 & 153.9 & 153.2 & 18.67 & 18.69 & 238 & 6\\
ARC-Challenge, top & 179 & 0.0 & 0.0 & 0.3 & 0.2 & 148.2 & 147.1 & 18.97 & 18.98 & 213 & 6\\
TruthfulQA, top & 190 & 1.0 & 0.6 & 33.2 & 30.9 & 156.0 & 155.6 & 15.68 & 15.91 & 285 & 7\\
MMLU, spread & 190 & 165.5 & 164.7 & 173.3 & 172.9 & 189.6 & 189.5 & 1.67 & 1.71 & 217 & 6\\
HellaSwag, spread & 190 & 118.7 & 116.8 & 146.9 & 145.5 & 187.5 & 187.5 & 4.31 & 4.45 & 304 & 8\\
GSM8K, spread & 190 & 139.2 & 136.9 & 162.7 & 161.3 & 189.4 & 189.3 & 2.73 & 2.87 & 308 & 8\\
WinoGrande, spread & 186 & 42.9 & 40.6 & 92.6 & 89.8 & 184.0 & 183.8 & 9.74 & 10.02 & 360 & 10\\
ARC-Challenge, spread & 189 & 70.2 & 67.2 & 116.3 & 114.3 & 188.8 & 188.8 & 7.37 & 7.57 & 358 & 9\\
TruthfulQA, spread & 190 & 79.6 & 76.3 & 124.7 & 122.6 & 185.7 & 185.4 & 6.53 & 6.74 & 373 & 10\\
\bottomrule
\end{tabular}
\end{table}

\paragraph{The bet matters more than the correction.}
A bet with negative growth can keep a pair from ever being certified, whatever the correction. A fixed
bet of $0.25$ never certifies pairs whose gap is about $0.02$: with six models at accuracies
$0.50$ to $0.60$ it certifies $6.5$ of $15$ true dominances after $5000$ items and $7.0$ after
$20{,}000$, against $14.0$ for the mixture bet, $13.5$ for the plug-in bet and $13.0$ for online
Newton steps. This is the negative-growth effect of Remark~\ref{rem:bets}, which no multiplicity
correction can undo. The price of adapting is paid early and only where the fixed bet happens to
be well tuned: with accuracies evenly spaced from $0.50$ to $0.70$, where $0.25$ is close to the
log-optimal bet $\lambda^\star=\Delta/\rho$, the fixed bet leads after $500$ items, $6.9$ pairs
against $5.8$, but the mixture overtakes it by $2000$ and leads clearly by $4000$, $13.2$ against
$11.1$. The mixture was the best adaptive bet at every checkpoint of every setting, which is why
it is the default. These runs use $1000$ replications, or $300$ for the close race at $20{,}000$
items, and every error rate was at most $3.2\%$.

\section{Related work in detail}
\label{app:related}

Table~\ref{tab:sota} places the closest methods side by side; the rest of this appendix says in
words what the columns compress, and then relates the construction to the statistical tools on
which it builds.

\begin{table}[t]
\caption{Closest existing methods. Monitoring: valid under continuous looks and any stopping
rule. Finite-sample: no asymptotics. Structure: the correction uses the relations among pairs.}
\label{tab:sota}
\centering\footnotesize\setlength{\tabcolsep}{3pt}
\def\tstack#1{\begin{tabular}[t]{@{}c@{}}#1\end{tabular}}%
\def\lstack#1{\begin{tabular}[t]{@{}l@{}}#1\end{tabular}}%
\begin{tabular}{@{}lccccc@{}}
\toprule
 & \shortstack{Rank sets,\\simultaneous} & Monitoring & Finite-sample & Structure & \shortstack{Early stop\\of evaluation}\\
\midrule
\multicolumn{6}{@{}l}{\emph{Fixed-sample rank inference}}\\
\citet{mogstad2024}                             & yes & no  & no  & no  & no\\
\citet{almohamad2022}                           & yes & no  & yes & no  & no\\
\citet{almohamad2021}                           & yes & no  & yes & yes & no\\
\citet{neuhof2024confident}                     & yes & no  & yes$^{a}$ & no & no\\
\citet{neuhof2026mmlu}                          & yes & no  & yes$^{a}$ & no & no\\
\citet{neuhof2026rank}                          & no  & no  & yes$^{a}$ & no & no\\
\lstack{\citet{li2026lowrank};\\\citet{avelar2026prompt}} & yes & no  & no  & no  & no\\
\citet{chandra2025finite}                       & yes & no  & \tstack{yes (simulable\\noise)} & n/a & no\\
\midrule
\multicolumn{6}{@{}l}{\emph{Sequential evaluation}}\\
\lstack{\citet{zhou2026celeus};\\\citet{hsushekhar2026}}  & no (one model) & yes & yes & n/a & yes\\
\citet{kotawala2026resolution}                  & no (pairs)     & yes & yes & \tstack{no\\correction$^{b}$} & yes (pairwise)\\
\citet{gu2026serpant}, battles                  & yes            & yes & yes & \tstack{no\\(Bonferroni)} & yes (pairwise)\\
\citet{arnold2026smcs}                          & \tstack{no (superior\\set)} & yes & yes & \tstack{no (generic\\closure)} & no\\
\citet{arviv2026stop}                           & no (pairs)     & pre-set looks & no & \tstack{no\\correction} & yes\\
\citet{heckel2019}                              & \tstack{no (partition\\at stopping)} & \tstack{own rule\\only} & yes & \tstack{no (union\\bound)} & yes\\
\midrule
This paper                                      & yes & yes & yes & yes & yes\\
\bottomrule
\end{tabular}

\smallskip
\raggedright\footnotesize
$^{a}$Given finite-sample pairwise tests. $^{b}$Bonferroni and Holm only in its separate
fixed-sample analysis.
\end{table}

Fixed-sample rank sets are valid for one look at one pre-chosen sample size, and the constructions
of \citet{mogstad2024}, \citet{almohamad2022}, \citet{neuhof2024confident} and their leaderboard descendants
\citep{neuhof2026mmlu,neuhof2026rank,li2026lowrank,avelar2026prompt} set a critical value for
a fixed sample size, mostly by asymptotic approximation. They then apply the same counting formula as our
\eqref{eq:LU}. The first difference is the certified set $\D$ to which the formula is applied. Ours
is valid at all times and uses the order structure inside the multiplicity correction. The second
is that enumeration adds the exact rank sets of Theorem~\ref{thm:ranks}(a), which can be strictly
tighter than counting (Theorem~\ref{thm:ranks}(c)). \citet{almohamad2021} instead test one
configuration of the ranks at a time with the partitioning principle, which uses the order
structure and gives rank sets without counting, at one fixed sample size. Our exact test is the
anytime-valid counterpart of their procedure.

Group-sequential stopping needs its looks fixed in advance and a normal approximation, and as
applied so far it has no correction across pairs. Ours needs neither and corrects across all
pairs. \citet{arviv2026stop} report large savings from Pocock boundaries \citep{pocock1977}, one constant critical
value at each pre-scheduled look, on pairwise comparisons. They leave multiplicity across pairs
open, and Appendix~\ref{app:ecost} measures the cost of this. Active-ranking algorithms
\citep{heckel2019} choose which comparisons to make adaptively, which we do not. They guarantee
their answer only at their own stopping time, when they output a grouping of the
models. Ours guarantees every intermediate report under any
stopping rule.

Among the anytime-valid methods, the two closest differ from ours in what they are able to
certify and in how they treat the order structure. \citet{gu2026serpant} control the family-wise
error rate by Bonferroni over the ordered pairs. They use transitivity only to propagate certified
preferences and need transitive preferences only for that step. They assume i.i.d.\ prompts and
give no bound for a fixed benchmark sampled without replacement. In ours transitivity instead follows from scalar
abilities. It restricts the sets of true hypotheses that the multiplicity procedure has to test
at all, which is why the exact test never certifies less than the full e-value closure of
\citet{hartoglei2025}. On the same wealths, the shortcut never certifies less than the
Bonferroni correction of \citet{gu2026serpant}, because its statistic is the pair's own wealth
plus nonnegative terms. \citet{arnold2026smcs}
certify which models may be best, a statement about rank one only, through a closure over all
subsets of models that uses no relation among the pairwise hypotheses. Ours gives simultaneous
rank inference for common-item benchmark scores, with validity on a fixed benchmark that rests
only on the random item order.

The construction builds on several established statistical literatures. Multiple comparisons with the best
\citep{hsu1996} is the classical form of ``which models are tied for first'', and our certified
tier $1$ is its anytime-valid analog. Step-down resampling \citep{romano2005} is the fixed-sample
engine behind \citet{mogstad2024}. Betting martingales \citep{waudbysmith2024}, time-uniform
confidence sequences \citep{howard2021} and Ville's inequality \citep{ville1939} supply the
sequential machinery. \citet{fischerramdas2024} show that online closed testing procedures that
cannot be uniformly improved must use e-values. This is a natural route to models arriving
over time.
On the evaluation side,
\citet{miller2024errorbars} argues for error bars on benchmark scores, computed from item-level
results, \citet{bowyer2025clt} argue
against the normal approximation at small sample sizes, and \citet{chiang2024arena} describe the battle
format we leave out. \citet{singh2025illusion} document how data-dependent practices, such as
testing many private variants and disclosing only the best, shape public leaderboards.
Time-uniform guarantees cover a data-dependent choice of when to stop or report one evaluation,
but not the selection of which model variants to disclose.

\end{document}